\documentclass[aps,pra,10pt,twocolumn]{revtex4-2}

\usepackage{mathrsfs}
\usepackage{graphicx}
\usepackage{comment}
\usepackage{dsfont}
\usepackage{amsmath}
\usepackage{amsfonts}
\usepackage{amssymb}
\usepackage{braket}
\usepackage{amsthm}
\usepackage[colorlinks=true,urlcolor=blue,citecolor=blue,linkcolor=blue]{hyperref}
\usepackage{caption}
\captionsetup{justification=raggedright}
\usepackage{lipsum}
\usepackage{subcaption}
\usepackage{mwe}
\usepackage{enumerate}
\usepackage{sidecap}
\usepackage{natbib}
\usepackage{appendix}

\makeatletter
\def\maketitle{
\@author@finish
\title@column\titleblock@produce
\suppressfloats[t]}
\makeatother

\usepackage{graphicx}
\usepackage{dcolumn}
\usepackage{amsmath}
\usepackage{bm}
\usepackage[colorlinks=true,urlcolor=blue,citecolor=blue,linkcolor=blue]{hyperref}
\usepackage{array}

\newcommand{\fo}{\rightarrowtail}
\newcommand{\setFO}{\mathcal{FO}}
\newcommand{\setCO}{\mathcal{CO}}
\newcommand{\setCCO}{\mathcal{CCO}}
\newcommand{\mono}{\mathfrak{R}}

\newcommand\T{\rule{0pt}{2.4ex}}       
\newcommand\B{\rule[-1.ex]{0pt}{0pt}}

\newcommand{\cS}{\mathcal{S}}
\newcommand{\cH}{\mathcal{H}}
\newcommand{\states}{\cS(\cH)}
\newcommand{\substates}{\cS_\bullet(\cH)}

\newtheorem{theorem}{Theorem}
\newtheorem{lemma}[theorem]{Lemma}
\newtheorem{proposition}[theorem]{Proposition}

\newtheorem{definition}[theorem]{Definition}

\newtheorem{corollary}[theorem]{Corollary}

\usepackage{multirow}

\begin{document}

\title{Fundamental Limits on Correlated Catalytic State Transformations}

\author{Roberto Rubboli}
\email{roberto.rubboli@u.nus.edu}
\affiliation{Centre for Quantum Technologies, National University of Singapore, Singapore}

\author{Marco Tomamichel}
\affiliation{Department of Electrical and Computer Engineering, National University of Singapore, Singapore 117583, Singapore}
\affiliation{Centre for Quantum Technologies, National University of Singapore, Singapore}

\begin{abstract}
Determining whether a given state can be transformed into a target state using free operations is one of the fundamental questions in the study of resources theories. Free operations in resource theories can be enhanced by allowing for a catalyst system that assists the transformation and is returned unchanged, but potentially correlated, with the target state. While this has been an active area of recent research, very little is known about the necessary properties of such catalysts. Here, we prove fundamental limits applicable to a large class of correlated catalytic transformations by showing that a small residual correlation between catalyst and target state implies that the catalyst needs to be highly resourceful. In fact, the resources required diverge in the limit of vanishing residual correlation. In addition, we establish that in imperfect catalysis a small error generally implies a highly resourceful embezzling catalyst. We develop our results in a general resource theory framework and discuss its implications for the resource theory of athermality, the resource theory of coherence and entanglement theory.
\end{abstract}

\maketitle


\paragraph*{Introduction.} A quantum resource theory is defined by a set of \emph{free operations}~\footnote{For our purposes, quantum operations or quantum channels are completely positive and trace-preserving maps from linear operators on an input to linear operators on an output Hilbert space.} and a set of \emph{free states} with the property that free operations are closed under composition and map free states into free states~\cite{Gour,Brandao2}. Resource theories offer a general and versatile framework to quantify the usefulness of different quantum states and their interconvertibility using free operations.
Prominent examples of resource theories include entanglement theory~\cite{horodecki2009quantum,Plenio,Vlatko} (where local operations and classical communication are free and entanglement is considered a resource), athermality in thermodynamics~\cite{Brandao3,faist2015gibbs,Oppenheim} (where transformations that preserve the thermal state are free and states out of thermal equilibrium are resourceful), and coherence~\cite{Winter,aberg2006quantifying,baumgratz2014quantifying} (where incoherent states are free and coherence is a resource).

In the following, we will use the notation $\rho \fo \rho'$ to indicate that a free transformation exists which maps a quantum state $\rho$ to a quantum state $\rho'$. Given a fixed state $\rho$, a fundamental question in any resource theory is to find the set of states $\setFO(\rho)$ of all $\rho'$ such that $\rho \fo \rho'$, i.e., all states that can be reached from $\rho$ using free operations. More precisely, we are often interested in its closure, $\overline{\setFO(\rho)}$, which also contains quantum states that can be arbitrarily well approximated by free operations from $\rho$~\footnote{We consider the closure within the same fixed output space. Note that other definitions might lead to different necessary and sufficient conditions~\cite{aubrun2008catalytic,Klimesh}.}.
The set of free operations can be enlarged by allowing for \emph{catalytic transformations}, $\rho \otimes \nu \fo \rho' \otimes \nu$, where the catalyst $\nu$ is returned unchanged. The set $\setCO(\rho)$ then contains all states $\rho'$ for which such a catalytic transformation from $\rho$ exists. Its closure is denoted $\overline{\setCO(\rho)}$.
More recently, a further relaxation has been studied where correlations between the catalyst and the target state after the transformation are allowed and can be used as a resource in catalytic transformations~\cite{lostaglio2015stochastic,muller2016generalization,sapienza2019correlations,muller2018correlating,Wilde,Wilming,Sagawa,Kondra,Takagi}.  
We say that a state $\rho'$ can be reached by a \emph{correlated catalytic transformation}, or $\rho' \in \setCCO(\rho)$, if there exists a catalyst $\nu$ such that $\rho \otimes \nu \fo \tau$ where $\tau$ is any state that has marginals $\rho'$ (for the target system) and $\nu$ (for the catalyst system). We will similarly be concerned with its closure, $\overline{\setCCO(\rho)}$. 

To the best of our knowledge, the idea of residual correlations between the system and the catalyst in the output state while the catalyst returns exactly to its original form was first introduced in~\cite{gallego2016thermodynamic}. In~\cite{wilming2017axiomatic} the authors first discussed whether the free energy completely characterizes correlated catalytic transformations in resource theory of athermality. This question was answered positively in the classical case and conjectured for the quantum case in~\cite{muller2018correlating}.  The conjecture for the quantum case has been recently resolved in the affirmative in~\cite{Sagawa} using the previously known construction that allows to reduce the problem to asymptotic interconvertibiliy~\cite{duan2005multiple}. This was recently generalized for any resource theory in~\cite{Takagi}.

The sets $\setFO, \setCO$ and $\setCCO$ are generally difficult to characterise, but they take on a natural form for certain resource theories where they are fully characterised by \emph{resource monotones}. Let $\mono{}$ be a function from quantum states to positive reals that measures the resourcefullness of states. We say that such a map is a) a resource monotone if it is non-increasing under free operations, b) \emph{tensor-additive} if it is additive under tensor-products, and c) \emph{super-additive} if $\mono(\rho_{AB}) \geq \mono(\rho_A) + \mono(\rho_B)$ for any joint state $\rho_{AB}$ with marginals $\rho_A$ and $\rho_B$. Resource monotones play an important role in characterizing the above sets.
It is easy to see that a necessary (but not generally sufficient) conditions for $\rho'$ to be in the set $\setFO(\rho)$ is that $\mono(\rho) \geq \mono(\rho')$ for any resource monotone. For $\rho'$ to be in $\setCO(\rho)$ this ordering only needs to be required for tensor-additive resource monotones, and finally for $\rho'$ to be in $\setCCO(\rho)$ the ordering only needs to be satisfied for tensor-additive and super-additive resource monotones. Finally, for $\rho'$ to be in the closure of the sets we require in addition that the resource monotone is lower semicontinuous (see Supplemental Material~\cite[Section IV, Lemma 6]{Supplemental_Material} for a proof). In general, it is not known which resource monotones characterise these sets, i.e. what are the necessary and sufficient conditions for $\rho' $ to be in any of the sets. The particular appeal of $\overline{\setCCO(\rho)}$ is that for some prominent resource theories it is fully characterised by a single resource monotone, e.g., the non-equilibrium free energy~\cite{Sagawa} or the relative entropy of entanglement~\cite{Kondra}. Moreover, the set $\overline{\setCCO(\rho)}$ is also of operational interest since it contains states that are strictly more useful than $ \overline{\setCO(\rho)}$ for some information-theoretic tasks, for example quantum teleportation~\cite{Lipka}.
FIG.~\ref{fig: CC 1} gives an example of these sets and their full characterisation for the resource theory of athermality restricted to states that commute with the Hamiltonian.

\tabcolsep=4.2pt
\begin{figure}[t]
\hskip-25pt \includegraphics[width=.54\textwidth]{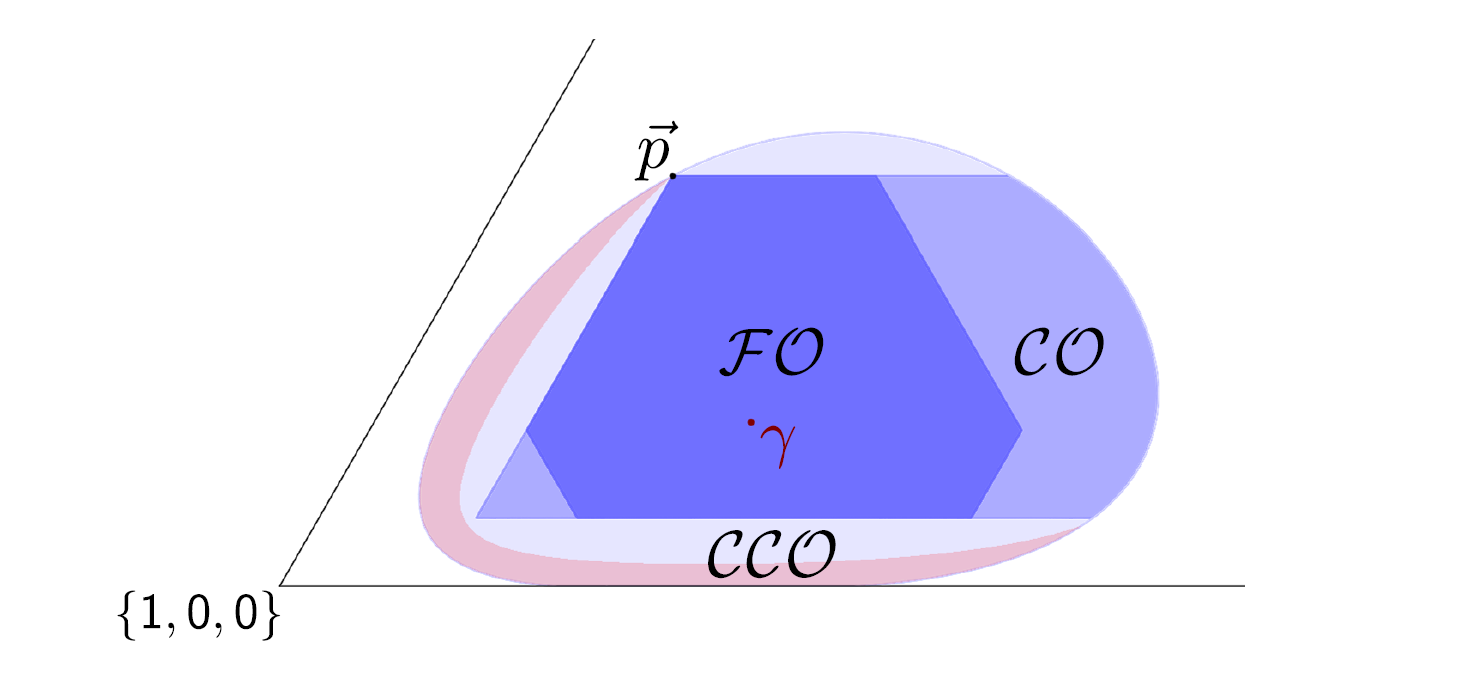}
\begin{tabular}{ |c| l |}
\hline
\multicolumn{2}{|c|}{Classical resource theory of athermality} \T\B \\
\hline
\hline
$ \mathcal{FO} $  & \multirow{2}{*}{$(p,\gamma) \succ (p',\gamma) $~\cite{Oppenheim}} \T \\
$ \overline{\mathcal{FO}} $ &  \B \\
\hline
\multirow{2}{*}{$\mathcal{CO}$}  & $ D_{\alpha}(p \| \gamma) > D_{\alpha}(p' \| \gamma) \,\, \text{and} \,\, D_{\alpha}(\gamma \| p) > D_{\alpha}(\gamma \| p')$ \T  \\ &  $\text{for all}\,\, \alpha \geq 1/2 \text{ and $p'$ has full support}$~\cite{Klimesh} \B   \\
\hline
\multirow{2}{*}{$\overline{\mathcal{CO}}$} & $ D_{\alpha}(p \| \gamma) \geq D_{\alpha}(p' \| \gamma) \,\, \text{and} \,\, D_{\alpha}(\gamma \| p) \geq D_{\alpha}(\gamma \| p')$  \T \\ &  $\text{for all}\,\, \alpha \geq 1/2$~\cite{Klimesh,Brandao} \B  \\
\hline
$\mathcal{CCO}$ & $D(p\|\gamma) > D(p'\| \gamma) \text{ and $p'$ has full support}$~\cite{muller2018correlating,Wilde} \T\B  \\ 
\hline
 $ \overline{\mathcal{CCO}}$ & $D(p \|\gamma) \geq D(p' \| \gamma)$~\cite{muller2018correlating} \T\B  \\  
\hline
\end{tabular}
\caption{Example of the sets $\setFO$, $\setCO$ and $\setCCO$ for classical resource theory of athermality with rational Gibbs states where we fixed the input state $\vec{p}=\{2/3, 1/12, 3/12\}$ and $\gamma = \{7/10, 2/10, 1/10\}$. We show one corner of the probability simplex (which is a triangle in this case). Each point in the triangle corresponds to a (classical) state of a three-dimensional system.  The points in the red region satisfy the conditions of Theorem~\ref{fidelity theorem}. The table contains the conditions characterizing each set, where $D_\alpha$ is the R\'enyi divergence of order $\alpha$ and $D$ is the Kullbach-Leibler divergence.}
\label{fig: CC 1}
\end{figure}

While allowing arbitrary correlations between the catalyst and target state arguably goes against the spirit of catalysis, recent works~\cite{Sagawa,Kondra,Takagi,muller2018correlating} showed that for some prominent reversible resource theories target states in $\rho' \in \overline{\setCCO(\rho)}$ can be achieved with arbitrarily small correlations with the catalyst.

In this work we investigate the fundamental limits of such correlated catalytic transformations. Our results apply to any catalytic transformation between a given pair of `hard-to-transform' states and are applicable to any resource theory in which certain monotones are tensor additive. We focus on the problem of preparing suitable catalysts and we find that for some target states that lie in the set $\overline{\setCCO} \setminus \overline{\setCO}$, correlated catalytic transformations with small correlations require catalysts that are highly resourceful, and in fact, require unbounded resources in the limit of vanishing correlations. (See FIG.~\ref{fig: CC 1} for a depiction of such states.)
In particular, we show a quantitative trade-off between the error $\varepsilon$ achievable in the transformation and the resources needed for the catalyst.  




\paragraph*{Formal setting.} We denote by $\states$ the set of quantum states on a $d$-dimensional Hilbert space $\mathcal{H}$. 
We introduce the purified distance~\cite{tomamichel2010duality}, which for normalised states is defined as $P(\rho, \sigma) := \sqrt{1-F(\rho, \sigma)}$, where $F(\rho, \sigma) :=  ( \text{Tr}|\sqrt{\rho}\sqrt{\sigma}| )^2$ is the Uhlmann fidelity. The Umegaki relative entropy is defined as $D(\rho\||\sigma := \text{Tr}[\rho (\log{\rho} - \log{\sigma})]$. Since both the fidelity and the relative entropy satisfy a data-processing inequality under quantum channels, we can define resource monotones
\begin{align*}
	\mathfrak{D}(\rho) = \min_{\sigma \in \mathcal{F}} D(\rho\|\sigma) \quad \textrm{and} \quad \frak{D}_{1/2}(\rho) := -\log \frak{F}(\rho) 
\end{align*}
with $ \frak{F}(\rho) := \max_{\sigma \in \mathcal{F}} F(\rho\|\sigma)$. These are the limiting cases at $\alpha = 1$ and $\alpha = 1/2$, respectively, of a larger family of resource monotones,
$\mathfrak{D}_\alpha(\rho) := \min_{\sigma \in \mathcal{F}} \widetilde{D}_\alpha(\rho \| \sigma)$ where $\widetilde{D}_{\alpha}(\rho \| \sigma)$ is the sandwiched R\'enyi divergence~\cite{Muller, Wilde3,Tomamichel} and is defined for $\alpha \in [\frac{1}{2},1) \cup (1,\infty)$ as~\cite{Muller, Wilde3,Tomamichel}
\begin{align*}
\widetilde{D}_{\alpha}(\rho \| \sigma) := \frac{1}{\alpha-1}\log{\text{Tr}(\sigma^{\frac{1-\alpha}{2\alpha}}\rho \sigma^{\frac{1-\alpha}{2\alpha}})^{\alpha}}\,.
\end{align*}
We say that $\mathfrak{D}_\alpha$ is additive for the state $\rho$ if $\mathfrak{D}_\alpha(\rho \otimes \nu) = \mathfrak{D}_\alpha(\rho) + \mathfrak{D}_\alpha(\nu)$ for any catalyst state $\nu$.

We are now ready to define correlated catalytic transformations~\cite{Sagawa,Kondra,Takagi} as follows:

\begin{definition}
\label{Correlated catalytic transformations}
Let $\rho, \rho' \in \states$ be a pair of quantum states and $\varepsilon > 0$ a small positive constant. We say that $\rho$ can be transformed into $\rho'$ by an $\varepsilon$-correlated catalytic transformation if there exists a free operation $\mathcal{N}$ and a catalyst state $\nu \in \mathcal{S}(\mathcal{H'}) $ such that $\mathcal{N}(\rho \otimes \nu) = \tau$,   $\textup{Tr}_{\mathcal{H}}[\tau] =\nu$ and $P(\rho' \otimes \nu, \tau) \leq \varepsilon$.
If this holds for any $\varepsilon > 0$ we say that $\rho$ is transformable into $\rho'$ by a correlated catalytic transformation. 
\end{definition}

For the specific resource theories we consider, the quantity $\mathfrak{D}$ completely characterizes the set $\overline{\setCCO}$, namely the necessary and sufficient condition for $\rho'$ to be in $\overline{\setCCO(\rho)}$ is that $\mathfrak{D}(\rho) \geq \mathfrak{D}(\rho')$ (see the discussion of the individual resource theories below). Motivated by this, we identify $\mathfrak{D}$ as the relevant resource measure to evaluate the resourcefulness of the catalyst. We remark that the dimension of the Hilbert space of the catalyst, without adding any further constraints, does not quantify the resourcefulness of the catalyst. For example, in the resource theory of athermality, states with large free energy can be constructed easily in low dimension using a sufficiently gapped Hamiltonian.
\paragraph*{Main result and discussion.} 
We are now ready to state our main theorem.
\begin{theorem}
\label{fidelity theorem}
Assume that $\rho, \rho' \in \mathcal{S}(\mathcal{H})$ and $\alpha \in [1/2,1)$ such that $\mathfrak{D}_\alpha$ is additive for the state $\rho'$ and $\mathfrak{D}_\alpha(\rho) < \mathfrak{D}_\alpha(\rho')$. Then, for any $\varepsilon$-correlated catalytic transformation with catalyst $\nu$ mapping $\rho$ into $\rho'$, we have
\begin{align}
\mathfrak{D}(\nu) = \Omega \left(\log{\frac{1}{\varepsilon}} \right) \,. \nonumber
\end{align}
In particular, when $\alpha = 1/2$ and, thus, $\mathfrak{F}(\rho) > \mathfrak{F}(\rho')$, we have
the quantitative bound
\begin{align}
	\sqrt{\mathfrak{F}(\nu)} \leq \frac{\varepsilon}{\sqrt{\mathfrak{F}(\rho)}-\sqrt{\mathfrak{F}(\rho')}} \,. \nonumber
\end{align}
\end{theorem}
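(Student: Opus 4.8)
The plan is to combine three ingredients — monotonicity of the resource measures under the free operation $\mathcal{N}$, their (super)multiplicative behaviour on tensor products, and a continuity estimate converting the purified-distance bound $P(\rho'\otimes\nu,\tau)\le\varepsilon$ into a bound on the measures — and to phrase everything in terms of the fidelity-like quantity underlying $\mathfrak{D}_\alpha$. Writing $Q_\alpha(\rho\|\sigma):=\mathrm{Tr}(\sigma^{\frac{1-\alpha}{2\alpha}}\rho\,\sigma^{\frac{1-\alpha}{2\alpha}})^{\alpha}$ and $\mathfrak{Q}_\alpha(\rho):=\max_{\sigma\in\mathcal{F}}Q_\alpha(\rho\|\sigma)$, one has $\mathfrak{D}_\alpha=\tfrac{1}{\alpha-1}\log\mathfrak{Q}_\alpha$ for $\alpha\in[\tfrac12,1)$, so $\mathfrak{Q}_\alpha$ is nondecreasing under $\mathcal{N}$, and the hypotheses translate into $\mathfrak{Q}_\alpha(\rho)>\mathfrak{Q}_\alpha(\rho')$ together with $\mathfrak{Q}_\alpha(\rho'\otimes\nu)=\mathfrak{Q}_\alpha(\rho')\,\mathfrak{Q}_\alpha(\nu)$. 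At $\alpha=\tfrac12$ this quantity is exactly $\mathfrak{Q}_{1/2}=\sqrt{\mathfrak{F}}$.

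I would first dispatch the quantitative $\alpha=\tfrac12$ bound, where all three ingredients are transparent. Monotonicity gives $\mathfrak{F}(\tau)=\mathfrak{F}(\mathcal{N}(\rho\otimes\nu))\ge\mathfrak{F}(\rho\otimes\nu)$, and since the free set is closed under tensor products and the fidelity is multiplicative, $\mathfrak{F}(\rho\otimes\nu)\ge\mathfrak{F}(\rho)\mathfrak{F}(\nu)$; hence $\sqrt{\mathfrak{F}(\tau)}\ge\sqrt{\mathfrak{F}(\rho)}\sqrt{\mathfrak{F}(\nu)}$. The additivity hypothesis gives $\mathfrak{F}(\rho'\otimes\nu)=\mathfrak{F}(\rho')\mathfrak{F}(\nu)$. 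The key step is a one-sided continuity bound $\sqrt{\mathfrak{F}(\tau)}\le\sqrt{\mathfrak{F}(\rho'\otimes\nu)}+P(\tau,\rho'\otimes\nu)\le\sqrt{\mathfrak{F}(\rho'\otimes\nu)}+\varepsilon$, which I would obtain by writing $\sqrt{\mathfrak{F}(\mu)}=\cos\!\big(\min_{\sigma\in\mathcal{F}}A(\mu,\sigma)\big)$ with $A$ the Bures angle, using that the distance-to-$\mathcal{F}$ in the metric $A$ is $1$-Lipschitz, and then expanding $\cos(a-b)$ while bounding the two cross terms by $1$; here the purified distance is $P=\sin A$, which is what makes the slack exactly $\varepsilon$ rather than $\arcsin\varepsilon$. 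Chaining the three displays gives $\sqrt{\mathfrak{F}(\rho)}\sqrt{\mathfrak{F}(\nu)}\le\sqrt{\mathfrak{F}(\rho')}\sqrt{\mathfrak{F}(\nu)}+\varepsilon$, and since $\sqrt{\mathfrak{F}(\rho)}>\sqrt{\mathfrak{F}(\rho')}$ this rearranges to the claimed $\sqrt{\mathfrak{F}(\nu)}\le\varepsilon/(\sqrt{\mathfrak{F}(\rho)}-\sqrt{\mathfrak{F}(\rho')})$.

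For general $\alpha\in[\tfrac12,1)$ and the asymptotic claim $\mathfrak{D}(\nu)=\Omega(\log\tfrac1\varepsilon)$, I would reuse the same skeleton at the level of $\mathfrak{Q}_\alpha$: monotonicity and multiplicativity give $\mathfrak{Q}_\alpha(\tau)\ge\mathfrak{Q}_\alpha(\rho)\mathfrak{Q}_\alpha(\nu)$, additivity gives $\mathfrak{Q}_\alpha(\rho'\otimes\nu)=\mathfrak{Q}_\alpha(\rho')\mathfrak{Q}_\alpha(\nu)$, and a continuity estimate of the form $\mathfrak{Q}_\alpha(\tau)\le\mathfrak{Q}_\alpha(\rho'\otimes\nu)+h(\varepsilon)$ with $h(\varepsilon)\to0$ polynomially would yield $\mathfrak{Q}_\alpha(\nu)\le h(\varepsilon)/(\mathfrak{Q}_\alpha(\rho)-\mathfrak{Q}_\alpha(\rho'))$, hence $\mathfrak{D}_\alpha(\nu)=\tfrac{1}{1-\alpha}\log\tfrac1{\mathfrak{Q}_\alpha(\nu)}=\Omega(\log\tfrac1\varepsilon)$. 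Finally, since $\widetilde{D}_\alpha\le D$ for $\alpha\le1$ implies $\mathfrak{D}_\alpha(\nu)\le\mathfrak{D}(\nu)$, the lower bound transfers to $\mathfrak{D}(\nu)=\Omega(\log\tfrac1\varepsilon)$, as required.

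The main obstacle is precisely the continuity estimate for $\mathfrak{Q}_\alpha$ (equivalently $\mathfrak{D}_\alpha$) at general $\alpha$: unlike the $\alpha=\tfrac12$ case, where it collapses to the \emph{dimension-free} Bures-angle Lipschitz bound above, $\widetilde{D}_\alpha$ is not uniformly continuous in general, and a crude estimate through $|\mathrm{Tr}A^\alpha-\mathrm{Tr}B^\alpha|\le\mathrm{Tr}|A-B|^\alpha$ introduces an unwanted dependence on the (a priori unbounded) catalyst dimension. Securing a dimension-independent modulus of continuity — so that the implicit constant depends only on $\rho,\rho',\alpha$ and not on $\nu$ — is the delicate point; I would attempt to control it by evaluating at the free state optimising $\mathfrak{Q}_\alpha(\tau)$ and exploiting the pointwise relation $Q_\alpha(\cdot\|\sigma)\le F(\cdot,\sigma)^{1-\alpha}$ to route the argument back through the fidelity continuity already available at $\alpha=\tfrac12$, which is where I expect the real work to lie.
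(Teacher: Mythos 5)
Your $\alpha=1/2$ argument is correct, complete, and genuinely different from the paper's route to the same bound. The paper derives $\sqrt{\mathfrak{F}(\nu)} \leq \varepsilon/(\sqrt{\mathfrak{F}(\rho)}-\sqrt{\mathfrak{F}(\rho')})$ in a separate appendix by running a chain through \emph{smoothed} sandwiched R\'enyi divergences $\mathfrak{D}_\alpha^\varepsilon$ (whose data-processing inequality it must first establish, with some care about sub-normalised states), then invoking a refined triangle inequality $P(\rho,\tau)\le P(\rho,\sigma)\sqrt{F(\sigma,\tau)}+P(\sigma,\tau)\sqrt{F(\rho,\sigma)}$ and solving a quadratic. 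Your Bures-angle Lipschitz estimate $\sqrt{\mathfrak{F}(\tau)}\le\sqrt{\mathfrak{F}(\rho'\otimes\nu)}+P(\tau,\rho'\otimes\nu)$, combined with monotonicity, supermultiplicativity and the additivity hypothesis, reaches the identical bound more directly. Your general-$\alpha$ skeleton is also architecturally sound and simpler than the paper's: instead of smoothing at $\rho\otimes\nu$, applying smoothed data processing, and un-smoothing at $\tau$, you apply plain data processing of $\widetilde{D}_\alpha$ to the channel output $\tau$ and then a continuity estimate between $\tau$ and $\rho'\otimes\nu$; this reordering eliminates the need for the smoothed divergence entirely, and once the continuity estimate is in hand it reproduces exactly the paper's bound $\mathcal{Q}_\alpha(\nu)\le\varepsilon^\alpha/(\mathcal{Q}_\alpha(\rho)-\mathcal{Q}_\alpha(\rho'))$ and hence $\mathfrak{D}(\nu)\ge\mathfrak{D}_\alpha(\nu)=\Omega(\log(1/\varepsilon))$.

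The genuine gap is exactly where you locate it, but your proposed repair would not work. For $\alpha\in(1/2,1)$ everything hinges on a catalyst-independent continuity bound $\mathfrak{Q}_\alpha(\tau)\le\mathfrak{Q}_\alpha(\rho'\otimes\nu)+h(\varepsilon)$, and the route you sketch --- using $Q_\alpha(\cdot\|\sigma)\le F(\cdot,\sigma)^{1-\alpha}$ to fall back on fidelity continuity --- destroys the very gap you need. That pointwise relation (which is just $\widetilde{D}_\alpha\ge\widetilde{D}_{1/2}$) bounds $Q_\alpha$ from above only; after applying it together with fidelity continuity you end up comparing $\mathfrak{Q}_\alpha(\rho)\,\mathfrak{Q}_\alpha(\nu)$ against $\bigl(\mathfrak{F}(\rho')\mathfrak{F}(\nu)\bigr)^{1-\alpha}$ rather than against $\mathfrak{Q}_\alpha(\rho')\,\mathfrak{Q}_\alpha(\nu)$. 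Since $\mathfrak{F}(\nu)^{1-\alpha}\ge\mathfrak{Q}_\alpha(\nu)$ with an uncontrolled (catalyst-dependent) ratio, and since the hypothesis provides a gap only at order $\alpha$, not at order $1/2$, the resulting inequality can be vacuous. The estimate really does require a direct matrix-analytic argument, which is how the paper proves it (its Proposition~3): from $\Delta(\tau,\rho'\otimes\nu)\le P(\tau,\rho'\otimes\nu)\le\varepsilon$ one writes $\tau\le\rho'\otimes\nu+\varepsilon P$ with $P$ a normalised state, then uses monotonicity of the trace functional $X\mapsto\mathrm{Tr}\,f(X)$, the subadditivity $\mathrm{Tr}\,(A+B)^\alpha\le\mathrm{Tr}\,A^\alpha+\mathrm{Tr}\,B^\alpha$ for positive semidefinite $A,B$ and $\alpha\in(0,1)$, and $\widetilde{Q}_\alpha(P\|\sigma)\le 1$, to obtain $\widetilde{Q}_\alpha(\tau\|\sigma)\le\widetilde{Q}_\alpha(\rho'\otimes\nu\|\sigma)+\varepsilon^\alpha$ for every $\sigma$; evaluating at the free state optimal for $\tau$ gives $\mathfrak{Q}_\alpha(\tau)\le\mathfrak{Q}_\alpha(\rho'\otimes\nu)+\varepsilon^\alpha$, i.e.\ $h(\varepsilon)=\varepsilon^\alpha$, and all constants are dimension-free. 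With that lemma supplied, your proof closes; without it, the theorem is only established for the case where the hypotheses hold at $\alpha=1/2$.
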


We formulated the above theorem for any resource theories but it is only meaningful when there exist a pair of states and $\alpha$ satisfying the assumptions. Quantitative bounds for $\alpha \neq 1/2$ can be found in the Supplemental Material~\cite[Section V]{Supplemental_Material}.
The quantities $\mathfrak{D}_\alpha$ can be interpreted as a measure of distance between a state and the free set. In the following, we refer to the quantity $\sqrt{\mathfrak{F}(\rho)}-\sqrt{\mathfrak{F}(\rho')}$ as fidelity gap.

The condition $\mathfrak{D}_\alpha(\rho) < \mathfrak{D}_\alpha(\rho') $ for some $\alpha \in [1/2,1)$, together with the additivity assumption, implies that the output state $\rho'$ lies outside the set $\overline{\setCO(\rho)}$ (see the Supplemental Material~\cite[Section IV, Lemma 6]{Supplemental_Material} for a detailed discussion). Hence, catalytic transformation from $\rho$ to $\rho'$ is possible only by allowing correlations (see FIG.~\ref{fig: CC 1}). For this reason, we say that the pair of states $(\rho, \rho')$ is `hard-to-transform'  and we will establish the existence of such state pairs for the resource theories we consider.

For correlated catalysis, i.e., when there are non-zero residual correlations between the catalyst and the system in the output state, the theorem implies that, as the error decreases, the distance between the catalyst and the free set must increase. In particular, in the limit of zero error, the catalyst state must be orthogonal to the set of free states, i.e. its resourcefulness is unbounded. As we discuss in the Supplemental Material~\cite[Section V]{Supplemental_Material} we can also derive bounds for the robustness of the catalyst.

We point out that the above Theorem actually holds also if we lift the restriction $\text{Tr}_{\mathcal{H}}[\tau] = \nu$ and hence we do not need to exactly recover the catalyst after the transformation.  If we allow a small error in the catalyst after the transformation, any state transformation is possible. This phenomenon is called \textit{embezzling}~\cite{VanDam, Brandao, Nelly}.  Our result shows that to achieve small errors we need a highly resourceful embezzling catalyst. In particular, we recover the optimal lower bound for embezzlement already established for entanglement theory~\cite{VanDam,leung2014characteristics,cleve2017perfect} and we extend it, in principle, to any resource theory.

\paragraph*{Sketch of the proof of Theorem~\ref{fidelity theorem}.} 
We only give a sketch of the proof below but leave the formal derivation to the Supplemental Material~\cite[Section V and Appendix A]{Supplemental_Material}.
We will need the \textit{smoothed sandwiched quantum R\'enyi divergence}, which is defined for two states $\rho,\sigma \in \states$ and $\alpha \in [1/2,1)$ as
\begin{align}
 \widetilde{D}^\varepsilon_{\alpha}(\rho \| \sigma) :=
      \max \left\{ \widetilde{D}_{\alpha}(\tilde{\rho}\|\sigma) :  \tilde{\rho} \in \substates ,\, P(\tilde{\rho}, \rho) \leq \varepsilon \right\} \nonumber ,
\end{align}
where $\substates$ is the set of sub-normalised states.
An important ingredient in the proof of the above theorem is the data-processing inequality for this quantity. 
We believe this result to be of independent interest. In the Supplemental Material~\cite[Section II and Appendix B]{Supplemental_Material} we give a proof and we also argue why a similar result does not hold for some other generalisations of R\'enyi divergence. We note that the use of sub-normalised states in the definition of the smoothed sandwiched quantum R\'enyi divergence turns out to be crucial for $\alpha \in [\frac12, 1)$, which is in contrast to the case $\alpha > 1$.
\begin{theorem}
\label{normal data-processing}
Let $\rho,\sigma \in\states$ be two states and $\mathcal{E}$ a quantum channel. For any $\alpha \in [1/2,1)$
$$\widetilde{D}^{\varepsilon}_{\alpha}(\rho\|\sigma) \geq \widetilde{D}^{\varepsilon}_{\alpha}(\mathcal{E}(\rho)\|\mathcal{E}(\sigma))$$ 
\end{theorem}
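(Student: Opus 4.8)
The plan is to reduce a general channel to the two elementary operations of a Stinespring dilation---an isometry followed by a partial trace---and to handle each by combining the invariance, respectively the data-processing, of the \emph{unsmoothed} divergence with a pull-back of the optimal smoothing state via Uhlmann's theorem. Throughout I would take for granted the ordinary data-processing inequality $\widetilde{D}_\alpha(\rho\|\sigma)\ge \widetilde{D}_\alpha(\mathcal E(\rho)\|\mathcal E(\sigma))$, valid for all $\alpha\in[\tfrac12,1)$ and for subnormalised arguments, together with the contractivity of the purified distance $P$ under trace-non-increasing completely positive maps.

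First I would write $\mathcal E(\cdot)=\tr_E[V(\cdot)V^\dagger]$ for an isometry $V$ and treat the two pieces separately. For the isometry, given any optimiser $\eta$ on the output attaining $\widetilde{D}^\varepsilon_\alpha(V\rho V^\dagger\|V\sigma V^\dagger)$, I would pull it back to $V^\dagger\eta V$. Because the powers of $V\sigma V^\dagger$ appearing in $\widetilde{D}_\alpha$ annihilate whatever support $\eta$ has outside the image of $V$, isometric invariance yields $\widetilde{D}_\alpha(\eta\|V\sigma V^\dagger)=\widetilde{D}_\alpha(V^\dagger\eta V\|\sigma)$, while contractivity of $P$ under $X\mapsto V^\dagger X V$ keeps $V^\dagger\eta V$ within the $\varepsilon$-ball around $\rho$. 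Hence $\widetilde{D}^\varepsilon_\alpha(\rho\|\sigma)\ge \widetilde{D}^\varepsilon_\alpha(V\rho V^\dagger\|V\sigma V^\dagger)$, and the essential content reduces to the partial trace, i.e.\ to proving $\widetilde{D}^\varepsilon_\alpha(\rho_{AE}\|\sigma_{AE})\ge \widetilde{D}^\varepsilon_\alpha(\rho_A\|\sigma_A)$ for marginals on $A$.

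For the partial trace I would argue by pulling the smoothing back along Uhlmann's theorem. Let $\tilde\omega_A$ be a subnormalised state attaining the maximum on the output side, so that $P(\tilde\omega_A,\rho_A)\le\varepsilon$ and $\widetilde{D}_\alpha(\tilde\omega_A\|\sigma_A)=\widetilde{D}^\varepsilon_\alpha(\rho_A\|\sigma_A)$. By the generalised-fidelity version of Uhlmann's theorem there exists an extension $\tilde\rho_{AE}$ with $\tr_E\tilde\rho_{AE}=\tilde\omega_A$ and $P(\tilde\rho_{AE},\rho_{AE})=P(\tilde\omega_A,\rho_A)\le\varepsilon$, so $\tilde\rho_{AE}$ is feasible for the maximisation defining $\widetilde{D}^\varepsilon_\alpha(\rho_{AE}\|\sigma_{AE})$. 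Applying the ordinary data-processing inequality to the partial trace then gives
\begin{align}
\widetilde{D}^\varepsilon_\alpha(\rho_{AE}\|\sigma_{AE}) &\ge \widetilde{D}_\alpha(\tilde\rho_{AE}\|\sigma_{AE}) \ge \widetilde{D}_\alpha(\tr_E\tilde\rho_{AE}\|\tr_E\sigma_{AE}) \nonumber\\
&= \widetilde{D}_\alpha(\tilde\omega_A\|\sigma_A)=\widetilde{D}^\varepsilon_\alpha(\rho_A\|\sigma_A), \nonumber
\end{align}
which is the desired inequality; composing with the isometric step yields the theorem for arbitrary $\mathcal E$.

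The step I expect to be the main obstacle is ensuring that the Uhlmann pull-back lands inside the feasible set, and this is precisely where subnormalisation is essential. For $\alpha<1$ the prefactor $1/(\alpha-1)$ is negative, so the maximisation favours states of smaller trace and the optimiser $\tilde\omega_A$ is genuinely subnormalised; its extension $\tilde\rho_{AE}$ must carry the same trace deficit, so the set over which we maximise \emph{has} to contain subnormalised states for the argument to close. Technically I would make the generalised-fidelity Uhlmann theorem precise by embedding each subnormalised state into one extra dimension carrying the missing weight, reducing it to the standard normalised Uhlmann theorem. For $\alpha>1$ the prefactor is positive and the optimum instead sits at a normalised state, which explains why subnormalisation is needed only in the regime $\alpha\in[\tfrac12,1)$. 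A secondary point to verify is attainment of the maxima---compactness of the $\varepsilon$-ball of subnormalised states together with the relevant semicontinuity of $\widetilde{D}_\alpha$---so that speaking of an ``optimiser'' is justified rather than merely a supremum.
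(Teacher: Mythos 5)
Your proposal is correct and follows essentially the same route as the paper's proof: reduction via Stinespring dilation, handling the isometry by pulling the output-side optimiser back (your $V^\dagger\eta V$ is literally the paper's $U^\dagger\hat\tau U$, with your support-annihilation remark playing the role of the paper's projection onto $\mathrm{Im}(U)$), and handling the partial trace by extending the optimal smoothing state via the generalised Uhlmann theorem (the paper cites this as Corollary~3.14 of Tomamichel's book) followed by data-processing of the unsmoothed divergence. Your closing remarks on why subnormalisation is essential for $\alpha<1$ match the paper's separate discussion in its appendix on smoothing.
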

Another key ingredient of our proof is the following continuity bound for the quantum sandwiched R\'enyi divergences in the interval $\alpha \in (0,1)$.
\begin{proposition}
Let $\alpha \in (0,1)$ and $\rho, \sigma \in \mathcal{S}_\bullet(\mathcal{H})$. Then for any $\tilde{\rho} \in \mathcal{S}_\bullet(\mathcal{H})$ such that $\Delta(\rho, \tilde{\rho}) \leq \varepsilon \leq \widetilde{Q}_\alpha(\rho \| \sigma)^\frac{1}{\alpha}$ we have 
\begin{equation}
|\widetilde{D}_\alpha(\rho \| \sigma)-\widetilde{D}_\alpha(\tilde{\rho} \| \sigma)| \leq \frac{1}{\alpha-1} \log{\left(1- \frac{\varepsilon^\alpha}{\widetilde{Q}_\alpha(\rho \| \sigma) } \right)} 
\end{equation}
\end{proposition}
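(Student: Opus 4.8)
The plan is to reduce the statement about $\widetilde{D}_\alpha$ to a perturbation estimate for the underlying trace quantity $\widetilde{Q}_\alpha(\rho\|\sigma) = \tr[(\sigma^{\frac{1-\alpha}{2\alpha}}\rho\sigma^{\frac{1-\alpha}{2\alpha}})^\alpha]$. Since $\widetilde{D}_\alpha(\rho\|\sigma) = \frac{1}{\alpha-1}\log\widetilde{Q}_\alpha(\rho\|\sigma)$, we have $\widetilde{D}_\alpha(\rho\|\sigma)-\widetilde{D}_\alpha(\tilde\rho\|\sigma) = \frac{1}{1-\alpha}\log\frac{\widetilde{Q}_\alpha(\tilde\rho\|\sigma)}{\widetilde{Q}_\alpha(\rho\|\sigma)}$, so it suffices to prove the additive stability bound $|\widetilde{Q}_\alpha(\rho\|\sigma)-\widetilde{Q}_\alpha(\tilde\rho\|\sigma)| \le \varepsilon^\alpha$, with $\varepsilon = \Delta(\rho,\tilde\rho)$ the Schatten-$\alpha$ (quasi-norm) distance; the logarithmic form then follows by elementary manipulation.

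The heart of the argument is matrix-analytic, and the key point is that the perturbation must be handled at the level of the sandwiched operators rather than the states themselves. Writing $A = \sigma^{\frac{1-\alpha}{2\alpha}}\rho\sigma^{\frac{1-\alpha}{2\alpha}} \ge 0$ and $\tilde A = \sigma^{\frac{1-\alpha}{2\alpha}}\tilde\rho\sigma^{\frac{1-\alpha}{2\alpha}} \ge 0$, I would take the Jordan decomposition $\tilde A - A = M_+ - M_-$ with $M_\pm \ge 0$ and $M_+M_- = 0$. From $\tilde A \le A + M_+$ and the operator monotonicity of $t\mapsto t^\alpha$ on $[0,\infty)$ for $\alpha \in (0,1)$ we get $\tr[\tilde A^\alpha] \le \tr[(A+M_+)^\alpha]$, and the subadditivity inequality $\tr[(X+Y)^\alpha] \le \tr[X^\alpha]+\tr[Y^\alpha]$ for $X,Y \ge 0$ and $\alpha\in(0,1)$ (Rotfel'd's inequality, i.e. trace-subadditivity of the nonnegative concave function $t\mapsto t^\alpha$) yields $\widetilde{Q}_\alpha(\tilde\rho\|\sigma) \le \widetilde{Q}_\alpha(\rho\|\sigma) + \tr[M_+^\alpha]$. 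Running the same chain with $A$ and $\tilde A$ exchanged gives $\widetilde{Q}_\alpha(\rho\|\sigma) \le \widetilde{Q}_\alpha(\tilde\rho\|\sigma) + \tr[M_-^\alpha]$, so $|\widetilde{Q}_\alpha(\rho\|\sigma)-\widetilde{Q}_\alpha(\tilde\rho\|\sigma)| \le \tr[|\tilde A-A|^\alpha]$. Finally, since $\tilde A - A = \sigma^{\frac{1-\alpha}{2\alpha}}(\tilde\rho-\rho)\sigma^{\frac{1-\alpha}{2\alpha}}$ and $\sigma \in \substates$ forces $\|\sigma^{\frac{1-\alpha}{2\alpha}}\|_\infty \le 1$, the singular-value inequality $s_j(RXR) \le \|R\|_\infty^2\, s_j(X)$ gives $\tr[|\tilde A-A|^\alpha] \le \tr[|\rho-\tilde\rho|^\alpha] = \Delta(\rho,\tilde\rho)^\alpha = \varepsilon^\alpha$.

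To finish, I would insert $\widetilde{Q}_\alpha(\tilde\rho\|\sigma) \in [\widetilde{Q}_\alpha(\rho\|\sigma)-\varepsilon^\alpha,\ \widetilde{Q}_\alpha(\rho\|\sigma)+\varepsilon^\alpha]$ into the ratio above. The hypothesis $\varepsilon \le \widetilde{Q}_\alpha(\rho\|\sigma)^{1/\alpha}$ guarantees $\varepsilon^\alpha/\widetilde{Q}_\alpha(\rho\|\sigma) \le 1$, keeping the lower endpoint nonnegative and the logarithm well defined. Since $-\log(1-x) \ge \log(1+x)$ for $x\in(0,1)$, the bound coming from the lower estimate dominates, yielding $|\widetilde{D}_\alpha(\rho\|\sigma)-\widetilde{D}_\alpha(\tilde\rho\|\sigma)| \le \frac{1}{1-\alpha}\bigl(-\log(1-\varepsilon^\alpha/\widetilde{Q}_\alpha(\rho\|\sigma))\bigr) = \frac{1}{\alpha-1}\log(1-\varepsilon^\alpha/\widetilde{Q}_\alpha(\rho\|\sigma))$, exactly as claimed. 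I expect the main obstacle to be the matrix-analytic core: one must resist decomposing $\rho-\tilde\rho$ directly (the conjugation by $\sigma^{\frac{1-\alpha}{2\alpha}}$ does not preserve such a decomposition) and instead apply operator monotonicity together with Rotfel'd subadditivity to the sandwiched operators, which is precisely what produces the clean $\varepsilon^\alpha$ dependence. A secondary point requiring care is the asymmetry of the two logarithmic bounds, which must be checked to land on the stated one-sided expression.
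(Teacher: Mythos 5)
Your overall skeleton --- reduce to an additive perturbation bound for $\widetilde{Q}_\alpha$ and then run the two-sided case analysis with $\log(1+x)\leq-\log(1-x)$ --- is exactly the paper's, but the matrix-analytic core contains a genuine error that breaks the claimed $\varepsilon^\alpha$ dependence. Your chain correctly yields $|\widetilde{Q}_\alpha(\rho\|\sigma)-\widetilde{Q}_\alpha(\tilde\rho\|\sigma)|\leq\tr\big[|\tilde A-A|^\alpha\big]\leq\tr\big[|\rho-\tilde\rho|^\alpha\big]$, but the final identification $\tr\big[|\rho-\tilde\rho|^\alpha\big]=\Delta(\rho,\tilde\rho)^\alpha$ is false: in this paper $\Delta$ is the \emph{generalised trace distance}, $2\Delta(\rho,\tilde\rho)=\tr|\rho-\tilde\rho|+|\tr(\rho-\tilde\rho)|$, not a Schatten-$\alpha$ quasi-norm distance. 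For $\alpha\in(0,1)$ one has $\tr\big[|X|^\alpha\big]=\sum_j s_j(X)^\alpha\geq\big(\sum_j s_j(X)\big)^\alpha$, and the gap is dimension-dependent: taking $\rho=\mathds{1}/d$ and $\tilde\rho$ with eigenvalues $1/d\pm\eta$ (half each), one gets $\Delta(\rho,\tilde\rho)=d\eta/2=:\varepsilon$ while $\tr\big[|\rho-\tilde\rho|^\alpha\big]=d\eta^\alpha=d^{1-\alpha}(2\varepsilon)^\alpha$. So your argument only proves a bound inflated by a factor as large as $\mathrm{rank}(\rho-\tilde\rho)^{1-\alpha}$, strictly weaker than the proposition --- whose dimension-independence the paper explicitly emphasizes and needs, since in the main theorem the bound is applied on the joint system of state and catalyst.

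The paper avoids this by decomposing at the \emph{state} level --- the very step you advise resisting. It writes $\rho-\tilde\rho=P'-Q'$ (Jordan decomposition), uses the definition of the generalised trace distance (with care about subnormalisation) to show $\tr P'\leq\varepsilon$, and sets $P=P'/\tr P'$, a normalised state, so that $\rho\leq\tilde\rho+\varepsilon P$. Your worry that conjugation by $\sigma^{\frac{1-\alpha}{2\alpha}}$ does not preserve the decomposition is immaterial: the proof never needs the conjugated parts to be mutually orthogonal, only the operator inequality, which conjugation does preserve. Monotonicity of the trace functional plus the same Rotfel'd-type subadditivity you invoke then give $\widetilde{Q}_\alpha(\rho\|\sigma)\leq\widetilde{Q}_\alpha(\tilde\rho\|\sigma)+\varepsilon^\alpha\,\widetilde{Q}_\alpha(P\|\sigma)$, where crucially $\varepsilon$ is pulled out of the $\alpha$-power as a single scalar, and $\widetilde{Q}_\alpha(P\|\sigma)\leq1$ because $P$ is normalised and $\sigma$ subnormalised. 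This is what produces the clean, dimension-free $\varepsilon^\alpha$. Everything after that point in your write-up (inserting the two-sided estimate into the ratio and the one-sided logarithmic bound) is fine and coincides with the paper.
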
 
where we introduced generalised trace distance~\cite{Tomamichel} which for normalised states is defined as $2\Delta(\rho, \sigma) := \|\rho-\sigma\|_1$ and the function $\widetilde{Q}_\alpha(\rho \| \sigma) = \exp{(\alpha-1)\widetilde{D}_\alpha(\rho \| \sigma)}$. We remark that the previous bound does not depend explicitly on the dimension of the Hilbert space of the states. Moreover, the previous proposition implies that the resource monotones $\mathfrak{D}_\alpha$ are also continuous~\cite[Corollary~4]{Supplemental_Material}.

\noindent The main idea of the proof of the main theorem is that we choose a pair of states $(\rho, \rho')$ such that $\mathfrak{D}_{\alpha}(\rho) < \mathfrak{D}_{\alpha}(\rho')$ 
and hence, since $\mathfrak{D}_{\alpha}$ is tensor-additive by assumption, the data-processing inequality for $\mathfrak{D}_{\alpha}$ for any (uncorrelated) catalytic transformation taking $\rho$ to $\rho'$ is strictly violated. Moreover, for any $\varepsilon$-correlated catalytic transformation with catalyst $\nu$ mapping $\rho$ into $\rho'$, we have both
\begin{align*}
\mathfrak{D}_{\alpha}(\rho \otimes \nu) &< \mathfrak{D}_{\alpha}(\rho' \otimes \nu) \, \quad \textnormal{and} \\
 \mathfrak{D}_{\alpha}^{\varepsilon}(\rho \otimes \nu) &\geq \mathfrak{D}_{\alpha}^{\varepsilon}(\tau) \geq \mathfrak{D}_{\alpha}(\rho' \otimes \nu) \nonumber ,
\end{align*}
where the inequalities on the second line are due to the monotonicity for the transformation (Theorem~\ref{normal data-processing}) including the catalyst and our assumption that $\tau$ is $\varepsilon$-close to $\rho' \otimes \nu$.

However, these two inequalities lead to a tension with the continuity of $\mathfrak{D}_{\alpha}$, which ensures that $\mathfrak{D}_{\alpha}(\rho \otimes \nu)$ and $\mathfrak{D}_{\alpha}^{\varepsilon}(\rho \otimes \nu)$ are arbitrarily close as $\varepsilon$ decreases. We then show that this tension can only be relieved if $\mathfrak{D}_{\alpha}(\nu)$ grows large when $\varepsilon$ decreases.

In the following, we quickly summarize the consequences of Theorem~\ref{fidelity theorem} for resource theory of athermality, entanglement theory, and resource theory of coherence. In each resource theory we will specify the set of free states; our results apply to any resource theory compatible with this choice of free states. To apply Theorem~\ref{fidelity theorem}, for each resource theory we first discuss the additivity of $\mathfrak{D}_\alpha$ and we then check if there exist states in $\overline{\setCCO}$ that satisfy the conditions of the theorem. In particular, in each resource theory we find states that both satisfy $\mathfrak{D}(\rho) \geq \mathfrak{D}(\rho')$ and $\mathfrak{F}(\rho)>\mathfrak{F}(\rho')$ .

\paragraph*{Resource theory of athermality.} 
In resource theory of athermality the thermal or Gibbs state $\gamma = e^{-\beta H} /Z$ is the only free state. Here, $\beta$ is the inverse temperature, $H$ is the Hamiltonian of the system and $Z$ is the normalisation factor (partition function). The relevant resource measure is the non-equilibrium free energy~\cite{Brandao3}, $\mathfrak{D}(\rho) = D(\rho\|\gamma)$.
We remark that our results apply to both resource theory of athermality with thermal operations and resource theory of athermality under Gibbs preserving maps since in both resource theories free operations keep the Gibbs state invariant~\cite{faist2015gibbs,lostaglio2018elementary}. However since for the former we do not know the resource monotone characterizing the set $\overline{\setCCO}$, for our considerations we will mainly focus on the latter where the relevant resource monotone is the non-equilibrium free energy~\cite{muller2018correlating,Sagawa}. 

The resource monotones $\mathfrak{D}_\alpha$ are trivially additive and we prove in the Supplemental Material~\cite[Section VI]{Supplemental_Material} that there exist states in $\overline{\setCCO}$ satisfying conditions of Theorem~\ref{fidelity theorem}. In particular, we find numerically pairs of qubit states with a non-zero fidelity gap and we construct analytically pairs of classical qutrit states with fidelity gap arbitrarily close to one.
 
From Theorem~\ref{fidelity theorem} we get that the non-equilibrium free energy of any catalyst must satisfy
$\mathfrak{D}(\nu) = \Omega\left( \log{\frac{1}{\varepsilon}}\right)
$.
Therefore, a correlated catalytic transformation between any two states would require preparing a catalyst with an unbounded amount of free energy as the error vanishes. Moreover, we show that the protocol discussed in~\cite{Sagawa} is optimal, i.e. $\mathfrak{D}(\nu) = \Theta\left( \log{\frac{1}{\varepsilon}}\right)
$. With this protocol, any state in $\overline{\setCCO}$ can be reached up to arbitrary accuracy. This means that it reaches exactly some states in $\setCCO \setminus \overline{\setCO}$ (with finite residual correlations for finite resourceful catalysts). This method generalizes to correlated catalytic transformations the already known construction of the catalyst introduced in~\cite{duan2005multiple} for (uncorrelated) catalytic transformations. This method provides a recipe to construct the catalyst whenever the states are asymptotically transformable.

To prove that it is optimal we use the exponential upper bound for the convergence of the error in approximate asymptotic pairwise state transformation~\cite{Buscemi}. In~\cite{Buscemi} the authors provided a method to obtain a lower bound for the error exponent which controls the exponential convergence of the error to zero with the number of copies. In this work, we give a qualitative first order expansion of the error exponent  for small relative entropy gaps. We find that, under some mild regularity conditions, the error exponent $\gamma$ satisfies $
\gamma \geq \Delta D^2 \log{e}/8(V_1+V_2)+O(\Delta D^3) 
$
where $\Delta D := D(\rho_1 \|\sigma_1) - D(\rho_2\|\sigma_2)$ is the relative entropy gap and $V_i := V(\rho_i\|\sigma_i) = \text{Tr}[\rho_i(\log{\rho_i}-\log{\sigma_i})^2)]-D(\rho_i\|\sigma_i)^2$ is the relative entropy variance. 
We remark that the above expression shows the appropriate scaling behavior with the number of copies of the states (see the Supplemental Material~\cite[Section VI and Appendix C]{Supplemental_Material} for more details).

\paragraph*{Entanglement theory.} In this case, the separable states are the free states of the theory~\cite{Vlatko,horodecki2009quantum}. In the following, we consider input and output bipartite pure states $|\psi _{AB}\rangle, |\psi' _{AB}\rangle$, but allow general mixed catalysts during the protocol. The resource monotone characterizing the set of pure states in $\overline{\setCCO}$ is the relative entropy of entanglement~\cite{Kondra}. Moreover, the resource monotones $\mathfrak{D}_\alpha$ are additive when one state is pure~\cite{Unpublished}. 

Therefore, our main theorem implies that for pairs of pure states satisfying the conditions of the main theorem any correlated catalytic transformation needs a catalyst with a diverging amount of relative entropy of entanglement as the error approaches zero. 
We then construct states with fidelity gap arbitrarily close to one.
(see the Supplemental Material~\cite[Section VII]{Supplemental_Material} for more details).

\paragraph*{Resource theory of coherence.} Fixing a basis $\{|i\rangle, i=1, ...,d\} $, we say that a state is free if it is diagonal in such a basis~\cite{Winter}.
We consider output pure states where the monotone that characterizes the set $\overline{\setCCO}$ is the relative entropy of coherence~\cite{Winter,Superadditivity,Takagi}. 
All the monotones $\mathfrak{D}_\alpha$ are additive~[\citenum{Hayashi}, Theorem 3].  In our work, we give an independent proof of additivity of $\mathfrak{D}_{1/2}$ by finding an  Alberti's form of the Fidelity of Coherence $\mathfrak{F}(\rho):=\max_{\sigma \in \mathcal{F}}F(\rho,\sigma)$ through semi-definite program (SDP) formulation

\begin{theorem}
Let $\rho \in \states$. The fidelity of coherence is the solution of the following minimisation problem 
\begin{align}
\mathfrak{F}(\rho) = \inf \limits_{R > 0} \textnormal{Tr}\, [\rho R^{-1}] \|\Delta(R)\|_{\infty} \nonumber
\end{align}
where $\Delta$  is the dephasing operator $\Delta(\cdot) = \sum_i |i\rangle \langle i | \cdot |i\rangle \langle i |$.
\end{theorem}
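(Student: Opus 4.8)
The plan is to realize $\mathfrak{F}(\rho)$ as an Alberti-type variational formula and then exchange a maximization with an infimum. The starting point is Alberti's theorem, which expresses the squared Uhlmann fidelity as
\[
F(\rho,\sigma)=\inf_{X>0}\tr[\rho X]\,\tr[\sigma X^{-1}].
\]
Taking the maximum over the free (diagonal) states then gives
\[
\mathfrak{F}(\rho)=\max_{\sigma\in\mathcal{F}}\;\inf_{X>0}\tr[\rho X]\,\tr[\sigma X^{-1}],
\]
so the whole statement reduces to interchanging the maximum over $\sigma$ with the infimum over $X$ and then carrying out the (now inner) maximization explicitly.

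One inequality is free: the minimax inequality always gives $\max_\sigma\inf_X\le\inf_X\max_\sigma$. Concretely, for any $R>0$ we may set $X=R^{-1}$ in Alberti's formula and use that a diagonal $\sigma$ only sees the diagonal of $R$, so that $\tr[\sigma R]=\tr[\sigma\,\Delta(R)]\le\|\Delta(R)\|_\infty$; this yields $F(\rho,\sigma)\le\tr[\rho R^{-1}]\,\|\Delta(R)\|_\infty$ for every free $\sigma$, hence $\mathfrak{F}(\rho)\le\inf_{R>0}\tr[\rho R^{-1}]\|\Delta(R)\|_\infty$. For the reverse inequality I would invoke Sion's minimax theorem (equivalently, strong duality of the underlying semidefinite program) applied to $g(\sigma,X):=\tr[\rho X]\tr[\sigma X^{-1}]$ over the compact convex set of diagonal density matrices and the convex cone $\{X>0\}$; Sion's compactness hypothesis is supplied by the set of diagonal states, while the cone may remain unbounded. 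In $\sigma$ the map is linear, hence concave and continuous, so the only real work is the behaviour in $X$.

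I expect the quasiconvexity of $g(\sigma,\cdot)$ in $X$ to be the main obstacle, since a product of a linear and an operator-convex functional need not be convex. The way around this is to exploit that $g(\sigma,\cdot)$ is invariant under the rescaling $X\mapsto tX$: on the cross-section $\{X>0:\tr[\rho X]=1\}$ it reduces to $\tr[\sigma X^{-1}]$, which is convex by the operator convexity of the matrix inverse, and a short homogeneity argument promotes convexity on this transverse slice to quasiconvexity on the whole cone. With the minimax exchange licensed, it remains to evaluate the inner maximum: for fixed $X$ one has $\max_{\sigma\in\mathcal{F}}\tr[\sigma X^{-1}]=\max_i\langle i|X^{-1}|i\rangle=\|\Delta(X^{-1})\|_\infty$, because the diagonal $\sigma$ probes only the diagonal entries of $X^{-1}$. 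Substituting $R=X^{-1}$ (a bijection of $\{X>0\}$ onto itself) then turns $\inf_{X>0}\tr[\rho X]\|\Delta(X^{-1})\|_\infty$ into the claimed expression $\inf_{R>0}\tr[\rho R^{-1}]\|\Delta(R)\|_\infty$, completing the proof.
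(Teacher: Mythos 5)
Your proof is correct, but it takes a genuinely different route from the paper's. The paper incorporates the optimization over free states directly into the SDP formulation of the root fidelity (placing $\Delta(\sigma)$ in the lower-right block of the positivity constraint), writes the primal in standard form, derives the dual, verifies Slater's condition for strong duality, and then simplifies via the block-matrix equivalence that allows setting $L = R^{-1}$ --- in effect re-deriving an Alberti-type expression from scratch with the dephasing built in. You instead take Alberti's theorem for fixed $\sigma$, namely $F(\rho,\sigma)=\inf_{X>0}\tr[\rho X]\tr[\sigma X^{-1}]$, as a black box, and reduce the whole statement to exchanging $\max_\sigma$ with $\inf_X$, which you justify by Sion's minimax theorem. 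The only real work in your argument is the observation that $g(\sigma,X)=\tr[\rho X]\tr[\sigma X^{-1}]$, while not convex in $X$, is invariant under $X\mapsto tX$ and restricts to the convex function $\tr[\sigma X^{-1}]$ on the slice $\{\tr[\rho X]=1\}$, so that the rescaling argument yields quasiconvexity on the whole cone --- exactly the hypothesis Sion's theorem needs (the compactness hypothesis is carried by the set of diagonal states, and linearity in $\sigma$ handles the other side); this step is sound, as is your evaluation $\max_{\sigma\in\mathcal{F}}\tr[\sigma X^{-1}]=\|\Delta(X^{-1})\|_\infty$ and the final substitution $R=X^{-1}$. Both arguments ultimately rest on duality (Alberti's theorem is itself the dual of the fidelity SDP), but yours is more modular and somewhat more general: nothing in it uses diagonality of the free states until the last step, so for an arbitrary compact convex free set $\mathcal{F}$ the identical argument gives $\mathfrak{F}(\rho)=\inf_{R>0}\tr[\rho R^{-1}]\,h_{\mathcal{F}}(R)$ with $h_{\mathcal{F}}(R)=\max_{\sigma\in\mathcal{F}}\tr[\sigma R]$ the support function of $\mathcal{F}$. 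What the paper's explicit derivation buys in exchange is the standard-form primal/dual pair, which directly supports the efficient numerical evaluation the authors emphasize, and a self-contained strong-duality certificate via Slater rather than an appeal to Sion.
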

We believe that this result is of independent interest since it allows to efficiently compute this quantity for which, to the best of our knowledge, an analytic form is known only for pure states~\cite{Hayashi}. 

Also in this case, we prove the existence of states in $\overline{\setCCO}$ satisfying conditions of Theorem~\ref{fidelity theorem} with a fidelity gap arbitrarily close to one.
We obtain from Theorem~\ref{fidelity theorem} for the relative entropy of coherence of the catalyst
$
\mathfrak{D}(\nu) = D(\nu\|\Delta(\nu)) = \Omega\left( \log{\frac{1}{\varepsilon}}\right)
$.
Hence, we establish that to perform correlated catalytic transformation we would need, at least for some states, to prepare a catalyst with a diverging amount of coherence as the error~vanishes. (see the Supplemental Material \cite[Section VIII]{Supplemental_Material} for more details).

\paragraph*{Conclusion and open questions.} In this work we established that for some correlated catalytic processes a small residual correlation between the system and the catalyst implies a highly resourceful catalyst. We also show similarly how in the context of imperfect catalysis a small error is only possible with a highly resourceful embezzling catalyst. Our results apply to resource theories for which certain resource monotones are tensor-additive. We point out that a characterization of the sets $\overline{\setCO}$ and $\overline{\setCCO}$,  and therefore of the set $\overline{\setCCO} \setminus \overline{\setCO}$, is not known for many resource theories. Hence, the range of applicability of our main theorem and whether unbounded resources for the catalyst are required in such theories are still open questions.

\paragraph*{Acknowledgements.}
This research is supported by the National Research Foundation, Prime Minister's
Office, Singapore and the Ministry of Education, Singapore under the Research Centres of Excellence programme.
MT is also supported in part by NUS startup grants (R-263-000-E32-133 and R-263-000-E32-731).

\nocite{*}

\bibliography{my}

\clearpage
\newpage

\title{Supplemental material for \\ ``Fundamental Limits on Correlated Catalytic State Transformations"}

\author{Roberto Rubboli}
\email{roberto.rubboli@u.nus.edu}
\affiliation{Centre for Quantum Technologies, National University of Singapore, Singapore}

\author{Marco Tomamichel}
\affiliation{Department of Electrical and Computer Engineering, National University of Singapore, Singapore 117583, Singapore}
\affiliation{Centre for Quantum Technologies, National University of Singapore, Singapore}

\maketitle

\onecolumngrid

In this Supplemental Material we provide the proofs of the results presented in the main text.

\setcounter{theorem}{0}

\section{Notation}
We denote by $\mathcal{S}(\mathcal{H})$ the set of quantum states on a $d$-dimensional Hilbert space $\mathcal{H}$ and with $\mathcal{S}_\bullet(\mathcal{H})$ the set of all subnormalised states, i.e. positive operators with trace smaller than one. We introduce the purified distance~\cite{tomamichel2010duality} for sub-normalised states, $P(\rho, \sigma) := \sqrt{1-F(\rho, \sigma)}$ where $\sqrt{F(\rho, \sigma)} :=  \text{Tr}|\sqrt{\rho}\sqrt{\sigma}| + \sqrt{(1-\text{tr} \rho)(1-\text{tr}\sigma)}$ is a generalisation of the Uhlmann fidelity to sub-normalised states. Moreover, we define the generalised trace distance~\cite{Tomamichel}, $2\Delta(\rho, \sigma):=\text{Tr}|\rho-\sigma|+|\text{Tr}(\rho-\sigma)|$ which for normalised states reduces to the trace distance $2d(\rho, \sigma) :=\text{Tr}|\rho-\sigma|$.

Let $\alpha \in [\frac{1}{2},1) \cup (1,\infty)$ and positive operators $\rho$ and $\sigma$ with $\rho \neq 0$. Then the \textit{sandwiched quantum R\'enyi divergence} of $\sigma$ with $\rho$ is defined as~\cite{Muller, Wilde3,Tomamichel}
\begin{equation}
\widetilde{D}_{\alpha}(\rho \| \sigma) :=
\begin{cases}
\frac{1}{\alpha-1}\log{\text{Tr}\big(\sigma^{\frac{1-\alpha}{2\alpha}}\rho \sigma^{\frac{1-\alpha}{2\alpha}}\big)^{\alpha}} & \text{if}\; (\alpha<1 \wedge \rho \not \perp \sigma) \vee \rho \ll \sigma \\
 +\infty & \text{else}
\end{cases}
\end{equation}
The sandwiched quantum R\'{e}nyi divergence of order $1/2$ is therefore 
$
D_{1/2}(\rho \| \sigma) = -\log{F(\rho, \sigma)}
$
. In the limit $\alpha \rightarrow 1$ the sandwiched quantum R\'{e}nyi divergence converges to the  Umegaki relative entropy $D(\rho \| \sigma) = \text{Tr}[\rho (\log{\rho} - \log{\sigma})]$. In the limit $\alpha \rightarrow \infty$ the sandwiched quantum R\'{e}nyi divergence converges to the \textit{max-divergence}~\cite{Datta_rob2, Renner}
\begin{align}
D_{\max}(\rho \| \sigma) := \inf \{ \lambda \in \mathbb{R} : \rho \leq 2^{\lambda}\sigma \} 
.
\end{align}
We also define the function $\widetilde{Q}_\alpha(\rho \| \sigma) = \exp{(\alpha-1)\widetilde{D}_\alpha(\rho \| \sigma)}$.

We  call a function $\mathfrak{R} : \mathcal{S}(\mathcal{H}) \rightarrow [0, + \infty]$ a resource monotone if it does not increase under free operations, i.e., if $\mathfrak{R}(\rho) \geq \mathfrak{R}(\mathcal{E}(\rho))$ for any state $\rho$ and any free operation $\mathcal{E}$.
In addition, we say that $\mathfrak{R}$ is \textit{tensor-additive} if $\mathfrak{R}(\rho \otimes \sigma) = \mathfrak{R}(\rho) +  \mathfrak{R}(\sigma) $ and \textit{super-additive} if $\mathfrak{R}(\rho_{AB}) \geq \mathfrak{R}(\text{Tr}_A[\rho_{AB}]) +  \mathfrak{R}(\text{Tr}_B[\rho_{AB}]) $. \\ 
We define also the resource monotones
\begin{align}
&\mathfrak{D}_\alpha(\rho) := \min \limits_{\sigma \in \mathcal{F}} \widetilde{D}_\alpha(\rho \|\sigma) \quad \alpha \in \left[1/2, \infty \right), \\
&\mathfrak{D}(\rho) := \min \limits_{\sigma \in \mathcal{F}} D(\rho \|\sigma)\, ,\\
&\mathfrak{D}_{\max}(\rho) := \min \limits_{\sigma \in \mathcal{F}} D_{\max}(\rho \|\sigma) \, .
\end{align} 
We also define $\mathfrak{D}_\alpha(\rho) := \frac{1}{\alpha -1}\log{\mathcal{Q}_\alpha(\rho)}$ and we also often call $\mathcal{Q}_{1/2}:= \sqrt{\mathfrak{F}}$.
In the literature, the robustness and the generalised robustness are often introduced to quantify the resourcefulness of a state.
The monotone $\mathfrak{D}_{\max}$ is equal to the 'generalised log-robusteness' $\mathfrak{D}_{\max}(\rho) = \log{(1 +\mathfrak{R}_g(\rho))}:=L\mathfrak{R}_g(\rho)$~\cite{Steiner, Vidal, Datta_rob1, Datta_rob2} where the generalised robustness is given by
\begin{equation}
\mathfrak{R}_g(\rho) := \min \left\{ s\geq 0: \exists \omega \in \mathcal{S}(\mathcal{H}) \,\, \text{s.t} \,\, \frac{1}{1+s}\rho + \frac{s}{1+s}\omega \in \mathcal{F} \right\}\, .
\end{equation}
We first define the smoothed quantum sandwiched R\'enyi divergences as
\begin{equation}
\label{smoothed}
    \widetilde{D}^\varepsilon_{\alpha}(\rho \| \sigma) :=
    \begin{cases}
      \max \limits_{\tilde{\rho} \in B^{\varepsilon}(\rho)} \widetilde{D}_{\alpha}(\tilde{\rho}\|\sigma), & \text{if}\ \alpha \in [1/2,1) \\
       \min \limits_{\tilde{\rho} \in B^{\varepsilon}(\rho)} \widetilde{D}_{\alpha}(\tilde{\rho}\|\sigma), & \text{if}\ \alpha \in (1,\infty).
    \end{cases} 
  \end{equation}
  where  $B^\varepsilon(\rho) = \{\tilde{\rho} \in  \mathcal{S}_\bullet(\mathcal{H}) : P(\rho, \tilde{\rho}) \leq \varepsilon\}$ for $\varepsilon \in(0,1)$ is the set of all subnormalized states which are $\varepsilon$-close in purified distance to $\rho$.
The related resource monotones for $\alpha \in [\frac{1}{2},1) \cup (1,\infty)$ are
\begin{align}
\label{monotones}
 \mathfrak{D}^\varepsilon _{\alpha}(\rho):= \min \limits_{\sigma \in \mathcal{F}} \widetilde{D}^\varepsilon _{\alpha}(\rho \| \sigma) \, .
\end{align}

\section{Data-processing inequality for smoothed R\'enyi sandwiched divergences}
\label{section data-processing}
In this section we show that the smoothed sandwiched R\'enyi divergences satisfy data-processing inequality and that their related resource monotones are therefore non-increasing under free operations. 

The proof that the sandwiched quantum R\'enyi divergences in the range $(1,\infty]$ satisfy the data-processing inequality trivially follows form the data-processing inequality of the underlying R\'enyi divergence. In the following we prove that it holds also in the interval $\alpha \in [1/2,1)$. 

\begin{theorem}
Let be $\rho,\sigma$ two states and $\mathcal{E}$ a quantum channel. For any $\alpha \in [1/2,1)$
$$\widetilde{D}^{\varepsilon}_{\alpha}(\rho\|\sigma) \geq \widetilde{D}^{\varepsilon}_{\alpha}(\mathcal{E}(\rho)\|\mathcal{E}(\sigma))$$ 
\end{theorem}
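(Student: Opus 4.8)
The plan is to reduce the statement to the two elementary building blocks of any quantum channel and to handle the smoothing by \emph{pulling back} rather than \emph{pushing forward} the optimal state. By the Stinespring dilation I write $\mathcal{E}(\cdot) = \operatorname{Tr}_E[V(\cdot)V^\dagger]$ for an isometry $V \colon \mathcal{H} \to \mathcal{H}_B \otimes \mathcal{H}_E$, so that $\mathcal{E} = \operatorname{Tr}_E \circ \mathcal{V}$ with $\mathcal{V}(\cdot) = V(\cdot)V^\dagger$. It then suffices to prove the inequality separately for the isometric conjugation $\mathcal{V}$ and for the partial trace $\operatorname{Tr}_E$, and to compose the two bounds. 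Throughout I use three standard facts: (i) the unsmoothed $\widetilde{D}_\alpha$ obeys the data-processing inequality for $\alpha \ge 1/2$ and an arbitrary positive (in particular sub-normalised) first argument; (ii) the purified distance is monotone under trace-non-increasing completely positive maps when the generalised fidelity is used; and (iii) Uhlmann's theorem in the form that, given a sub-normalised $\tilde{\omega}_B$ and a fixed extension $\rho_{BE}$ of $\rho_B = \operatorname{Tr}_E \rho_{BE}$, there exists an extension $\tilde{\theta}_{BE}$ of $\tilde{\omega}_B$ with $P(\tilde{\theta}_{BE}, \rho_{BE}) = P(\tilde{\omega}_B, \rho_B)$.

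The key difficulty, and the reason the argument differs fundamentally from the case $\alpha>1$, is that for $\alpha \in [1/2,1)$ the smoothing in $\widetilde{D}^\varepsilon_\alpha$ is a \emph{maximisation}. Pushing the optimiser of the left-hand side through $\mathcal{E}$ would only lower-bound the right-hand side, which is useless. Instead I start from the right-hand side, select its maximiser, and construct a pre-image on the input space that both lies in the smoothing ball of $\rho$ and has divergence at least as large. It is exactly this construction that forces sub-normalised states into the definition, in contrast to $\alpha>1$, where the push-forward of a normalised state remains normalised.

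For the partial trace, let $\tilde{\omega}_B$ attain the maximum defining $\widetilde{D}^\varepsilon_\alpha(\rho_B \| \sigma_B)$, so $P(\tilde{\omega}_B,\rho_B)\le\varepsilon$. Using (iii) I lift it to an extension $\tilde{\theta}_{BE}$ of $\tilde{\omega}_B$ with $P(\tilde{\theta}_{BE},\rho_{BE})\le\varepsilon$, i.e.\ $\tilde{\theta}_{BE}\in B^\varepsilon(\rho_{BE})$. Since $\operatorname{Tr}_E\sigma_{BE}=\sigma_B$, the unsmoothed data-processing inequality (i) gives $\widetilde{D}_\alpha(\tilde{\theta}_{BE}\|\sigma_{BE}) \ge \widetilde{D}_\alpha(\tilde{\omega}_B\|\sigma_B)$, and maximising the left-hand side over $B^\varepsilon(\rho_{BE})$ yields $\widetilde{D}^\varepsilon_\alpha(\rho_{BE}\|\sigma_{BE}) \ge \widetilde{D}^\varepsilon_\alpha(\rho_B\|\sigma_B)$. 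For the isometry, let $\tilde{\theta}$ attain the maximum for $\widetilde{D}^\varepsilon_\alpha(V\rho V^\dagger\| V\sigma V^\dagger)$ and set $\tilde{\rho} := V^\dagger \tilde{\theta} V$, which is sub-normalised. Monotonicity (ii) under the trace-non-increasing map $V^\dagger(\cdot)V$, together with $V^\dagger(V\rho V^\dagger)V=\rho$, shows $\tilde{\rho}\in B^\varepsilon(\rho)$; and since $V\sigma V^\dagger$ is supported on the range of $V$, functional calculus gives $\widetilde{Q}_\alpha(\tilde{\theta}\|V\sigma V^\dagger) = \widetilde{Q}_\alpha(V^\dagger \tilde{\theta} V\|\sigma)$, hence $\widetilde{D}_\alpha(\tilde{\rho}\|\sigma)=\widetilde{D}_\alpha(\tilde{\theta}\|V\sigma V^\dagger)$. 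Thus $\widetilde{D}^\varepsilon_\alpha(\rho\|\sigma)\ge \widetilde{D}^\varepsilon_\alpha(V\rho V^\dagger\|V\sigma V^\dagger)$, and chaining this with the partial-trace bound proves the theorem.

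I expect the main obstacle to be making the Uhlmann lift rigorous for sub-normalised states: verifying that the generalised-fidelity version of Uhlmann's theorem produces an extension inside the purified-distance ball, and confirming that the optimisers defining $\widetilde{D}^\varepsilon_\alpha$ are attained (the ball $B^\varepsilon$ is compact and $\widetilde{Q}_\alpha$ is continuous) so that ``pick the maximiser'' is legitimate. The isometry step hinges on the identity $\widetilde{Q}_\alpha(\tilde{\theta}\|V\sigma V^\dagger)=\widetilde{Q}_\alpha(V^\dagger\tilde{\theta}V\|\sigma)$, whose one subtlety is that the component of $\tilde{\theta}$ outside the range of $V$ is annihilated by the $V\sigma V^\dagger$-sandwich, so that discarding it, which produces a sub-normalised operator, costs nothing.
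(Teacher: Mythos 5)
Your proposal is correct and follows essentially the same route as the paper's proof: a Stinespring reduction to an isometry plus a partial trace, with the isometry handled by pulling the output-side maximiser back through $V^\dagger(\cdot)V$ (using monotonicity of the purified distance under trace-non-increasing CP maps and the fact that the part of the maximiser outside the range of $V$ is irrelevant), and the partial trace handled by lifting the marginal-side maximiser via the Uhlmann-type extension lemma (the paper cites exactly the fact you flag, namely~[\citenum{Tomamichel}, Corollary 3.14], which holds for sub-normalised states) followed by the unsmoothed data-processing inequality. The only cosmetic difference is that the paper first projects the maximiser onto $\mathrm{Im}(U)$ and then applies data-processing twice, whereas you obtain the same conclusion in one step via the functional-calculus identity $\widetilde{Q}_\alpha(\tilde{\theta}\|V\sigma V^\dagger)=\widetilde{Q}_\alpha(V^\dagger\tilde{\theta}V\|\sigma)$.
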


\begin{proof}
To prove the result for general quantum channels we take advantage of the Stinespring dilation and hence it suffices to prove the result only for isometries and the partial trace. 

Let us first consider an isometry $U$. We define $\tilde{\tau}$ such that $\widetilde{D}^{\varepsilon}_{\alpha}(U\rho U^\dagger\|U\sigma U^\dagger) = \widetilde{D}_{\alpha}(\tilde{\tau} \|U\sigma U^\dagger)$. We note that we can always choose the maximiser to be a subnormalised state with support only in the image of $U$ which we denote $\text{Im}(U)$. Indeed, if we call $P = UU^\dagger$ the projector onto $\text{Im}(U)$ and by noting that $U\sigma U^\dagger \in \text{Im}(U)$ we have 
\begin{align}
\text{Tr}((U\sigma U^\dagger)^{\frac{1-\alpha}{2\alpha}}\tilde{\tau} (U\sigma U^\dagger)^{\frac{1-\alpha}{2\alpha}})^{\alpha} = \text{Tr}((U\sigma U^\dagger )^{\frac{1-\alpha}{2\alpha}}P \tilde{\tau} P ( U\sigma U^\dagger )^{\frac{1-\alpha}{2\alpha}})^{\alpha} 
\end{align}
We then denote $\hat{\tau} = P \tilde{\tau} P$ the projection of $\tilde{\tau}$ into $\text{Im}(U)$ and define $\hat{\rho} = U^\dagger \hat{\tau} U$. 
We get
\begin{equation}
\widetilde{D}^{\varepsilon}_{\alpha}(\rho\|\sigma) \geq \widetilde{D}_{\alpha}(\hat{\rho}\|\sigma) \geq \widetilde{D}_{\alpha}(U \hat{\rho} U^\dagger \|U \sigma U^\dagger) =  \widetilde{D}_{\alpha}(P\hat{\tau}P\|U \sigma U^\dagger) = \widetilde{D}_{\alpha}(\hat{\tau}\|U \sigma U^\dagger)= \widetilde{D}^\varepsilon_{\alpha}(U \rho U^\dagger \|U\sigma U^\dagger)
\end{equation}
The first inequality follows from data-processing of the purified distance under trace non-increasing completely positive maps for which $P(\hat{\rho}, \rho)\leq P(\hat{\tau}, U \rho U^\dagger)$ and hence $\hat{\rho}$ is in the $\varepsilon$-ball of $\rho$. The second inequality is a consequence of data-processing of the underlying sandwiched R\'enyi divergence.

For the partial trace we use~([\citenum{Tomamichel}, Corollary 3.14]) which states that 
given $\rho_{AB}$ with marginal $\rho_A$ and $\tilde{\rho}_A$ which satisfies $P(\rho_A, \tilde{\rho}_A)\leq \varepsilon$ we can always find $\tilde{\rho}_{AB}$ with marginal $\tilde{\rho}_A$ such that $P(\rho_{AB}, \tilde{\rho}_{AB}) \leq \varepsilon$. 
Therefore if we define $\tilde{\rho}_A$ the optimser $\widetilde{D}_{\alpha}(\tilde{\rho}_{A}\|\sigma_A) = \widetilde{D}^{\varepsilon}_{\alpha}(\rho_{A}\|\sigma_A)$ then 
\begin{equation}
\widetilde{D}^{\varepsilon}_{\alpha}(\rho_{A}\|\sigma_A) = \widetilde{D}_{\alpha}(\tilde{\rho}_{A}\|\sigma_A) \leq \widetilde{D}_{\alpha}(\tilde{\rho}_{AB}\|\sigma_{AB}) \leq \widetilde{D}^{\varepsilon}_{\alpha}(\rho_{AB}\|\sigma_{AB})
\end{equation}
where we choose $\tilde{\rho}_{AB}$ as discussed above.
\end{proof}

We remark that data-processing in particular implies invariance under embedding of the two states into a larger space. The optimisation over sub-normalised states is necessary for the smoothed sandwiched R\'enyi divergences with $\alpha \in [1/2,1)$ to be invariant under embedding in a larger space. (See the discussion in Appendix \ref{remark on smoothing}.) Moreover, we also remark that for $\alpha \in [0,1)$ it is not possible to define smoothed Petz R\'enyi divergences that satisfy the data-processing inequality in a similar fashion.

It is then straightforward to prove that the monotones \eqref{monotones} are non-increasing under free operations. 
Indeed we find
\begin{corollary}
\label{data-processing}
For any free operation $\mathcal{E}$ and any  $\alpha \in [\frac{1}{2},1) \cup (1,\infty)$ we have
\begin{align}
\mathfrak{D}^\varepsilon_{\alpha}(\rho) \geq \mathfrak{D}^\varepsilon_{\alpha}(\mathcal{E}(\rho))
\end{align}
\end{corollary}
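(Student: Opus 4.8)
The plan is to obtain this as an immediate consequence of the data-processing inequality for $\widetilde{D}^\varepsilon_\alpha$ proved just above, together with the defining property of a resource theory that free operations map free states into free states. For $\alpha \in [1/2,1)$ the relevant data-processing inequality is exactly the preceding Theorem; for $\alpha \in (1,\infty)$ the same inequality $\widetilde{D}^\varepsilon_\alpha(\rho\|\sigma) \geq \widetilde{D}^\varepsilon_\alpha(\mathcal{E}(\rho)\|\mathcal{E}(\sigma))$ holds by the standard data-processing of the underlying sandwiched R\'enyi divergence, so both pieces of the stated range $[1/2,1)\cup(1,\infty)$ are covered by a single argument.

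First I would unfold the definition $\mathfrak{D}^\varepsilon_\alpha(\rho) = \min_{\sigma \in \mathcal{F}} \widetilde{D}^\varepsilon_\alpha(\rho\|\sigma)$ and let $\sigma^\star \in \mathcal{F}$ attain the minimum (the minimum exists when $\mathcal{F}$ is compact and $\widetilde{D}^\varepsilon_\alpha(\rho\|\cdot)$ is lower semicontinuous; otherwise one argues with an infimum and near-optimizers). Applying the data-processing inequality to the pair $(\rho,\sigma^\star)$ under the channel $\mathcal{E}$ gives
\[
\mathfrak{D}^\varepsilon_\alpha(\rho) = \widetilde{D}^\varepsilon_\alpha(\rho\|\sigma^\star) \geq \widetilde{D}^\varepsilon_\alpha(\mathcal{E}(\rho)\|\mathcal{E}(\sigma^\star)).
\]
The second, and only substantive, step is to note that $\mathcal{E}(\sigma^\star) \in \mathcal{F}$ because $\mathcal{E}$ is a free operation and $\sigma^\star$ is free. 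Hence $\mathcal{E}(\sigma^\star)$ is admissible in the minimization defining $\mathfrak{D}^\varepsilon_\alpha(\mathcal{E}(\rho))$, so
\[
\widetilde{D}^\varepsilon_\alpha(\mathcal{E}(\rho)\|\mathcal{E}(\sigma^\star)) \geq \min_{\sigma \in \mathcal{F}} \widetilde{D}^\varepsilon_\alpha(\mathcal{E}(\rho)\|\sigma) = \mathfrak{D}^\varepsilon_\alpha(\mathcal{E}(\rho)),
\]
and chaining the two displays proves the claim.

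I do not expect any genuine obstacle here: all the analytic difficulty was absorbed into the preceding Theorem, whose proof handled the smoothing over sub-normalized states and the Stinespring reduction to isometries and partial trace. What remains is the elementary optimization argument above, and the only minor points to verify are the attainment of the minimum (or a clean passage to infima) and the elementary fact that a free operation keeps the optimal free state $\sigma^\star$ inside $\mathcal{F}$, so that it stays feasible in the target minimization.
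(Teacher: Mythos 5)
Your proof is correct and is essentially identical to the paper's: both combine the data-processing inequality for the smoothed divergence (the preceding theorem for $\alpha \in [1/2,1)$, and the trivial consequence of ordinary data processing for $\alpha > 1$) with the fact that a free operation maps the optimal free state back into $\mathcal{F}$, making it feasible in the target minimization. The only cosmetic difference is that you fix an explicit optimizer $\sigma^\star$ whereas the paper writes the same argument as a chain of minima.
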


\begin{proof}
We obtain from the definitions
$$\mathfrak{D}^\varepsilon_{\alpha}(\rho) = \min_{\sigma \in \mathcal{F}}D^\varepsilon_{\alpha}(\rho \| \sigma) \geq \min \limits_{\sigma\in \mathcal{F}} D^\varepsilon_{\alpha}(\mathcal{E}(\rho) \| \mathcal{E}(\sigma)) \geq \min \limits_{\sigma \in \mathcal{F}} D^\varepsilon_{\alpha}(\mathcal{E}(\rho) \| \sigma)  = \mathfrak{D}^\varepsilon_{\alpha}(\mathcal{E}(\rho))
$$
The first inequality follows from data-processing inequality of the smoothed sandwiched divergences and in the second inequality we used that since $\mathcal{E}$ is a free operations it holds $\mathcal{E}(\sigma) \in \mathcal{F}$.
\end{proof} 

\section{Continuity bound for sandwiched R\'enyi divergences}
\label{Continuiy bound}
In this section we derive a continuity bound for the sandwiched R\'enyi divergences in the interval $\alpha \in (0,1)$. 
\begin{proposition}
\label{continuity bound}
Let $\alpha \in (0,1)$ and $\rho, \sigma \in \mathcal{S}_\bullet(\mathcal{H})$. Then for any $\tilde{\rho} \in \mathcal{S}_\bullet(\mathcal{H})$ such that $\Delta(\rho, \tilde{\rho}) \leq \varepsilon \leq \tilde{Q}_\alpha(\rho \| \sigma)^\frac{1}{\alpha}$ we have 
\begin{equation}
\label{continuity bound_eq}
|\widetilde{D}_\alpha(\rho \| \sigma)-\widetilde{D}_\alpha(\tilde{\rho} \| \sigma)| \leq \frac{1}{\alpha-1} \log{\left(1- \frac{\varepsilon^\alpha}{\widetilde{Q}_\alpha(\rho \| \sigma) } \right)} \, .
\end{equation}
\end{proposition}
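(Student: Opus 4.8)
My plan is to reduce the logarithmic statement to the additive estimate $|\widetilde{Q}_\alpha(\rho\|\sigma)-\widetilde{Q}_\alpha(\tilde\rho\|\sigma)|\le\varepsilon^\alpha$ and then unwind the definition $\widetilde{D}_\alpha=\tfrac1{\alpha-1}\log\widetilde{Q}_\alpha$. Write $s=\tfrac{1-\alpha}{2\alpha}\ge0$ and set $A=\sigma^s\rho\sigma^s$, $\tilde A=\sigma^s\tilde\rho\sigma^s$, so that $\widetilde{Q}_\alpha(\rho\|\sigma)=\tr(A^\alpha)$ and $\widetilde{Q}_\alpha(\tilde\rho\|\sigma)=\tr(\tilde A^\alpha)$. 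I would first record that for sub-normalised states $\Delta(\rho,\tilde\rho)=\max\{\tr(\rho-\tilde\rho)_+,\tr(\tilde\rho-\rho)_+\}$: this is immediate from $2\Delta=\tr|\rho-\tilde\rho|+|\tr(\rho-\tilde\rho)|$ and the Jordan decomposition $\rho-\tilde\rho=X_+-X_-$, since $\tr|\rho-\tilde\rho|=\tr X_++\tr X_-$ and $\tr(\rho-\tilde\rho)=\tr X_+-\tr X_-$.

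Next I would introduce the ``join'' $\rho'=\rho+X_-=\tilde\rho+X_+\ge\rho,\tilde\rho$ and $A'=\sigma^s\rho'\sigma^s$. Because conjugation preserves the operator order and $t\mapsto t^\alpha$ is operator monotone on $[0,\infty)$ for $\alpha\in(0,1)$, we get $A'^\alpha\ge A^\alpha$ and $A'^\alpha\ge\tilde A^\alpha$, hence $\tr(A'^\alpha)\ge\max\{\tr(A^\alpha),\tr(\tilde A^\alpha)\}$. On the other hand $A'=\tilde A+N$ and $A'=A+N'$ with $N=\sigma^sX_+\sigma^s\ge0$ and $N'=\sigma^sX_-\sigma^s\ge0$, so the standard Rotfel'd-type subadditivity $\tr((M+K)^\alpha)\le\tr(M^\alpha)+\tr(K^\alpha)$ for positive $M,K$ and $\alpha\in(0,1)$ gives $\tr(A^\alpha)-\tr(\tilde A^\alpha)\le\tr(N^\alpha)$ and $\tr(\tilde A^\alpha)-\tr(A^\alpha)\le\tr(N'^\alpha)$.

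The crux is then the dimension-free inequality $\tr((\sigma^sW\sigma^s)^\alpha)\le(\tr\sigma)^{1-\alpha}(\tr W)^\alpha$ for any $W\ge0$. I would prove it by writing $\sigma^sW\sigma^s=CC^\dagger$ with $C=\sigma^sW^{1/2}$, so that $\tr((\sigma^sW\sigma^s)^\alpha)=\|C\|_{2\alpha}^{2\alpha}$, and applying Hölder's inequality for Schatten (quasi-)norms with conjugate exponents $1/s$ and $2$ (note $s+\tfrac12=\tfrac1{2\alpha}$): this yields $\|C\|_{2\alpha}\le\|\sigma^s\|_{1/s}\,\|W^{1/2}\|_2=(\tr\sigma)^s(\tr W)^{1/2}$, and raising to the power $2\alpha$ together with $2s\alpha=1-\alpha$ gives the claim. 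Applying this to $W=X_+$ and $W=X_-$, using $\tr\sigma\le1$ and $\tr X_\pm\le\Delta(\rho,\tilde\rho)\le\varepsilon$, bounds both $\tr(N^\alpha)$ and $\tr(N'^\alpha)$ by $\varepsilon^\alpha$, so $|\tr(A^\alpha)-\tr(\tilde A^\alpha)|\le\varepsilon^\alpha$. To finish, set $Q=\widetilde{Q}_\alpha(\rho\|\sigma)$; since $\tfrac1{\alpha-1}<0$ the divergence difference equals $\tfrac1{1-\alpha}\bigl|\log(\widetilde{Q}_\alpha(\tilde\rho\|\sigma)/Q)\bigr|$, and the two-sided bound $1-\tfrac{\varepsilon^\alpha}{Q}\le\widetilde{Q}_\alpha(\tilde\rho\|\sigma)/Q\le1+\tfrac{\varepsilon^\alpha}{Q}$ combined with $\log(1+x)\le-\log(1-x)$ for $x\in[0,1)$ produces exactly $\tfrac1{\alpha-1}\log(1-\tfrac{\varepsilon^\alpha}{Q})$. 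The hypothesis $\varepsilon\le\widetilde{Q}_\alpha(\rho\|\sigma)^{1/\alpha}$ is precisely what guarantees $\varepsilon^\alpha/Q\le1$, so all logarithms are well defined.

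I expect the main obstacle to be the trace inequality of the third paragraph, specifically justifying Hölder's inequality at the Schatten exponent $2\alpha$, which is below $1$ when $\alpha<1/2$; here one must invoke the quasi-norm version, valid because the conjugate exponents $1/s$ and $2$ both strictly exceed $2\alpha$, ultimately via log-majorization of singular values rather than the ordinary Hölder inequality. A secondary technical point is to treat the kernel of $\sigma$ consistently, so that the powers $\sigma^s$ and the finiteness of $\widetilde{Q}_\alpha(\rho\|\sigma)$ (which is positive by hypothesis) cause no trouble in the operator-monotonicity and subadditivity steps.
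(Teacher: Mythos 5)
Your proof is correct, and its skeleton coincides with the paper's: both decompose $\rho-\tilde\rho$ into positive and negative parts, both reduce the claim to the two-sided additive estimate $|\widetilde{Q}_\alpha(\rho\|\sigma)-\widetilde{Q}_\alpha(\tilde\rho\|\sigma)|\le\varepsilon^\alpha$ using monotonicity of $M\mapsto\text{Tr}(M^\alpha)$ together with the subadditivity $\text{Tr}((M+K)^\alpha)\le\text{Tr}(M^\alpha)+\text{Tr}(K^\alpha)$ for $\alpha\in(0,1)$, and both finish with the identical case analysis via $\log(1+x)\le-\log(1-x)$. The one place you genuinely diverge is the bound on the error term $\text{Tr}((\sigma^{s}X_\pm\sigma^{s})^\alpha)$ with $s=\tfrac{1-\alpha}{2\alpha}$: the paper pulls out the scalar $\text{Tr}(X_+)\le\varepsilon$ by normalizing ($X_+\le\varepsilon P$ with $P$ a state) and then invokes $\text{Tr}((\sigma^{s}P\sigma^{s})^\alpha)\le1$, i.e., in effect nonnegativity of the sandwiched divergence between a normalized and a sub-normalized state, whereas you prove the dimension-free interpolation inequality $\text{Tr}((\sigma^{s}W\sigma^{s})^\alpha)\le(\text{Tr}\,\sigma)^{1-\alpha}(\text{Tr}\,W)^\alpha$ directly via H\"older for Schatten quasi-norms (factors $\sigma^s W^{1/2}$ with exponents $1/s$ and $2$, target exponent $2\alpha$). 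Your version is self-contained and slightly sharper, making the dependence on $\text{Tr}\,\sigma$ explicit, but it requires the quasi-norm H\"older inequality when $2\alpha<1$ — which you correctly flag and which does hold here, since the reduced exponents $r/p=1-\alpha$ and $r/q=\alpha$ sum to one and the argument goes through log-majorization of singular values — while the paper's normalization trick avoids that machinery entirely. Your observation that $\Delta(\rho,\tilde\rho)=\max\{\text{Tr}\,X_+,\text{Tr}\,X_-\}$ is also a cleaner way to obtain both one-sided bounds at once than the paper's explicit bookkeeping with the traces $\delta$ and $\tilde\delta$.
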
 

\begin{proof}
Since the upper bound in the lemma is an increasing function of $\varepsilon$ we can assume the worst case scenario and set $\Delta(\rho, \tilde{\rho}) := \varepsilon$. We consider the most general case where both $\rho$ and $\tilde{\rho}$ are subnormalized states. We set $\rho - \tilde{\rho} = P'-Q'$ where $P'$ and $Q'$ are the positive and negative parts, respectively. We also define $\text{Tr}(\rho) = 1-\delta$ and $\text{Tr}(\tilde{\rho}) = 1-\tilde{\delta}$. We then have
\begin{align}
\label{C1}
&\tilde{\delta} - \delta = \text{Tr}(\rho - \tilde{\rho}) = \text{Tr}(P'-Q') \\
\label{C2}
& 2\varepsilon - |\tilde{\delta}-\delta| =  \text{Tr}(|\rho - \tilde{\rho}|) = \text{Tr}(P')+\text{Tr}(Q') = 2\text{Tr}(P') - (\tilde{\delta} - \delta)
\end{align}
where in the last equality of~\eqref{C1} we used~\eqref{C2}. 
It follows that $2\text{Tr}(P') = 2\varepsilon - |\tilde{\delta}-\delta| + (\tilde{\delta} - \delta) \leq 2 \varepsilon $.  We define the quantum state $P := P'/\text{Tr}(P')$. We then use that $ \rho \leq \rho + Q' = \tilde{\rho} + P' = \tilde{\rho} + \text{Tr}(P') P \leq \tilde{\rho} + \varepsilon P$ and we obtain
\begin{align}
&\rho \leq \tilde{\rho} + \varepsilon P \\
\label{Impl1}
\implies \quad & \text{Tr}[(\sigma^\frac{1-\alpha}{2\alpha} \rho \sigma^\frac{1-\alpha}{2\alpha})^\alpha] \leq \text{Tr}[(\sigma^\frac{1-\alpha}{2\alpha} (\tilde{\rho} + \varepsilon P) \sigma^\frac{1-\alpha}{2\alpha})^\alpha] \\
\label{Impl2}
\implies \quad &  \text{Tr}[(\sigma^\frac{1-\alpha}{2\alpha} \rho \sigma^\frac{1-\alpha}{2\alpha})^\alpha]
\leq \text{Tr}[(\sigma^\frac{1-\alpha}{2\alpha} \tilde{\rho} \sigma^\frac{1-\alpha}{2\alpha})^\alpha] + \varepsilon^\alpha \text{Tr}[(\sigma^\frac{1-\alpha}{2\alpha} P \sigma^\frac{1-\alpha}{2\alpha})^\alpha] \\
\label{Impl3}
\implies \quad & \text{Tr}[(\sigma^\frac{1-\alpha}{2\alpha} \rho \sigma^\frac{1-\alpha}{2\alpha})^\alpha] \leq \text{Tr}[(\sigma^\frac{1-\alpha}{2\alpha} \tilde{\rho} \sigma^\frac{1-\alpha}{2\alpha})^\alpha] + \varepsilon^\alpha \, . 
\end{align}
where in~\eqref{Impl1} we used that the trace functional $M \rightarrow \text{Tr}(f(M))$ inherits the monotonicity from $f$~(see e.g.~\cite{carlen2010trace}) and in~\eqref{Impl2} we used that for two positive semidefinite matrices $P$ and $Q$ and $\alpha \in (0,1)$ it holds  $\text{Tr}((P + Q)^\alpha) \leq \text{Tr}(P^\alpha) + \text{Tr}(Q^\alpha)$~\cite{bhatia1997matrix,marwah2022uniform}. The last implication~\eqref{Impl3} follows from the inequality $\text{Tr}[(\sigma^\frac{1-\alpha}{2\alpha} P \sigma^\frac{1-\alpha}{2\alpha})^\alpha] \leq 1$ for $\alpha \in (0,1)$. Therefore we obtain
\begin{equation}
\label{basic continuity}
\widetilde{Q}_\alpha(\rho \|\sigma) \leq \widetilde{Q}_\alpha(\tilde{\rho} \|\sigma) + \varepsilon^\alpha 
\end{equation}
The above relation holds also if we exchange $\rho$ and $\tilde{\rho}$. We now consider separately the two cases $\widetilde{D}_\alpha(\rho \| \sigma) > \widetilde{D}_\alpha(\tilde{\rho} \| \sigma)$ and $\widetilde{D}_\alpha(\rho \| \sigma) < \widetilde{D}_\alpha(\tilde{\rho} \| \sigma)$. For  $\widetilde{D}_\alpha(\rho \| \sigma) > \widetilde{D}_\alpha(\tilde{\rho} \| \sigma)$ we use that $\widetilde{Q}_\alpha(\tilde{\rho} \|\sigma) \leq \widetilde{Q}_\alpha(\rho \|\sigma) + \varepsilon^\alpha$ and we obtain for $\varepsilon \leq \widetilde{Q}_\alpha(\rho \|\sigma)^\frac{1}{\alpha}$
\begin{align}
\widetilde{D}_\alpha(\rho \| \sigma)-\widetilde{D}_\alpha(\tilde{\rho} \| \sigma) &=  - \frac{1}{\alpha-1} \log{\left(\frac{\widetilde{Q}_\alpha(\tilde{\rho} \| \sigma)}{\widetilde{Q}_\alpha(\rho \| \sigma)}\right)} \\
 &\leq - \frac{1}{\alpha-1} \log{\left(1 + \frac{\varepsilon^\alpha}{\widetilde{Q}_\alpha(\rho \| \sigma)}\right)} \\
&\leq \frac{1}{\alpha-1} \log{\left(1 - \frac{\varepsilon^\alpha}{\widetilde{Q}_\alpha(\rho \| \sigma)}\right)} \, .
\end{align}
where in the last inequality we used that $\log{(1+x)} \leq -\log{(1-x)}$ for any $0 \leq x \leq 1$. 
We then take the absolute value and for $\widetilde{D}_\alpha(\rho \| \sigma) > \widetilde{D}_\alpha(\tilde{\rho} \| \sigma)$ we get the bound
\begin{equation}
|\widetilde{D}_\alpha(\rho \| \sigma)-\widetilde{D}_\alpha(\tilde{\rho} \| \sigma)| \leq \frac{1}{\alpha-1} \log{\left(1 - \frac{\varepsilon^\alpha}{\widetilde{Q}_\alpha(\rho \| \sigma)}\right)}
\end{equation}
Instead, in the case  $\widetilde{D}_\alpha(\rho \| \sigma) < \widetilde{D}_\alpha(\tilde{\rho} \| \sigma)$ we use that $\widetilde{Q}_\alpha(\tilde{\rho} \|\sigma) \geq \widetilde{Q}_\alpha(\rho \|\sigma) - \varepsilon^\alpha$ and for $\varepsilon \leq \widetilde{Q}_\alpha(\rho \|\sigma)^\frac{1}{\alpha}$ we get
\begin{align}
\widetilde{D}_\alpha(\rho \| \sigma)-\widetilde{D}_\alpha(\tilde{\rho} \| \sigma) &=  - \frac{1}{\alpha-1} \log{\left(\frac{\widetilde{Q}_\alpha(\tilde{\rho} \| \sigma)}{\widetilde{Q}_\alpha(\rho \| \sigma)}\right)} \\
 &\geq - \frac{1}{\alpha-1} \log{\left(1 - \frac{\varepsilon^\alpha}{\widetilde{Q}_\alpha(\rho \| \sigma)}\right)} \, .
\end{align}
We then take the absolute value of the previous expression and we get for $\widetilde{D}_\alpha(\rho \| \sigma) < \widetilde{D}_\alpha(\tilde{\rho} \| \sigma)$ the same bound
\begin{equation}
|\widetilde{D}_\alpha(\rho \| \sigma)-\widetilde{D}_\alpha(\tilde{\rho} \| \sigma)| \leq \frac{1}{\alpha-1} \log{\left(1 - \frac{\varepsilon^\alpha}{\widetilde{Q}_\alpha(\rho \| \sigma)}\right)} \, .
\end{equation} 
\end{proof}

The previous proposition implies that also the related resource monotones are continuous. Indeed,
\begin{corollary}
\label{Continuity monotones}
Let $\alpha \in (0,1)$ and $\rho \in \mathcal{S}_\bullet(\mathcal{H})$. Then for any $\tilde{\rho} \in \mathcal{S}_\bullet(\mathcal{H})$ such that $\Delta(\rho, \tilde{\rho}) \leq \varepsilon \leq \mathcal{Q}_\alpha(\rho)^\frac{1}{\alpha}$ we have  
\begin{equation}
\label{continuity monotones_eq}
|\mathfrak{D}_\alpha(\rho)-\mathfrak{D}_\alpha(\tilde{\rho})| \leq \frac{1}{\alpha-1} \log{\left(1 - \frac{\varepsilon^\alpha}{\mathcal{Q}_\alpha(\rho) } \right)} \, .
\end{equation}
\end{corollary}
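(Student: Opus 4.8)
The plan is to reduce the statement to the per-state continuity bound of Proposition~\ref{continuity bound} by exploiting the variational form of $\mathcal{Q}_\alpha$. The first observation is that, since $\alpha-1<0$, the minimum defining $\mathfrak{D}_\alpha(\rho)=\min_{\sigma\in\mathcal{F}}\widetilde{D}_\alpha(\rho\|\sigma)$ turns into a maximum at the level of $\mathcal{Q}_\alpha$, namely $\mathcal{Q}_\alpha(\rho)=\exp((\alpha-1)\mathfrak{D}_\alpha(\rho))=\max_{\sigma\in\mathcal{F}}\widetilde{Q}_\alpha(\rho\|\sigma)$. Rather than taking an infimum of the state-dependent bounds~\eqref{continuity bound_eq} directly, which is awkward because the denominator $\widetilde{Q}_\alpha(\rho\|\sigma)$ varies with $\sigma$, I would first establish a clean additive bound on the difference of the $\mathcal{Q}_\alpha$ themselves and only then pass to the logarithm.

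The key step uses inequality~\eqref{basic continuity} from the proof of Proposition~\ref{continuity bound}, which holds for every fixed $\sigma\in\mathcal{F}$ and is symmetric under exchanging $\rho$ and $\tilde{\rho}$. Let $\sigma^\ast$ attain the maximum in $\mathcal{Q}_\alpha(\rho)$. Then
\begin{align*}
\mathcal{Q}_\alpha(\rho)=\widetilde{Q}_\alpha(\rho\|\sigma^\ast)\leq \widetilde{Q}_\alpha(\tilde{\rho}\|\sigma^\ast)+\varepsilon^\alpha\leq \mathcal{Q}_\alpha(\tilde{\rho})+\varepsilon^\alpha,
\end{align*}
where the last inequality is just $\widetilde{Q}_\alpha(\tilde{\rho}\|\sigma^\ast)\leq\max_{\sigma\in\mathcal{F}}\widetilde{Q}_\alpha(\tilde{\rho}\|\sigma)$. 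Applying the same argument with the roles of $\rho$ and $\tilde{\rho}$ interchanged (using a maximizer for $\tilde{\rho}$) yields $\mathcal{Q}_\alpha(\tilde{\rho})\leq\mathcal{Q}_\alpha(\rho)+\varepsilon^\alpha$, so that $|\mathcal{Q}_\alpha(\rho)-\mathcal{Q}_\alpha(\tilde{\rho})|\leq\varepsilon^\alpha$.

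Finally I would convert this back into a bound on $\mathfrak{D}_\alpha=\frac{1}{\alpha-1}\log\mathcal{Q}_\alpha$, splitting into the two cases $\mathfrak{D}_\alpha(\rho)>\mathfrak{D}_\alpha(\tilde{\rho})$ and $\mathfrak{D}_\alpha(\rho)<\mathfrak{D}_\alpha(\tilde{\rho})$ exactly as in the closing lines of the proof of Proposition~\ref{continuity bound}. In the first case, equivalent to $\mathcal{Q}_\alpha(\rho)<\mathcal{Q}_\alpha(\tilde{\rho})$, I would use $\mathcal{Q}_\alpha(\tilde{\rho})\leq\mathcal{Q}_\alpha(\rho)+\varepsilon^\alpha$ together with $\log(1+x)\leq-\log(1-x)$ for $0\leq x\leq 1$; in the opposite case I would use $\mathcal{Q}_\alpha(\tilde{\rho})\geq\mathcal{Q}_\alpha(\rho)-\varepsilon^\alpha$ directly. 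The hypothesis $\varepsilon\leq\mathcal{Q}_\alpha(\rho)^{1/\alpha}$, i.e.\ $\varepsilon^\alpha\leq\mathcal{Q}_\alpha(\rho)$, guarantees $\varepsilon^\alpha/\mathcal{Q}_\alpha(\rho)\in[0,1]$ so that all logarithms are well defined and the stated bound~\eqref{continuity monotones_eq} follows.

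The only genuinely delicate point, and the reason for routing through $\mathcal{Q}_\alpha$ rather than taking an infimum of per-state bounds, is that the continuity estimate of Proposition~\ref{continuity bound} requires the smoothing radius to satisfy $\varepsilon\leq\widetilde{Q}_\alpha(\rho\|\sigma)^{1/\alpha}$, a condition that may fail for a suboptimal $\sigma$ where $\widetilde{Q}_\alpha(\rho\|\sigma)<\mathcal{Q}_\alpha(\rho)$. The max-representation of $\mathcal{Q}_\alpha$ circumvents this: it lets me invoke~\eqref{basic continuity}, which holds unconditionally for every $\sigma$, and defer the single use of the radius condition to the final logarithmic step, where only $\mathcal{Q}_\alpha(\rho)$ appears in the denominator.
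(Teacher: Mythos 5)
Your proposal is correct and follows essentially the same route as the paper: it uses the basic inequality $\widetilde{Q}_\alpha(\rho\|\sigma) \leq \widetilde{Q}_\alpha(\tilde{\rho}\|\sigma) + \varepsilon^\alpha$ evaluated at an optimizer of $\mathcal{Q}_\alpha$ (the paper picks the optimizer for $\tilde{\rho}$, you pick the one for $\rho$, but both directions are obtained either way), bounds the result by the maximum over free states to get $|\mathcal{Q}_\alpha(\rho)-\mathcal{Q}_\alpha(\tilde{\rho})|\leq\varepsilon^\alpha$, and then repeats the logarithmic case analysis from Proposition~\ref{continuity bound}. Your closing remark about why one cannot simply take an infimum of the per-state bounds of Proposition~\ref{continuity bound} (the radius condition may fail at suboptimal $\sigma$) is a valid and worthwhile observation, but it does not change the substance of the argument.
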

\begin{proof}
We have
\begin{align}
&\mathcal{Q}_\alpha(\tilde{\rho}) = \widetilde{Q}_\alpha(\tilde{\rho} \|  \sigma^*_{\tilde{\rho}}) \leq \widetilde{Q}_\alpha(\rho \|  \sigma^*_{\tilde{\rho}}) +\varepsilon^\alpha \leq \mathcal{Q}_\alpha(\rho)  + \varepsilon^\alpha \, ,
\end{align}
where we used inequality~\eqref{basic continuity} and introduced the optimiser $\sigma^*_{\tilde{\rho}}$ of $\mathcal{Q}_\alpha(\tilde{\rho})$. Note that the above inequality is analogous to~\eqref{basic continuity} and holds also if we exchange $\rho$ and $\tilde{\rho}$.  We therefore follow the same steps of Proposition~\ref{continuity bound} and we obtain~\eqref{continuity monotones_eq}.
\end{proof}

\section{Auxiliary results}
In the main text we introduced correlated catalytic transformations.
In~\cite{Sagawa} the authors introduced two different parameters for both the error on the output state of the system and the correlations between the system and the catalyst after the transformations. We show that the two definitions are equivalent, indeed
\begin{lemma}
\label{equivalence}
The following two statements are equivalent
\begin{enumerate}[ {(}1{)} ]
\item For any $\varepsilon > 0$ there exists a free operation $\mathcal{N}$ and a catalyst state $\nu$ such that $\mathcal{N}(\rho \otimes \nu) = \tau$, $\textup{Tr}_{\mathcal{H}}[\tau] = \nu$ and $P(\rho' \otimes \nu, \tau)< \varepsilon$.
\item For any $\varepsilon, \delta > 0$ there exists a free operation $\mathcal{N}$ and a catalyst state $\nu$ such that $\mathcal{N}(\rho \otimes \nu) = \tau$, $\textup{Tr}_{\mathcal{H}}[\tau] = \nu$, $d(\rho'', \rho')< \varepsilon$ and $D(\tau \| \rho'' \otimes \nu) < \delta$ where $\rho'' = \textup{Tr}_{\mathcal{H'}}[\tau]$.
\end{enumerate}
\end{lemma}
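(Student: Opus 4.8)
The plan is to prove the two implications separately, exploiting that both statements encode the same two pieces of data---``the marginal of $\tau$ on $\mathcal{H}$ is close to $\rho'$'' and ``the correlations between $\mathcal{H}$ and $\mathcal{H}'$ in $\tau$ are small''---but packaged in different distance measures. The organising observation is that, since $\rho'' = \mathrm{Tr}_{\mathcal{H}'}[\tau]$ and $\nu = \mathrm{Tr}_{\mathcal{H}}[\tau]$ are exactly the two marginals of $\tau$, the quantity $D(\tau \| \rho'' \otimes \nu)$ is precisely the mutual information $I(\mathcal{H}:\mathcal{H}')_\tau$ across the system--catalyst cut. This reinterpretation is what makes the relative-entropy condition tractable.

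For the implication $(2)\Rightarrow(1)$, suppose $d(\rho'',\rho') < \varepsilon_0$ and $D(\tau \| \rho'' \otimes \nu) < \delta_0$ for parameters to be fixed. I would bound the target purified distance by the triangle inequality,
\[
P(\rho' \otimes \nu, \tau) \le P(\rho' \otimes \nu, \rho'' \otimes \nu) + P(\rho'' \otimes \nu, \tau).
\]
Since the fidelity is multiplicative under tensor products, the first term equals $P(\rho', \rho'')$, which by the Fuchs--van de Graaf inequalities is at most $\sqrt{2\,d(\rho',\rho'')} < \sqrt{2\varepsilon_0}$. For the second term I would apply Pinsker's inequality to get $d(\tau, \rho'' \otimes \nu) \le \sqrt{D(\tau \| \rho'' \otimes \nu)/2} < \sqrt{\delta_0/2}$, and then Fuchs--van de Graaf again to obtain $P(\rho'' \otimes \nu, \tau) \le \sqrt{2\,d(\tau, \rho'' \otimes \nu)} < (2\delta_0)^{1/4}$. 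Both estimates are dimension-independent, so choosing, e.g., $\varepsilon_0 \le \varepsilon^2/8$ and $\delta_0 \le \varepsilon^4/32$ yields $P(\rho'\otimes\nu,\tau) < \varepsilon$, which is exactly statement $(1)$.

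For the implication $(1)\Rightarrow(2)$, suppose $P(\rho' \otimes \nu, \tau) < \varepsilon_0$. The trace-distance requirement is immediate from data processing under the partial trace: $d(\rho'',\rho') = d(\mathrm{Tr}_{\mathcal{H}'}[\tau], \mathrm{Tr}_{\mathcal{H}'}[\rho'\otimes\nu]) \le d(\tau, \rho'\otimes\nu) \le P(\tau,\rho'\otimes\nu) < \varepsilon_0$, so taking $\varepsilon_0 \le \varepsilon$ suffices there. The relative-entropy requirement is where the work lies: I must show $I(\mathcal{H}:\mathcal{H}')_\tau = D(\tau \| \rho'' \otimes \nu)$ is small. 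Now $\rho' \otimes \nu$ is a product state with vanishing mutual information, and $\tau$ is $\varepsilon_0$-close to it, so I would invoke a continuity bound of Alicki--Fannes--Winter type. Writing $I = S(\mathcal{H}) - S(\mathcal{H}\mid\mathcal{H}')$ with the \emph{system} in the outer position, the bound can be anchored to $d_A = \dim\mathcal{H} = d$ rather than the catalyst dimension, giving
\[
\bigl|\, I(\mathcal{H}:\mathcal{H}')_\tau - I(\mathcal{H}:\mathcal{H}')_{\rho'\otimes\nu} \,\bigr| \le c\,\varepsilon_0 \log d + g(\varepsilon_0),
\]
with $g(\varepsilon_0)\to 0$ as $\varepsilon_0\to 0$. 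Since $I(\mathcal{H}:\mathcal{H}')_{\rho'\otimes\nu}=0$, choosing $\varepsilon_0$ small enough (depending on the fixed $d$ and on $\delta$) forces $D(\tau \| \rho'' \otimes \nu) < \delta$, establishing $(2)$.

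The main obstacle is precisely this relative-entropy bound in $(1)\Rightarrow(2)$: relative entropy is not continuous in general, and the catalyst dimension is unbounded, which at first sight rules out any uniform estimate. The resolution is the identification of $D(\tau\|\rho''\otimes\nu)$ with the mutual information across the system--catalyst cut, combined with the fact that the relevant continuity bound depends only on the \emph{fixed} system dimension $d$ and not on the size of the catalyst. The reverse direction is comparatively routine, resting only on the dimension-free Pinsker and Fuchs--van de Graaf inequalities together with the multiplicativity of fidelity under tensor products.
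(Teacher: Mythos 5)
Your proposal is correct and follows essentially the same route as the paper: the same data-processing and Fuchs--van de Graaf/Pinsker conversions for the easy direction, and for $(1)\Rightarrow(2)$ the same key identification of $D(\tau\|\rho''\otimes\nu)$ with the mutual information across the system--catalyst cut, controlled by an Alicki--Fannes--Winter continuity bound anchored to the fixed system dimension rather than the unbounded catalyst dimension. The only cosmetic difference is that the paper compares $\tau$ with $\rho''\otimes\nu$ (so the marginal entropies $H(S)$ cancel exactly and only conditional-entropy continuity is needed), whereas you compare $\tau$ with $\rho'\otimes\nu$ and absorb the marginal-entropy difference into the continuity estimate; both are valid.
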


\begin{proof}
We first prove that $\mathit{(1)} \!\! \implies \!\! \mathit{(2)}$. The bound on the trace distance $d(\rho'', \rho')$ follows from data-processing and the relationship $d(\rho, \sigma) \leq P(\rho, \sigma)$ between trace distance and purified distance that holds for any two quantum states $\rho$ and $\sigma$. We get
\begin{align}
d(\rho', \rho'') \leq d(\rho' \otimes \nu, \tau) \leq P(\rho' \otimes \nu, \tau) < \varepsilon\,. 
\end{align}
To bound the correlations we first bound the trace distance $d(\rho'' \otimes \nu, \tau)$ using triangular inequality, namely
\begin{align}
d(\rho'' \otimes \nu, \tau) \leq d(\rho'' \otimes \nu, \rho' \otimes \nu) + d(\rho' \otimes \nu, \tau) < 2 \varepsilon \,.
\end{align}
Then we can bound the mutual information using continuity of the conditional entropy. 
The mutual information between the system S and the catalyst C in the state $\tau$ is defined as $I(S:C)_{\tau} := D(\tau \| \rho'' \otimes \nu)$. We can rewrite
\begin{align}
I(S:C)_{\tau}=H(S)_{\tau}-H(S|C)_{\tau} \, , 
\end{align}
where $H(S)$ is the von Neumann entropy of the system $S$ and $H(S|C)$ is the conditional entropy of $S$ given $C$.
The conditional entropy is continuous. In particular, if $\| \rho-\sigma\|_1 \leq \varepsilon$ then it follows $|H(S|C)_{\rho}-H(S|C)_{\sigma}| \leq 4 \varepsilon \log{d_{\mathcal{H}}} +2 h(\varepsilon)$ where $d_{\mathcal{H}}$ is the dimension of the system $S$~\cite{Winter_continuity}. 
Since conditioning reduces the entropy $H(S|C)_{\rho'' \otimes \nu}= H(S)_{\rho''} \geq H(S|C)_{\tau}$ we obtain
\begin{align}
I(S:C)_{\tau} &\leq H(S)_{\tau}-(H(S|C)_{\rho'' \otimes \nu}-16 \varepsilon\log{d_\mathcal{H}}-2h(4\varepsilon))\\
&= 16 \varepsilon \log{d_{\mathcal{H}}} + 2h(4\varepsilon)\, , 
\end{align}
since $H(S)_{\tau}=H(S|C)_{\rho'' \otimes \nu}$ have the same marginal. 

For the reverse implication $\mathit{(2)} \!\!  \implies \!\! \mathit{(1)}$ we use that $P(\rho, \sigma) \leq \sqrt{2d(\rho, \sigma)}$, triangular inequality and quantum Pinsker's inequality $D(\rho \| \sigma) \geq \frac{1}{2 \ln{2}} \| \rho -\sigma \|_1^2$~(see ,e.g.,[\citenum{Watrous}, Theorem 5.15]). We find
\begin{align}
P(\rho' \otimes \nu, \tau) &\leq \sqrt{2 d(\rho' \otimes \nu, \tau)} 
\leq \sqrt{2 (d(\rho' \otimes \nu, \rho'' \otimes \nu) + d(\rho'' \otimes \nu, \tau)) }
 < \sqrt{2\left( \varepsilon + \sqrt{\frac{\ln{2}}{2} \delta}\right)}\,.
\end{align}
\end{proof}
We remark that our definition differs from the one given in~\cite{Wilming} where high correlations in the output state between the system and the catalyst are still allowed. 

In general catalytic state transformation between any two states $\rho$ and $\rho'$ is not possible. 
Indeed, it is a straightforward fact that (see also \cite{Brandao,Sagawa,Takagi})

\begin{lemma}
\label{Basic lemma}
Let $\rho$ and $\rho'$ be two states and $\mathfrak{R}$ a resource monotone. If any of the following statements hold
\begin{align}
&\mathit{(1)} \,\,\, \rho' \in \setFO(\rho) \notag\\
&\mathit{(2)} \,\,\, \rho' \in \overline{\setFO(\rho)} \,\,\, \text{and} \,\,\, \mathfrak{R} \,\, \text{is lower semicontinuous} \notag\\
&\mathit{(3)} \,\,\, \rho' \in \setCO(\rho) \,\,\, \text{and} \,\,\, \mathfrak{R} \,\, \text{is tensor product additive} \notag\\
&\mathit{(4)} \,\,\, \rho' \in \overline{\setCO(\rho)}  \, \, \,  \text{and} \,\,\, \mathfrak{R} \,\,  \text{is tensor product additive and lower semicontinuous}\notag \\
&\mathit{(5)} \,\,\, \rho' \in \setCCO(\rho) \, \, \, \text{and} \,\,\, \mathfrak{R} \,\,  \text{is superadditive and tensor product additive} \notag\\
&\mathit{(6)} \,\,\, \rho' \in \overline{\setCCO(\rho)} \, \, \, \text{and} \,\,\, \mathfrak{R} \text{ is superadditive, tensor product additive and lower semicontinuous} \notag
\end{align}
than we must have $\mathfrak{R}(\rho) \geq \mathfrak{R}(\rho')$.
\end{lemma}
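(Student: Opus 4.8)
The plan is to isolate three \emph{core} cases, namely (1), (3) and (5) for the non-closed sets $\setFO$, $\setCO$ and $\setCCO$, and then to obtain (2), (4) and (6) from the respective core case by a single uniform limiting argument that invokes lower semicontinuity. The only engine driving the core cases is the defining property of a resource monotone, $\mathfrak{R}(\mathcal{E}(\omega)) \leq \mathfrak{R}(\omega)$ for every free operation $\mathcal{E}$; tensor-additivity and superadditivity will serve only to strip the catalyst contribution from both sides of the resulting inequality.

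For the core cases I would proceed as follows. In case (1), unfolding the definition of $\setFO(\rho)$ gives a free operation $\mathcal{E}$ with $\mathcal{E}(\rho) = \rho'$, so monotonicity immediately yields $\mathfrak{R}(\rho) \geq \mathfrak{R}(\rho')$. In case (3) the definition of $\setCO(\rho)$ supplies a catalyst $\nu$ and a free operation with $\mathcal{E}(\rho \otimes \nu) = \rho' \otimes \nu$; applying monotonicity and then tensor-additivity to both sides gives $\mathfrak{R}(\rho) + \mathfrak{R}(\nu) \geq \mathfrak{R}(\rho') + \mathfrak{R}(\nu)$, from which $\mathfrak{R}(\nu)$ cancels. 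In case (5) the definition of $\setCCO(\rho)$ gives $\nu$ and $\mathcal{E}$ with $\mathcal{E}(\rho \otimes \nu) = \tau$, where $\tau$ has marginals $\rho'$ and $\nu$; here monotonicity gives $\mathfrak{R}(\rho \otimes \nu) \geq \mathfrak{R}(\tau)$, superadditivity applied to $\tau$ gives $\mathfrak{R}(\tau) \geq \mathfrak{R}(\rho') + \mathfrak{R}(\nu)$, and tensor-additivity rewrites the left-hand side as $\mathfrak{R}(\rho) + \mathfrak{R}(\nu)$, so once more $\mathfrak{R}(\nu)$ cancels. The role of the precise marginal condition in the definition of $\setCCO$ is exactly to make superadditivity applicable.

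To reach the closed cases (2), (4) and (6), I would take any sequence $\rho'_n$ in the corresponding non-closed set with $\rho'_n \to \rho'$, which exists by definition of the closure. The matching core case gives $\mathfrak{R}(\rho'_n) \leq \mathfrak{R}(\rho)$ for every $n$, and lower semicontinuity of $\mathfrak{R}$ then yields $\mathfrak{R}(\rho') \leq \liminf_n \mathfrak{R}(\rho'_n) \leq \mathfrak{R}(\rho)$. This single step covers all three closures at once, given the respective core case.

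The main point requiring care is the cancellation of $\mathfrak{R}(\nu)$ in cases (3) and (5): since $\mathfrak{R}$ takes values in $[0, +\infty]$, the passage from $\mathfrak{R}(\rho) + \mathfrak{R}(\nu) \geq \mathfrak{R}(\rho') + \mathfrak{R}(\nu)$ to $\mathfrak{R}(\rho) \geq \mathfrak{R}(\rho')$ is legitimate only when $\mathfrak{R}(\nu) < \infty$, as otherwise both sides are merely $\infty \geq \infty$ and carry no information. I would therefore restrict attention to catalysts of finite resource value, which is automatic for the monotones $\mathfrak{D}_\alpha$ relevant to this work: the free sets considered contain a full-rank state, so for $\alpha \in [1/2,1)$ the sandwiched divergence $\widetilde{D}_\alpha(\nu \| \sigma)$ is finite and hence so is $\mathfrak{D}_\alpha(\nu)$. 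Apart from this finiteness caveat, the argument is simply a chain of the three structural properties and is routine.
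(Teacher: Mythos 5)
Your proof is correct and follows essentially the same route as the paper: monotonicity plus tensor-additivity to cancel the catalyst contribution, superadditivity applied to the joint output state $\tau$ in the correlated case, and lower semicontinuity along an approximating sequence to pass to the closures. The only differences are organizational (the paper proves cases (4) and (6) directly with the approximating state inline and notes that (3) and (5) follow similarly), and your explicit caveat that cancelling $\mathfrak{R}(\nu)$ requires $\mathfrak{R}(\nu) < \infty$ is a point of care that the paper's own proof silently assumes.
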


\begin{proof}
The statements $\mathit{(1)}$ and $\mathit{(2)}$ are trivial. We prove only $\mathit{(4)}$ and $\mathit{(6)}$ since the proofs for $\mathit{(3)}$ and $\mathit{(5)}$ follow similarly. 
If $\rho' \in \overline{\setCO(\rho)}$, using tensor product additivity and monotonicity under free operations
\begin{align}
 \mathfrak{R}(\rho) + \mathfrak{R}( \nu)  = \mathfrak{R}(\rho \otimes \nu) \geq \mathfrak{R}(\rho'_\varepsilon \otimes \nu) =  \mathfrak{R}(\rho'_\varepsilon) + \mathfrak{R}(\nu) \, . 
\end{align}
where $\rho'_\varepsilon$ is a state $\varepsilon$-close to $\rho'$. 
Since $\rho'$ and $\rho'_\varepsilon$ are arbitrarily close and $\mathfrak{R}$ is lower semicontinuous, the above relation implies $\mathfrak{R}(\rho) \geq \mathfrak{R}(\rho')$.

If $\rho' \in \overline{\setCCO(\rho)}$, using tensor product additivity, monotonicity under free operations and superadditivity 
\begin{align}
 \mathfrak{R}(\rho) + \mathfrak{R}( \nu)  = \mathfrak{R}(\rho \otimes \nu) \geq \mathfrak{R}(\tau) \geq  \mathfrak{R}(\text{Tr}_C[\tau]) + \mathfrak{R}(\nu) \, . 
\end{align}
Since $\rho'$ and $\text{Tr}_\mathcal{H'}[\tau]$ are arbitrarily close and $\mathfrak{R}$ is lower semicontinuous, the above relation implies $\mathfrak{R}(\rho) \geq \mathfrak{R}(\rho')$. 
\end{proof}
Note that for the proof to hold in the cases $\mathit{(5)}$ and $\mathit{(6)}$ we do not need to assume that $\rho'$ can be achieved with arbitrarily small correlations. 

\bigskip
\textit{Remark} If $\mathfrak{R} =\mathfrak{D}_\alpha$ with $\alpha \in [1/2,1)$, if $\rho' \in \overline{\setCO(\rho)}$ and $\mathfrak{D}_\alpha$ is additive for the state $\rho'$, then we must have $\mathfrak{D}_\alpha(\rho) \geq \mathfrak{D}_\alpha(\rho')$. Indeed, following the same steps of the proof of Proposition~\ref{continuity bound}, if $d(\rho, \tilde{\rho}) \leq \varepsilon$, we obtain
\begin{align}
\widetilde{Q}_\alpha(\rho \otimes \nu \| \sigma) \leq \widetilde{Q}_\alpha (\tilde{\rho} \otimes \nu \| \sigma) + \varepsilon^\alpha \widetilde{Q}_\alpha(P'\otimes \nu \| \sigma) \leq \widetilde{Q}_\alpha (\tilde{\rho} \otimes \nu \| \sigma) + \varepsilon^\alpha \widetilde{Q}_\alpha(\nu \| \text{Tr}_{\mathcal{H}}(\sigma)) 
\end{align}
where in the last inequality we used the data-processing inequality under partial trace. This implies that $|\mathfrak{D}_\alpha(\tilde{\rho} \otimes \nu) - \mathfrak{D}_\alpha(\rho \otimes \nu)| \leq \frac{1}{\alpha-1} \log{\left(1 - \frac{\varepsilon^\alpha}{\mathcal{Q}_\alpha(\rho)} \right)}$. Then the chain of inequalities
\begin{align}
\mathfrak{D}_\alpha(\rho) + \mathfrak{D}_\alpha(\nu) & \geq  \mathfrak{D}_\alpha(\rho \otimes \nu)\\
& \geq \mathfrak{D}_\alpha(\rho'_{\epsilon} \otimes \nu) \\
& \geq \mathfrak{D}_\alpha(\rho' \otimes \nu) - \frac{1}{\alpha-1}\log{\left(1-\frac{\varepsilon^\alpha}{\mathcal{Q}_\alpha(\rho')}\right)} \\
&= \mathfrak{D}_\alpha(\rho') + \mathfrak{D}_\alpha(\nu) - \frac{1}{\alpha-1}\log{\left(1-\frac{\varepsilon^\alpha}{\mathcal{Q}_\alpha(\rho')}\right)} \, ,
\end{align}
which holds for any $\varepsilon$, implies that $\mathfrak{D}_\alpha(\rho) \geq \mathfrak{D}_\alpha(\rho')$. 

\section{Proof of the main Theorem}
\label{Proof of the main Theorem}
In this section we provide the proof of our main theorem. In the following we say that $ \mathfrak{D}_\alpha$ is additive for the state $\rho$ if $\mathfrak{D}_\alpha(\rho \otimes \nu) = \mathfrak{D}_\alpha(\rho) + \mathfrak{D}_\alpha(\nu)$ for any state $\nu$ of the  catalyst.
\begin{theorem}
\label{fidelity theorem 1}
Assume that $\rho, \rho' \in \mathcal{S}(\mathcal{H})$ and $\alpha \in [1/2,1)$ such that $\mathfrak{D}_\alpha$ is additive for the state $\rho'$ and $\mathfrak{D}_\alpha(\rho) < \mathfrak{D}_\alpha(\rho')$. Then, for any $\varepsilon$-correlated catalytic transformation with catalyst $\nu$ mapping $\rho$ into $\rho'$, we have
\begin{align}
\mathcal{Q}_\alpha(\nu) \leq \frac{\varepsilon^\alpha}{\mathcal{Q}_\alpha(\rho)-\mathcal{Q}_\alpha(\rho')}\, . 
\end{align}
Moreover we get that $\mathfrak{D}(\nu) = \Omega\left( \log{\frac{1}{\varepsilon}}\right)$ and $L \mathfrak{R}_g(\nu) = \Omega\left( \log{\frac{1}{\varepsilon}}\right)$.
\end{theorem}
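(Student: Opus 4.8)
The plan is to carry out the whole argument at the level of the function $\mathcal{Q}_\alpha(\mu) = \max_{\sigma \in \mathcal{F}} \widetilde{Q}_\alpha(\mu \| \sigma)$ (the maximisation appears because $\alpha-1<0$ turns the minimisation defining $\mathfrak{D}_\alpha$ into one over $\widetilde{Q}_\alpha$) and to convert back to $\mathfrak{D}$ only at the end. Since $x \mapsto \exp((\alpha-1)x)$ is strictly decreasing, the hypothesis $\mathfrak{D}_\alpha(\rho) < \mathfrak{D}_\alpha(\rho')$ is equivalent to $\mathcal{Q}_\alpha(\rho) > \mathcal{Q}_\alpha(\rho')$, so the denominator of the claimed bound is positive. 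First I would record two facts, both valid whenever the free states are closed under tensor products: $\mathcal{Q}_\alpha$ is supermultiplicative, $\mathcal{Q}_\alpha(\rho \otimes \nu) \geq \mathcal{Q}_\alpha(\rho)\,\mathcal{Q}_\alpha(\nu)$, obtained by choosing $\sigma = \sigma_\rho^* \otimes \sigma_\nu^*$ and using that $\widetilde{Q}_\alpha$ is multiplicative under tensor products; and, from the additivity hypothesis for $\rho'$, the exact identity $\mathcal{Q}_\alpha(\rho' \otimes \nu) = \mathcal{Q}_\alpha(\rho')\,\mathcal{Q}_\alpha(\nu)$. Note that additivity is needed only for $\rho'$, matching the hypotheses.

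The first main step is an upper bound on the smoothed quantity $\mathcal{Q}_\alpha^\varepsilon(\rho \otimes \nu) := \exp\!\big((\alpha-1)\,\mathfrak{D}_\alpha^\varepsilon(\rho \otimes \nu)\big)$. Applying the data-processing inequality of Corollary~\ref{data-processing} to the free operation $\mathcal{N}$ acting on the joint system gives $\mathfrak{D}_\alpha^\varepsilon(\rho \otimes \nu) \geq \mathfrak{D}_\alpha^\varepsilon(\tau)$, and since $P(\rho' \otimes \nu, \tau) \leq \varepsilon$ means $\rho' \otimes \nu \in B^\varepsilon(\tau)$, the maximisation over the ball in the definition of $\mathfrak{D}_\alpha^\varepsilon(\tau)$ yields $\mathfrak{D}_\alpha^\varepsilon(\tau) \geq \mathfrak{D}_\alpha(\rho' \otimes \nu)$. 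Translated into $\mathcal{Q}$-language (the monotone transform reverses the inequality) and combined with the additivity identity, this reads
\begin{align}
\mathcal{Q}_\alpha^\varepsilon(\rho \otimes \nu) \leq \mathcal{Q}_\alpha(\rho' \otimes \nu) = \mathcal{Q}_\alpha(\rho')\,\mathcal{Q}_\alpha(\nu). \nonumber
\end{align}

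The second main step is a matching lower bound $\mathcal{Q}_\alpha^\varepsilon(\rho \otimes \nu) \geq \mathcal{Q}_\alpha(\rho \otimes \nu) - \varepsilon^\alpha$. Writing $\mathcal{Q}_\alpha^\varepsilon(\mu) = \max_{\sigma}\min_{\tilde\rho \in B^\varepsilon(\mu)} \widetilde{Q}_\alpha(\tilde\rho \| \sigma)$, I would lower-bound the maximum by choosing $\sigma = \sigma^*$, the optimiser of $\mathcal{Q}_\alpha(\rho \otimes \nu)$, and then invoke the pre-logarithm form of the continuity estimate behind Proposition~\ref{continuity bound}, namely inequality~\eqref{basic continuity}: for every $\tilde\rho$ with $\Delta(\rho \otimes \nu, \tilde\rho) \leq \varepsilon$ one has $\widetilde{Q}_\alpha(\rho \otimes \nu \| \sigma^*) \leq \widetilde{Q}_\alpha(\tilde\rho \| \sigma^*) + \varepsilon^\alpha$, where the $\varepsilon^\alpha$ comes from the crude bound $\widetilde{Q}_\alpha(P \| \sigma^*) \leq 1$ on the normalised positive part $P$ of the difference. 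Because $\Delta \leq P \leq \varepsilon$ on $B^\varepsilon(\rho \otimes \nu)$, taking the minimum over $\tilde\rho$ preserves this bound. Chaining the two steps and applying supermultiplicativity gives $\mathcal{Q}_\alpha(\rho)\,\mathcal{Q}_\alpha(\nu) - \varepsilon^\alpha \leq \mathcal{Q}_\alpha(\rho')\,\mathcal{Q}_\alpha(\nu)$, and dividing by the positive gap $\mathcal{Q}_\alpha(\rho) - \mathcal{Q}_\alpha(\rho')$ yields the quantitative bound.

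Finally, writing the result as $\mathcal{Q}_\alpha(\nu) \leq C\varepsilon^\alpha$ with $C$ independent of $\varepsilon$, the identity $\mathfrak{D}_\alpha(\nu) = \frac{1}{\alpha-1}\log \mathcal{Q}_\alpha(\nu)$ gives $\mathfrak{D}_\alpha(\nu) \geq \frac{\alpha}{1-\alpha}\log\frac{1}{\varepsilon} - \text{const} = \Omega\!\big(\log\frac{1}{\varepsilon}\big)$. Since $\widetilde{D}_\beta(\cdot\|\cdot)$ is nondecreasing in $\beta$, so is $\mathfrak{D}_\beta = \min_{\sigma}\widetilde{D}_\beta(\cdot\|\sigma)$; hence $\mathfrak{D}(\nu) = \mathfrak{D}_1(\nu) \geq \mathfrak{D}_\alpha(\nu)$ and $L\mathfrak{R}_g(\nu) = \mathfrak{D}_{\max}(\nu) = \mathfrak{D}_\infty(\nu) \geq \mathfrak{D}_\alpha(\nu)$, so both inherit the $\Omega\!\big(\log\frac{1}{\varepsilon}\big)$ growth. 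I expect the conceptual heart, and the only genuinely delicate point, to be the interplay in the first step: one bound on $\mathcal{Q}_\alpha^\varepsilon(\rho\otimes\nu)$ is obtained by pushing the smoothed monotone forward onto $\rho'\otimes\nu$ via data processing, while the other traps it near $\rho\otimes\nu$ via continuity, and it is precisely the gap between $\rho$ and $\rho'$ that forces $\nu$ to be resourceful. This step leans entirely on the data-processing inequality for the smoothed sandwiched divergence (Theorem~\ref{normal data-processing}) applied with the catalyst included, and on the fact that the smoothing must range over $\substates$ for that inequality to hold when $\alpha \in [1/2,1)$; everything downstream is a short algebraic combination.
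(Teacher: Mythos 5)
Your proposal is correct and follows essentially the same route as the paper's proof: it traps the smoothed quantity $\mathfrak{D}^\varepsilon_\alpha(\rho\otimes\nu)$ between the upper bound coming from the smoothed data-processing inequality plus additivity for $\rho'$ (yielding $\mathcal{Q}^\varepsilon_\alpha(\rho\otimes\nu)\leq \mathcal{Q}_\alpha(\rho')\mathcal{Q}_\alpha(\nu)$) and the lower bound coming from the continuity estimate~\eqref{basic continuity} plus supermultiplicativity, then inverts. Working throughout with $\mathcal{Q}_\alpha$ instead of $\mathfrak{D}_\alpha$ is only a cosmetic reparametrisation (if anything it phrases the continuity step slightly more carefully, by fixing the optimiser $\sigma^*$ of $\mathcal{Q}_\alpha(\rho\otimes\nu)$ and bounding pointwise over the ball, rather than invoking an optimiser $\eta$ of the min--max smoothed monotone), and your final $\Omega\left(\log\frac{1}{\varepsilon}\right)$ conclusions use the same monotonicity chain $L\mathfrak{R}_g=\mathfrak{D}_{\max}\geq\mathfrak{D}\geq\mathfrak{D}_\alpha$ as the paper.
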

We point out that the above theorem actually holds also if we lift the restriction $\text{Tr}_{\mathcal{H}}[\tau] = \nu$ and hence we do not need the catalyst to be exactly recovered after the transformation.

\begin{figure}
\centering
{\includegraphics[width=.65\textwidth]{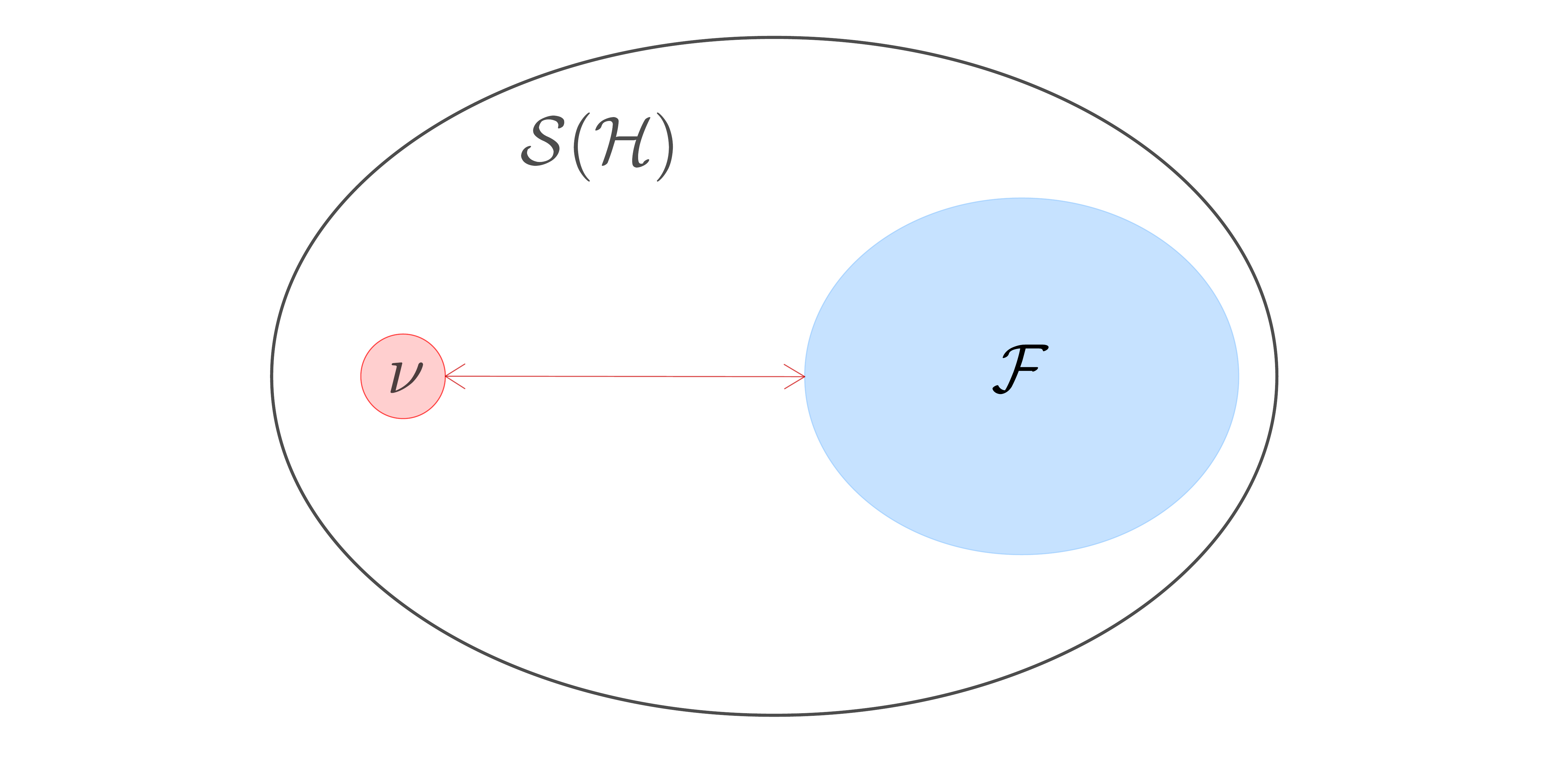}}
\caption{Intuitive geometric representation for the main result states in Theorem~\ref{fidelity theorem 1} . The distance between the catalyst and the free set must increase as the error in the correlated catalytic transformation goes to zero}
\label{fig: free_set 1}
\end{figure}

We are now ready to prove Theorem~\ref{fidelity theorem 1}. 
\begin{proof}
We have the following chain of inequalities 
\begin{align}
\mathfrak{D}_{\alpha}(\rho) + \mathfrak{D}_\alpha(\nu) \geq \mathfrak{D}_{\alpha}(\rho \otimes \nu)+f = \mathfrak{D}^\varepsilon_{\alpha}(\rho \otimes \nu) \geq \mathfrak{D}^\varepsilon_{\alpha}(\tau) \geq \mathfrak{D}_{\alpha}(\rho'  \otimes \nu) \,,
\end{align}
where $f :=  \mathfrak{D}^\varepsilon_{\alpha}(\rho \otimes \nu) - \mathfrak{D}_{\alpha}(\rho \otimes \nu)$. The first inequality follows the subadditivity of $\mathfrak{D}_\alpha$, the second inequality from Corollary \ref{data-processing} and the third one is a consequence of how we chose the smoothing in~\eqref{smoothed} for $\alpha \in [1/2,1)$. The inequality chain, together with the additivity assumption for $\rho'$, gives $f \geq \Delta \mathfrak{D}_\alpha := \mathfrak{D}_\alpha(\rho')-\mathfrak{D}_\alpha(\rho)$. 

Therefore we get
\begin{align}
\Delta \mathfrak{D}_\alpha \leq f = \mathfrak{D}^\varepsilon_{\alpha}(\rho \otimes \nu) - \mathfrak{D}_{\alpha}(\rho \otimes \nu) &\leq \frac{1}{\alpha-1}\log{\left( 1 - \frac{\varepsilon^\alpha}{\mathcal{Q}_\alpha(\rho \otimes \nu)}\right)}  \\
&\leq \frac{1}{\alpha-1}\log{\left( 1 - \frac{\varepsilon^\alpha}{\mathcal{Q}_\alpha(\rho) \mathcal{Q}_\alpha(\nu)}\right)} \, . 
\end{align}
where we used Corollary~\ref{Continuity monotones} and that the optimiser $\eta$ of $\mathfrak{D}^\varepsilon_{\alpha}(\rho \otimes \nu):= \mathfrak{D}_{\alpha}(\eta) $ satisfies $\Delta(\eta, \rho \otimes \nu) \leq P(\eta, \rho \otimes \nu) \leq \varepsilon$. The last inequality follows from $\mathcal{Q}_\alpha(\rho \otimes \nu ) \geq \mathcal{Q}_\alpha(\rho) \mathcal{Q}(\nu)$.
 
Inverting the above relation it follows that
\begin{equation}
\mathcal{Q}_\alpha(\nu) \leq \frac{ \varepsilon^\alpha}{\mathcal{Q}_\alpha(\rho)-\mathcal{Q}_\alpha(\rho')} \, .
\end{equation}
Note that in the domain for which the argument of the logarithm is negative, i.e. $\varepsilon^\alpha \geq \mathcal{Q}_\alpha(\rho)\mathcal{Q}(\nu)$, the main theorem already follows immediately. Finally, because of the inequality 
\begin{align}
L\mathfrak{R}_g(\rho) = \mathfrak{D}_{\max}(\rho)  \geq  \mathfrak{D}(\rho) \geq \mathfrak{D}_{\alpha}(\rho) \,,
\end{align}
which holds for any state $\rho$ and for $\alpha \in [1/2,1)$, we also obtain that the generalised robustness of the catalyst state (and hence the robustness) and the monotone $\mathfrak{D}$ must both diverge as the the error vanishes. In particular we get that $\mathfrak{D}(\nu) = \Omega\left( \log{\frac{1}{\varepsilon}}\right)$ and $L \mathfrak{R}_g(\nu) = \Omega\left( \log{\frac{1}{\varepsilon}}\right)$.
\end{proof}

Fig.~\ref{fig: free_set 1} shows the geometric content of~Theorem \ref{fidelity theorem 1}. As the error of the transformation $\varepsilon$ approaches zero, the distance between the catalyst state $\nu$ and the set of free states $\mathcal{F}$ must increase, meaning that we need to prepare a more resourceful catalyst. In the limit $\varepsilon=0$ the catalyst state $\nu$ must be orthogonal to the set of free states, and hence to perform the transformation we would need an infinitely resourceful catalyst \footnote{We point out that Theorem 3 as stated in~\cite{Sagawa} is incorrect since if we could choose any two catalyst states, then we could just select two orthogonal states and perform a measurement and prepare map conditioned on the state of the catalyst and thus accomplish an exact transformation between any two pairs of states. The theorem has been fixed by the authors in an Erratum by adding an additional condition on the support of the catalysts.}.

In Appendix~\ref{tighter} we prove a tighter bound for the $\alpha=1/2$ case and in the next sections we discuss the consequences of Theorem~\ref{fidelity theorem 1} for resource theory of athermality, entanglement theory and resource theory of coherence.

\medskip

\textit{Remark} $\,$ Note that in classical resource theory of athermality $\widetilde{D}_\alpha(\gamma \| p)$ for $\alpha \in (1/2,1)$ is equal to $\frac{\alpha}{\alpha-1}\widetilde{D}_{1-\alpha}(p \| \gamma)$ where $1-\alpha \in (0,1/2)$. By repating the same steps above, the latter quantity, if smoothed, still satisfies the data-procesing inequality and therefore the main theorem holds also for $\widetilde{D}_\alpha$ in the range $\alpha \in (1/2,1)$ with the arguments exchanged.

\section{Resource theory of athermality}
\label{Resource theory of athermality}
In resource theory of athermality the Gibbs state $\gamma = e^{-\beta H} /Z$ is the only free state in the theory, where $\beta$ is the inverse temperature, $H$ the Hamiltonian of the system and $Z$ is the normalisation factor (partition function).
We remark that our results apply to both resource theory of athermality with thermal operations and resource theory of athermality under Gibbs preserving maps since in both resource theories free operations keep the Gibbs state invariant~\cite{faist2015gibbs,lostaglio2018elementary}. However since for the former we do not know the resource monotone characterizing the set $\overline{\setCCO}$ in the quantum case where the two theories are different, for our considerations we will mainly focus on the latter where the relevant resource monotone is the non-equilibrium free energy~\cite{muller2018correlating,Sagawa}.

Given that the system and the catalyst are non-interacting,   the total Gibbs state reduces to the tensor product $\gamma_{SC} = \gamma_S \otimes \gamma_C$. Therefore $ \widetilde{D}_\alpha(\rho \otimes \sigma \|  \gamma_{S} \otimes  \gamma_{C}) = \widetilde{D}_\alpha(\rho , \gamma_{S}) + \widetilde{D}_\alpha(\sigma, \gamma_C) $ is additive~\cite{Tomamichel} and hence Theorem~\ref{fidelity theorem 1} holds for this specific case. 

In~\cite{Sagawa} the authors established that $\rho$ is transformable into $\rho'$ by a correlated catalytic transformation if and only if the free energies are ordered, namely $ \mathfrak{D}(\rho)\geq  \mathfrak{D}(\rho')$.
Therefore, for the bound \eqref{fidelity theorem 1} to be meaningful, there must exist states that both satisfy $\mathfrak{D}_\alpha(\rho) < \mathfrak{D}_\alpha(\rho')$ for $\alpha \in [1/2,1)$ and $ \mathfrak{D}(\rho) \geq  \mathfrak{D}(\rho')$. In the following we find states that both satisfy $\mathfrak{D}(\rho) \geq \mathfrak{D}(\rho')$ and $\mathfrak{D}_{1/2}(\rho)<\mathfrak{D}_{1/2}(\rho')$ (or $\mathfrak{F}(\rho) > \mathfrak{F}(\rho')$). States with both these properties can be found in any dimensions. 
\subsection{Three-level system} 
In a three dimensional system it is possible to build classical states that both satisfy $\mathfrak{F}(\rho) > \mathfrak{F}(\rho')$ and $ \mathfrak{D}(\rho) \geq  \mathfrak{D}(\rho')$ and such that the fidelity gap $\sqrt{\mathfrak{F}(\rho)} - \sqrt{\mathfrak{F}(\rho')}$ is arbitrary close to one. Since in this section we consider only classical systems our analysis includes both resource theory of athermality with thermal operations and resource theory of athermality under Gibbs preserving maps. 
We find 
\begin{lemma}
For any $\delta > 0$, there exists a 3-level system with states $\rho, \rho' \in \mathcal{S}(\mathbb{C}^3)$ such that 
\begin{align}
 \mathfrak{D}(\rho) \geq  \mathfrak{D}(\rho') \quad \text{and} \quad \sqrt{\mathfrak{F}(\rho)} - \sqrt{\mathfrak{F}(\rho')} > 1-\delta
\end{align}
\end{lemma}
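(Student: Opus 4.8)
The plan is to give an explicit construction of classical states. First I would reduce the problem: since the Gibbs state $\gamma$ is the unique free state, $\mathfrak{F}(\rho) = F(\rho,\gamma)$ and $\mathfrak{D}(\rho) = D(\rho\|\gamma)$, and for states diagonal in the energy eigenbasis these become $\sqrt{\mathfrak{F}(p)} = \sum_i \sqrt{p_i\gamma_i}$ and $\mathfrak{D}(p) = \sum_i p_i\log(p_i/\gamma_i)$. The task is then purely about probability vectors on three outcomes, and any classical solution applies both to thermal operations and to Gibbs-preserving maps.

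The conceptual crux is the following tension. A fidelity gap near $1$ demands $\sqrt{\mathfrak{F}(\rho)}$ close to $1$ and $\sqrt{\mathfrak{F}(\rho')}$ close to $0$. Now $\sqrt{\mathfrak{F}(\rho')}$ small forces $\mathfrak{D}(\rho')$ large, because $D \geq \widetilde{D}_{1/2} = -\log F$ by monotonicity of the sandwiched divergence in $\alpha$; hence I must also make $\mathfrak{D}(\rho)$ large. But $\sqrt{\mathfrak{F}(\rho)}$ close to $1$ controls only $\widetilde{D}_{1/2}(\rho\|\gamma)$, not $\mathfrak{D}(\rho) = \widetilde{D}_1(\rho\|\gamma)$, and these can differ arbitrarily. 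The mechanism that opens the gap is to place a small probability mass on a level whose Gibbs weight is doubly small: it contributes negligibly to the fidelity but a large amount to the relative entropy.

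Concretely, I would use three levels with Gibbs weights $\gamma = (1-a-b,\, a,\, b)$, where $a$ is small and $b \ll a$, and set $p = (1-t,\, 0,\, t)$ and $p' = (0,\, 1,\, 0)$ with $t = \delta/2$. Then $\sqrt{\mathfrak{F}(p')} = \sqrt{a}$, which I make $< \delta/2$ by taking $a < \delta^2/4$, while $\mathfrak{D}(p') = \log(1/a)$ is a fixed finite number. For $p$ the fidelity is $\sqrt{(1-t)(1-a-b)} + \sqrt{tb} > 1-\delta/2$ whenever $a+b < t = \delta/2$ (which holds since $a < \delta^2/4$ and $b$ is tiny, for $\delta<2$), giving a gap exceeding $(1-\delta/2)-\delta/2 = 1-\delta$. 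The relative entropy $\mathfrak{D}(p) = (1-t)\log\frac{1-t}{1-a-b} + t\log\frac{t}{b}$ has a bounded first term and a second term of order $t\log(1/b)$ that diverges as $b \to 0$; so after fixing $\delta$, $t$, and $a$, choosing $b$ small enough (roughly $b \leq a^{2/\delta}$) secures $\mathfrak{D}(p) \geq \mathfrak{D}(p') = \log(1/a)$, which is all that is required.

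I expect the only real obstacle to be bookkeeping the parameter hierarchy: one fixes $\delta$, then $t=\delta/2$, then $a$ small enough for both the fidelity of $p'$ and the constraint $a+b<\delta/2$, and finally $b$ small enough that the single diverging term $t\log(1/b)$ dominates the fixed value $\log(1/a)$. Turning the approximate estimates into clean inequalities — lower-bounding $\mathfrak{D}(p)$ (in particular the bounded first term) and verifying $\sqrt{\mathfrak{F}(p)} > 1-\delta/2$ — is elementary once this ordering of parameters is pinned down, and all three Gibbs weights remain strictly positive so the example lives genuinely in $\mathcal{S}(\mathbb{C}^3)$.
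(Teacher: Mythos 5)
Your proof is correct, and it reaches the lemma by a genuinely more elementary route than the paper. The underlying mechanism is the same in both arguments: $\rho$ is kept almost aligned with the Gibbs state except for a small mass $t$ placed on a level whose Gibbs weight $b$ is doubly suppressed, which leaves the fidelity essentially untouched (the cross term $\sqrt{tb}$ is negligible but nonnegative) while inflating $D(\rho\|\gamma)$ through the term $t\log(t/b)$; meanwhile $\rho'$ is supported entirely off the ground state, so its fidelity with $\gamma$ is small while its relative entropy stays finite. The difference is in execution. The paper routes the computation through an embedding channel that maps the Gibbs state to the uniform distribution on a $D$-dimensional space, converts the two conditions into comparisons of Shannon and order-$1/2$ R\'enyi entropies, and then analyses the limit $D \to \infty$; this buys a physical picture (the construction requires a strongly gapped Hamiltonian, with $E_2 \sim \varepsilon E_3$ and embedding dimension $D \sim e^{\beta \Delta E}$) at the cost of asymptotic bookkeeping and of needing rational Gibbs weights for the embedding to exist. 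Your argument instead works directly with the Bhattacharyya coefficient and the Kullback--Leibler divergence of three-outcome distributions, fixes the hierarchy $\delta \to t = \delta/2 \to a < \delta^2/4 \to b$ small, and closes with finite, explicit inequalities: $\sqrt{(1-t)(1-a-b)} \geq 1-t$ holds exactly when $a+b \leq t$, the first KL term is bounded below by $(1-t)\log(1-t)$, and $t\log(1/b)$ can be made to dominate $\log(1/a)$ by taking $b$ sufficiently small. No embedding, no asymptotics, and no rationality of $\gamma$ is needed, and since every Gibbs weight is strictly positive your $\gamma$ is the Gibbs state of a legitimate three-level Hamiltonian, so the example lives in $\mathcal{S}(\mathbb{C}^3)$ as required. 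What your version forgoes is only the paper's quantitative side information about how the energy gap must scale; for the lemma as stated, your construction is complete and simpler.
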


\begin{proof}
Let us consider a three level system with Hamiltonian $H = \sum_{i=1}^3 E_i |i\rangle\langle i|$ and the following two diagonal states in the Hamiltonian eigenbasis $\rho, \rho'$ (see Fig.~\ref{fig:qutrit states}(a))

\begin{align}
\label{qutrit}
\rho = \frac{1}{Z_{\rho}} \bigg[ 1,e^{-\beta E_{2}} , \mu \bigg] \qquad \text{and} \qquad \rho' = \frac{1}{Z_{\rho'}} \bigg[ 0, e^{-\beta E_{2}} , e^{-\beta E_{3}} \bigg]\, , 
\end{align}
where $Z_{\rho} = 1+e^{-\beta E_2} + \mu $ and $Z_{\rho'} = e^{-\beta E_2} + e^{-\beta E_3} $.

We set the energy scale such that $E_1=0$ (in some unit of measurement) and therefore $\gamma_1 = 1/Z$ where $\gamma_i = e^{-\beta E_i}/Z$.
We introduce also the embedding channel~\cite{Korzekwa}.
\begin{definition}[Embedding channel]
Given a thermal distribution with rational entries $\gamma_i=D_i/D$ with $D_i,D \in \mathbb{N}$, the embedding channel $\Gamma$ maps a $d$-dimensional probability distribution $p$ to a $D$-dimensional probability distribution $\hat{p}$ as follows 
\begin{align}
\hat{p} = \Gamma(p) =\bigg[\underbrace{ \frac{p_1}{D_1},  \dots , \frac{p_1}{D_1}}_{D_1\, \text{times}}, \dots \dots, \underbrace{\frac{p_d}{D_d},  \dots, \frac{p_d}{D_d}}_{D_d\, \text{times}}\bigg]\,.
\end{align}
\end{definition}

Let us choose the Hamiltonian of the system such that $D_3 = 1$ and therefore $\gamma_3 = 1/D$.
The embedding channel maps the states \eqref{qutrit} into
\begin{align}
 \hat{\rho}  = \bigg[\underbrace{\kappa ,  \dots \dots , \kappa}_{D-1\, \text{times}}, \mu' \bigg] \quad \text{and} \quad \hat{\rho}' = \bigg[0, \dots , 0 , \underbrace{ \kappa'  \dots ,\kappa'}_{D_2+1 \, \text{times}}\bigg]\,,
\end{align}
where $\kappa =  Z/(Z_{\rho} D)$ , $\mu' = \mu/Z_{\rho}$ and $\kappa' = Z/(Z_{\rho'}D)$. 

Since the embedding channel maps probability distributions into probability distributions, we can set $\kappa = \frac{1-\mu'}{D-1}$, $\kappa' =(1/(D-1))^{1-\varepsilon}$. Moreover we choose $D_2+1 = (D-1)^{1-\varepsilon}$ which we can always satisfy for any $\varepsilon>0$ with $D_2$ integer with arbitrary accuracy as $D \rightarrow \infty$.
Then we get the following classical states
\begin{align}
\label{classical states}
\hat{\rho} = \bigg[\underbrace{ \frac{1-\mu'}{D-1},  \dots \dots, \frac{1-\mu'}{D-1}}_{D-1\, \text{times}} ,\mu' \bigg]  \qquad \text{and} \qquad
\hat{\rho}' = \bigg[ 0,  \dots, 0, \underbrace{ \left( \frac{1}{D-1} \right)^{1-\varepsilon} \hskip-15pt ,\dots, \left( \frac{1}{D-1} \right)^{1-\varepsilon} }_{(D-1)^{1-\varepsilon}\, \text{times}}  \bigg]\,.
\end{align}
The embedding channel maps the Gibbs state into the fully mixed state $\eta_D = \mathds{1} /D$. Using that $\widetilde{D}_{\alpha}(\rho \| \eta_D) = -H_\alpha(\rho) + \log{D}$, where $H_{\alpha}$ are the $\alpha$-Renyi divergences $H_\alpha = \frac{1}{1-\alpha}\log{\text{Tr}(\rho^\alpha)}$, the conditions on the fidelity $\mathfrak{F}(\rho) > \mathfrak{F}(\rho') $ and the relative entropy $\mathfrak{D}(\rho) \geq \mathfrak{D}(\rho')$ turn into 
 \begin{align}
 h_{\text{bin}}(\mu')+(1-\mu')\log{(D-1)} = H(\hat{\rho} )& \leq H(\hat{\rho} ') = (1-\varepsilon)\log{(D-1)}\\
 2\log{(\sqrt{\mu'}+\sqrt{D-1}\sqrt{1-\mu'})}= H_{\frac{1}{2}}(\hat{\rho}) &> H_{\frac{1}{2}}(\hat{\rho}')  =  (1-\varepsilon)\log{(D-1)} \, . 
 \end{align}
 Since $ h_{\text{bin}}(\mu') \leq 1$ the first condition is satisfied whenever $\mu' \geq \varepsilon + 1/\log{(D-1)}
$. Let us then fix $ \mu' = \varepsilon + 1/\log{(D-1)}$.
 Since $H_{1/2}(\hat{\rho})>\log{(D-1)}+\log{(1-\mu')}$  the second condition is satisfied for $(D-1)^\varepsilon>1/(1-\mu')$ which is always satisfied for $D$ big enough. Then, noting that the embedding channel preserves the fidelity,  asymptotically ($D \rightarrow  \infty$) the fidelities in the original space behave as 
 \begin{align}
F(\rho, \gamma) \sim (1-\varepsilon) \quad \text{and} \quad F(\rho', \gamma) \sim (1/D)^{\varepsilon}\,.
 \end{align}
 Therefore choosing $\varepsilon$ small enough we can always find $D$ big enough such that $F(\hat{\rho}, \eta_D) $ is arbitrary close to $1$ and $F(\hat{\rho}', \eta_D)$ is arbitrarily close to zero.
The condition $ D_2+1 = (D-1)^{1-\varepsilon}$ can be written for $D \gg 1$ as
\begin{align}
\gamma_2 + \gamma_3  \sim \left( \frac{1}{D} \right) ^\varepsilon \, .
\end{align}
Since $\gamma_3= 1/D$ it follows that asymptotically $\gamma_2 \sim (1/D)^{\varepsilon}$. 
Then, since the Gibbs state is normalised, $ \gamma_1 = 1 -\gamma_2 - \gamma_3 \sim 1 $ and therefore $Z \sim 1$. 
Since $ \gamma_3 = e^{-\beta E_3}/Z = 1/D$ the dimension of the embedding space $D$ scales exponentially with the gap $\Delta E := E_3 $ as $D \sim e^{\beta \Delta E}$. 
We have $ E_2 \sim \varepsilon ((1/\beta) \log{D}) $ and $ E_3 = (1/\beta) \log{D} $ and hence $E_2 \sim \varepsilon E_3$. The situation is depicted in Fig.~\ref{fig:qutrit states}(b) where the energies are measure in scale $(1/\beta) \log{D}$.
Obviously we can also find classical states with these behaviour for $D>3$ as we can always ignore the other dimensions.

\begin{figure}
\vskip-30pt
\centering
\subfloat[c][]
{\includegraphics[width=.6\textwidth]{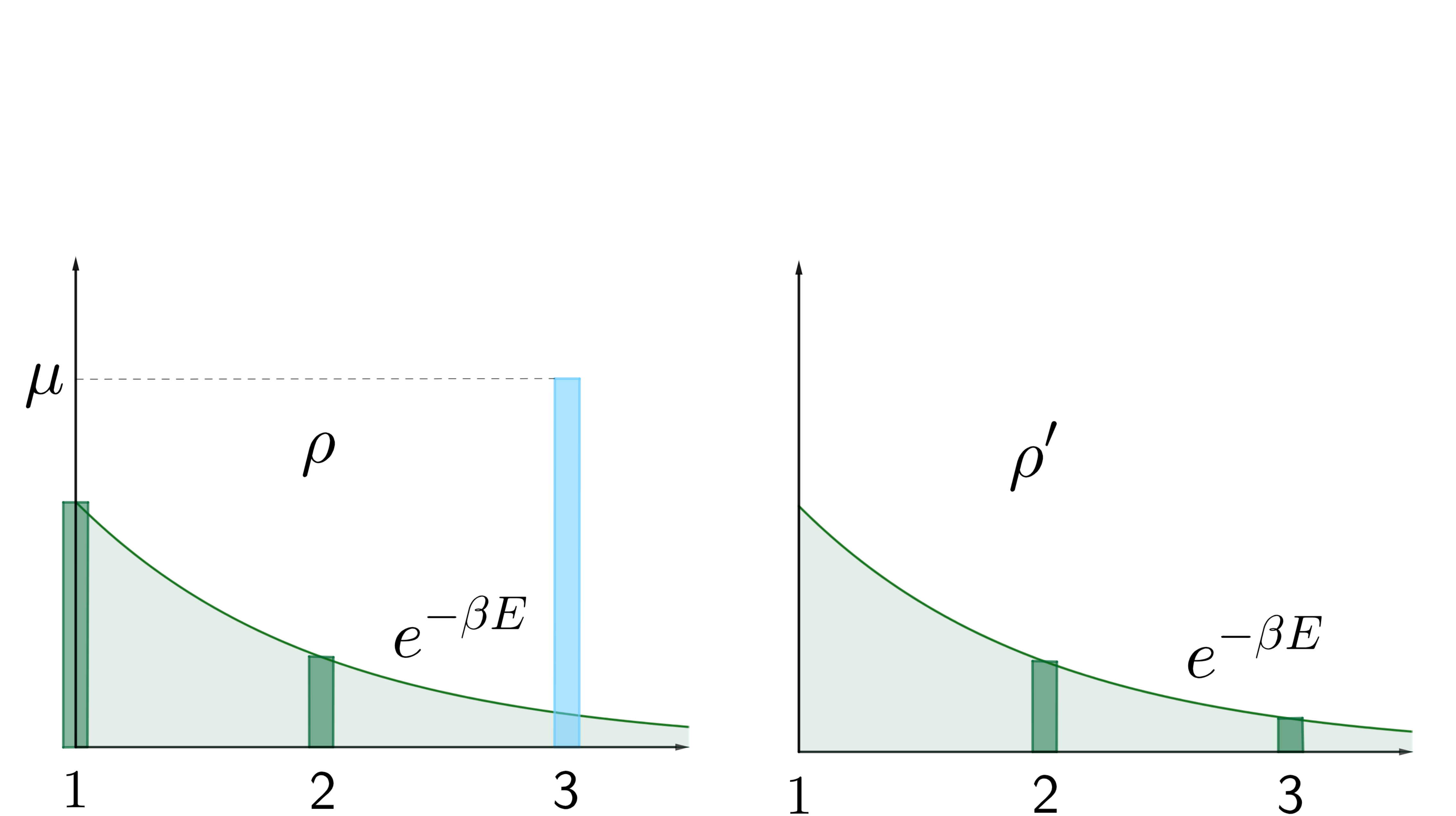}}
\subfloat[][]
{\includegraphics[width=.3\textwidth]{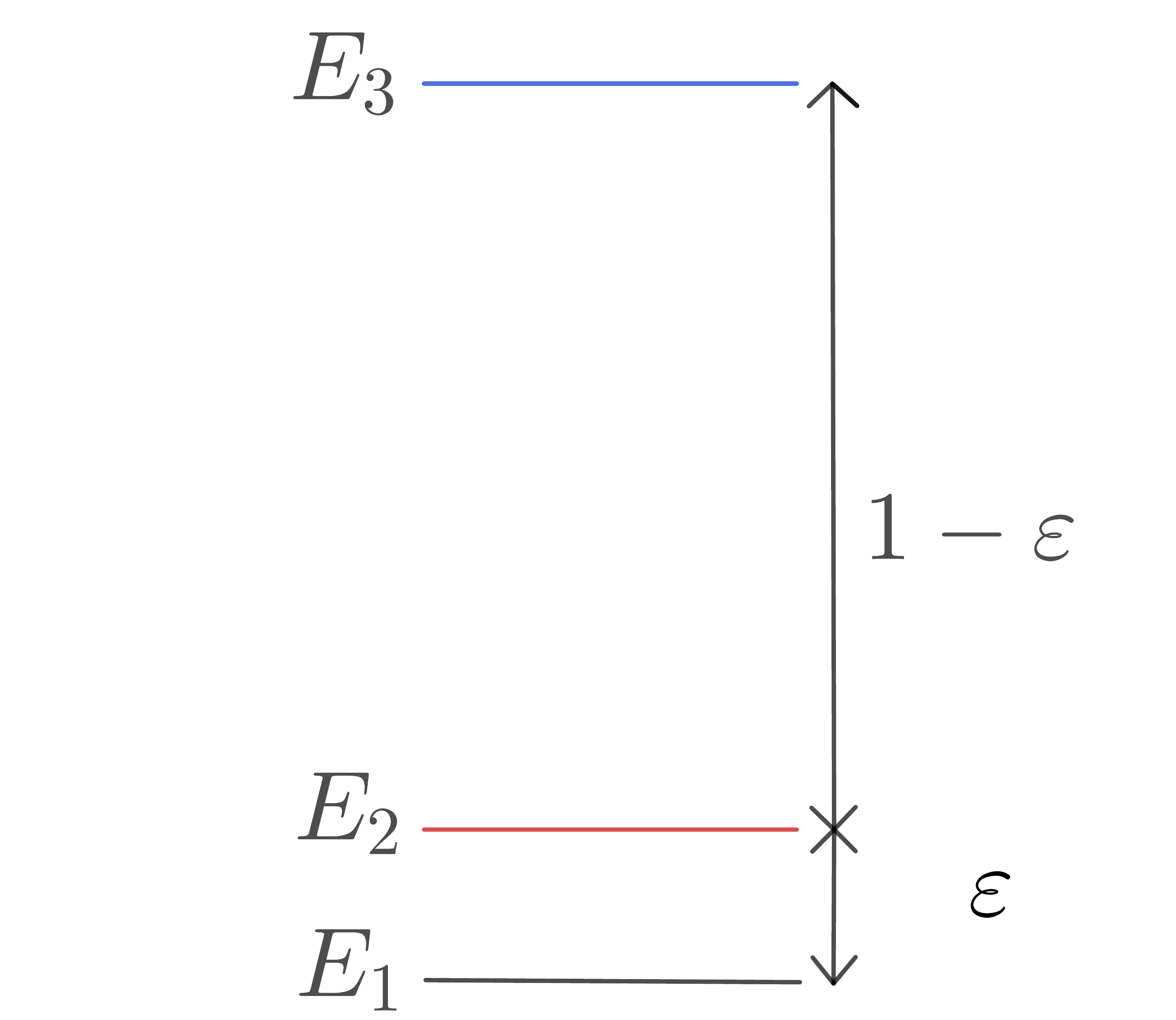}} 
\caption{(a) Shape of the unnormalised states $\rho$ and $\rho'$ in \eqref{qutrit}. The green decaying exponential is the unnormalised Gibbs state. (b) Energy levels of the 3-level system. The energies are measured in units $(1/\beta) \log{D}$.}
\label{fig:qutrit states}
\end{figure}
\end{proof}

\subsection{Qubit system}
Interestingly, even though classical states $\rho, \rho'$ that satisfy both $\mathfrak{D}(\rho) \geq \mathfrak{D}(\rho')$ and $\mathfrak{F}(\rho) > \mathfrak{F}(\rho')$ do not exist in two dimensions, quantum states that satisfy this requirement can be found. In the following we give a numerical example. 

We fix $\gamma = 0.999 |0 \rangle \langle 0| +0.001 |1 \rangle \langle 1|$ and look for such states numerically in the Bloch sphere. Fig.~\ref{qubit_states} shows the $x$-$z$ plane of the Bloch sphere where the different colors are associated with different ranges of $\mathfrak{D}(\rho)$. In the blue region $\mathfrak{D}(\rho)\leq 2$. The red and the green line correspond to the maximum and the minimum value that the fidelity gets along the line $\mathfrak{D}(\rho)=2$. We choose the input state $\rho$ and the output state $\rho'$ at the intersections between the free energy line $\mathfrak{D}(\rho)=2$ and the red and green fidelity lines, respectively. We find numerically that $\rho$ is the pure state at an angle $\theta \sim \pi/3.38$ with the vertical axis in the Bloch sphere.
The state $\rho'$ that maximize the fidelity gap along the constant relative entropy line $\mathfrak{D}(\rho)=2$ is approximately $\rho' \sim 0.713 |0 \rangle \langle 0| + 0.287 |1 \rangle \langle 1| $ and $\mathfrak{F}(\rho) - \mathfrak{F}(\rho') \sim 0.058$.
For what we discussed above, catalytic transformation  with vanishing error of these two 'hard-to-transform' states  would require an infinite free energy catalyst state $\nu$.
\begin{figure}
\centering
{\hskip-25pt \includegraphics[width=.45\textwidth]{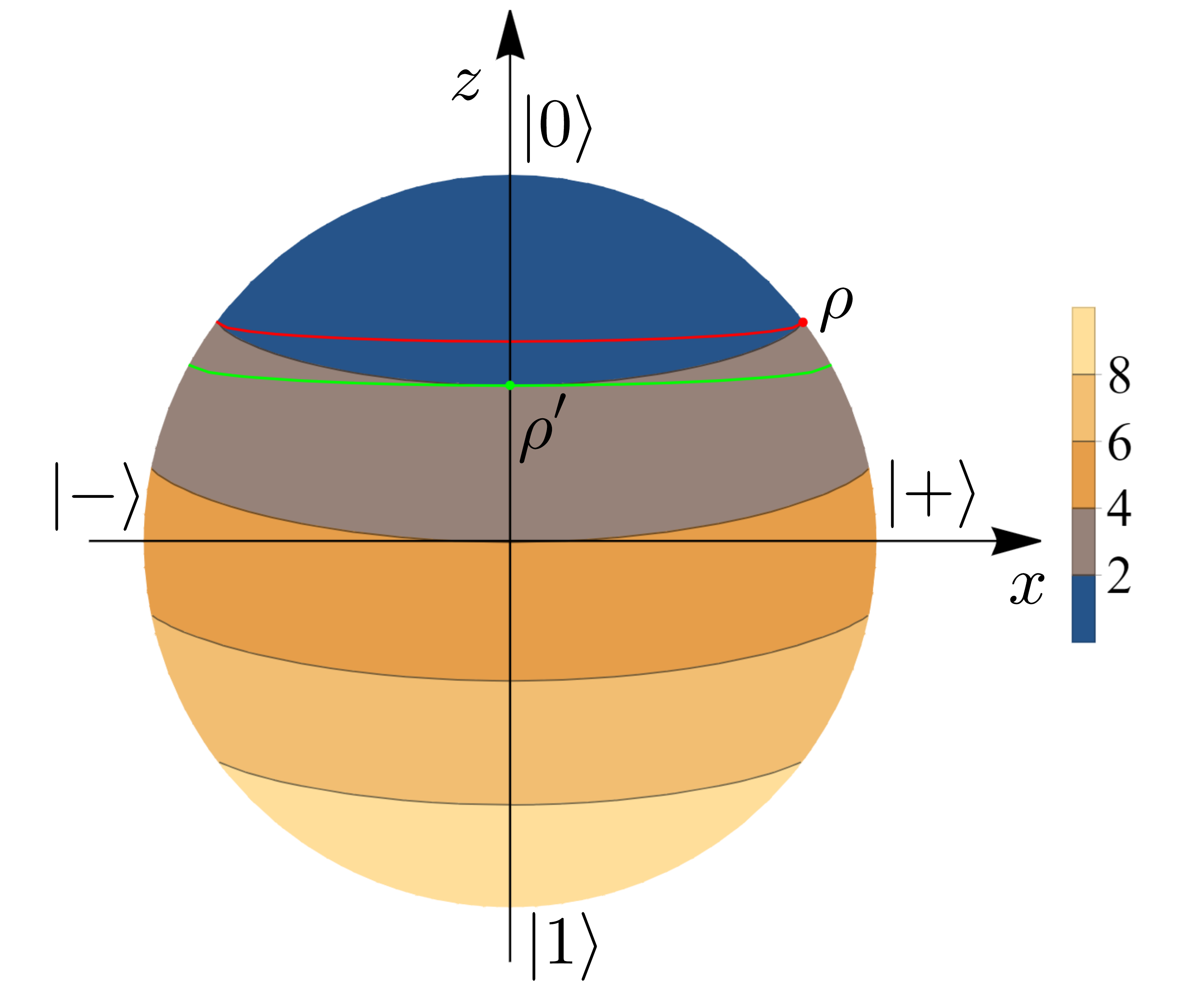}}
\caption{$x$-$z$ plane of the Bloch sphere. The different colors represent different relative entropy regions $\mathfrak{D}(\rho)$ for $\gamma = 0.999 |0 \rangle \langle 0| +0.001 |1 \rangle \langle 1|$. The red and the green line represent two different lines of constant fidelity. We choose our states $\rho, \rho'$ at the intersections between the red line and the green line with the relative entropy line $\mathfrak{D}(\rho) =2$, respectively.}
\label{qubit_states}
\end{figure}
\bigskip

From  Theorem $2$ we can immediately obtain some bounds noting by that $\mathfrak{F}(\nu) \geq F(\nu, (\min_i p_i) \mathds{1})$ $ \geq e^{-\beta \Delta E}/Z$.  We can therefore bound the dimension of the catalyst $d_\nu = \Omega(1/\varepsilon)$ if we keep the gap and the temperature constant or, if we fix the dimension of the catalyst, we get $\Delta E/(kT) = \Omega(\log{1/\varepsilon})$.
However, we identify the non-equilibrium free energy as the relevant physical quantity to be calculated for the catalyst in this setting. From Theorem~\ref{fidelity theorem 1} we get 
\begin{align}
\label{lower bound free energy}
\mathfrak{D}(\nu)= \Omega\left( \log{\frac{1}{\varepsilon}}\right) 
\, . 
\end{align}
Therefore, correlated catalytic transformation between any two states would require preparing a catalyst with an infinite amount of free energy as the error vanishes. 
However, we point out that the divergence is logarithmic and therefore does not rule out the possibility of achieving very small errors in the transformation.

\subsection{A procedure for correlated catalytic conversion}
\label{A procedure for correlated catalytic conversion}
In this section we prove that the catalyst first introduced in~\cite{duan2005multiple} and recently discussed in~\cite{Sagawa} is optimal in the sense defined above in the resource theory of athermality.
The catalyst state $\nu$ and the Gibbs state $\gamma$ are given as~\cite{Sagawa,duan2005multiple} 
\begin{align}
\label{catalyst state}
\nu = \frac{1}{n}\sum_{k=1}^n \rho^{\otimes k-1}\otimes \Xi_{n-k}\otimes |k \rangle \langle k| \qquad \gamma=\frac{1}{n}\sum_{k=1}^n \eta^{\otimes k-1}\otimes \eta'^{\otimes n-k}\otimes |k \rangle \langle k| \, .
\end{align}
We further obtain
\begin{align}
\mathfrak{D}(\nu) = D(\nu \| \gamma) & \leq \frac{1}{n} \sum_{k=1}^n D(\rho^{\otimes k-1}\otimes \Xi_{n-k} \| \eta^{\otimes k-1}\otimes \eta'^{\otimes n-k}) \\
& \leq  \frac{1}{n} \sum_{k=1}^n [(k-1) D(\rho \| \eta) + n D(\rho \| \eta)] \\
\label{free energy catalyst}
& \leq 2n D(\rho \| \eta) \, , 
\end{align}
where we used joint convexity, additivity under tensor products and data-processing of the trace distance. Theorem 1 in~\cite{Sagawa} could also be formulated using the purified distance instead of the trace distance.  
If we set $P(\Xi, \rho'^{\otimes n}) \leq \varepsilon$, using the explicit form of the catalyst state $\nu$~\eqref{catalyst state} and the output state $\tau =  \frac{1}{n} \sum_{k=1}^n \rho^{\otimes k-1}\otimes \Xi_{n-k+1} \otimes |k \rangle \langle k|$ we obtain
\begin{align}
P(\tau, \rho'\otimes \nu)  & \leq \max \limits_{k} P(\Xi_{n-k+1} , \Xi_{n-k} \otimes \rho') 
\leq \max \limits_{k} P(\Xi_{n-k+1}, \rho'^{n-k+1})+P(\rho'^{n-k+1}, \Xi_{n-k} \otimes \rho') \leq 2 \varepsilon \, , 
\end{align}
where we used joint quasi-convexity, triangular
inequality and monotonicity under partial trace of the purified distance~\cite{Tomamichel}. 
We have that  $n \leq \frac{1}{\gamma}\log{\frac{1}{\varepsilon}}$ for some constant $\gamma$ \cite{Buscemi}. In appendix~\ref{error exponent} we find a qualitatively give a lower bound for the error exponent $\gamma$ for small entropy gaps. We obtain from~\eqref{free energy catalyst} the following behaviour for the free energy
\begin{align}
\mathfrak{D}(\nu) = O\left( \log{\frac{1}{\varepsilon}} \right) \, .
\end{align} 
By comparing this result with the lower bound \eqref{lower bound free energy} we establish that the procedure is optimal in $\varepsilon$. 

\section{Entanglement theory}
\label{Entanglement theory}

Let $\mathcal{H}_1 \otimes ... \otimes \mathcal{H}_m$ a multipartite Hilbert space. We call a state $\sigma \in \mathcal{S}(\mathcal{H}_1 \otimes ... \otimes \mathcal{H}_m)$ separable if it is of the form $\sigma = \sum_i p_i \sigma_i^1 \otimes ... \otimes \sigma_i^m $ for some local states $\sigma_j^k \in \mathcal{S}(\mathcal{H}_k)$ and a probability distribution $\{ p_i \}$. We denote the set of all separable states (free states) as usual by $\mathcal{F}$. In the following we consider input and output bipartite pure states $|\psi _{AB}\rangle, |\psi' _{AB}\rangle  \in \mathcal{S}(\mathcal{H}_A \otimes \mathcal{H}_B)$. 

We consider the most general situation in which the two parties involved in the protocol hold a mixed catalyst state. 
We can apply Theorem~\ref{fidelity theorem 1} since the monotones $\mathfrak{D}_\alpha$ are additive when one state is pure~\cite{Unpublished}. The squashed entanglement is an entanglement monotone that is superadditive, additive under tensor products and continuous~\cite{Christandl, Alicki}. Therefore, for what we have already mentioned, the squashed entanglement must decrease under LOCC operations~\cite{Kondra}.  For a bipartite pure state $|\psi _{AB}\rangle$, both the squashed entanglement and the relative entropy of entanglement $\mathfrak{D}$ reduce to the entanglement entropy \cite{vedral1998entanglement,Vlatko,horodecki2009quantum,bennett1996concentrating}, namely $E_{sq}(|\psi _{AB}\rangle) = \mathfrak{D}(|\psi _{AB}\rangle) = H(\psi_A)$ where $H(\rho) = -\text{Tr}(\rho \log{\rho})$ and $\psi_A = \text{Tr}_B{|\psi \rangle \! \langle \psi|_{AB}}$. Moreover, the condition  $H(\psi_A) \geq H(\psi'_A)$ is also a sufficient condition for approximated asymptotic, and hence correlated catalytic, transformation~\cite{bennett1996concentrating, Bennet2}. It then follows that $|\psi _{AB}\rangle$ can be catalytically transformed into $|\psi' _{AB}\rangle$ if and only if $\mathfrak{D}(|\psi _{AB}\rangle) \geq \mathfrak{D}(|\psi'_{AB}\rangle) $~\cite{Kondra}. We then look for states satisfying both $\mathfrak{F}(|\psi _{AB}\rangle) > \mathfrak{F}(|\psi' _{AB}\rangle)$ and $\mathfrak{D}(|\psi _{AB}\rangle) \geq \mathfrak{D}(|\psi'_{AB}\rangle) $. As in the previous case, states of this kind exist and the fidelity gap can be chosen arbitrarily close to $1$. Indeed we find 
\begin{lemma}
For any $\delta > 0$ there exist $d > 0$ and two states $|\psi _{AB}\rangle, |\psi' _{AB}\rangle \in \mathcal{S}(\mathcal{H}_A \otimes \mathcal{H}_B) $ with $\textup{dim}(\mathcal{H}_A) = \textup{dim}(\mathcal{H}_B) = d$ such that
\begin{align}
\mathfrak{D}(|\psi _{AB}\rangle) \geq \mathfrak{D}(|\psi'_{AB}\rangle)  \quad \text{and} \quad \sqrt{\mathfrak{F}(|\psi_{AB}\rangle)} - \sqrt{\mathfrak{F}(|\psi' _{AB}\rangle)} > 1-\delta
\end{align}
\end{lemma}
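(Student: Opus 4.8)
The plan is to reduce both monotones to functions of the Schmidt spectrum and then mimic the three-level athermality construction of the previous section. For a bipartite pure state $|\psi_{AB}\rangle$ with Schmidt decomposition $|\psi_{AB}\rangle = \sum_i \sqrt{\lambda_i}\,|i\rangle_A|i\rangle_B$, the reduced state $\psi_A$ has spectrum $\{\lambda_i\}$, and as already recalled $\mathfrak{D}(|\psi_{AB}\rangle) = H(\psi_A)$. For the fidelity of entanglement, since $F(|\psi_{AB}\rangle, \sigma) = \langle\psi_{AB}|\sigma|\psi_{AB}\rangle$ is linear in $\sigma$, the maximum over the convex set of separable states is attained at an extreme point, i.e.\ a pure product state; a short singular-value computation then gives $\mathfrak{F}(|\psi_{AB}\rangle) = \max_{|a\rangle,|b\rangle}|\langle\psi_{AB}|a\rangle|b\rangle|^2 = \lambda_{\max}$, the largest Schmidt coefficient. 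Hence $\sqrt{\mathfrak{F}(|\psi_{AB}\rangle)} = \sqrt{\lambda_{\max}}$, and the whole statement reduces to finding two probability vectors $\lambda,\lambda'$ with $H(\lambda) \geq H(\lambda')$ and $\sqrt{\lambda_{\max}} - \sqrt{\lambda'_{\max}} > 1-\delta$.

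The explicit construction I would use is a dominant weight with a long flat tail for $\psi$ and a uniform spectrum for $\psi'$,
\begin{align*}
\lambda = \big(1-\mu,\ \underbrace{\tfrac{\mu}{N},\dots,\tfrac{\mu}{N}}_{N}\big), \qquad \lambda' = \big(\underbrace{\tfrac{1}{M},\dots,\tfrac{1}{M}}_{M}\big),
\end{align*}
so that $\lambda_{\max} = 1-\mu$ (for small $\mu$), $\lambda'_{\max} = 1/M$, and $H(\lambda) = h_{\text{bin}}(\mu) + \mu\log N$, $H(\lambda') = \log M$. The fidelity-gap requirement $\sqrt{1-\mu} - 1/\sqrt{M} > 1-\delta$ is met by first fixing $\mu$ small enough that $\sqrt{1-\mu} > 1-\delta/2$ and then $M$ large enough that $1/\sqrt{M} < \delta/2$. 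The entropy ordering $h_{\text{bin}}(\mu) + \mu\log N \geq \log M$ is then satisfied, with $\mu$ and $M$ already fixed, by choosing $N$ finite but large, e.g.\ $N \geq M^{1/\mu}$. Finally I would embed both spectra in a common dimension $d = N+1$ (padding $\lambda'$ with zeros), producing $|\psi_{AB}\rangle, |\psi'_{AB}\rangle$ on $\mathcal{H}_A \otimes \mathcal{H}_B$ with $\dim\mathcal{H}_A = \dim\mathcal{H}_B = d$; since $H(\psi_A) \geq H(\psi'_A)$ ensures $|\psi'_{AB}\rangle \in \overline{\setCCO}(|\psi_{AB}\rangle)$ while $\mathfrak{F}(|\psi_{AB}\rangle) > \mathfrak{F}(|\psi'_{AB}\rangle)$, the pair is genuinely hard-to-transform.

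The conceptual heart, and the only real obstacle, is the apparent tension between the two conditions. Forcing $\sqrt{\mathfrak{F}(|\psi'_{AB}\rangle)}$ near $0$ means $\lambda'_{\max}$ is tiny, which by $H \geq H_\infty = -\log\lambda_{\max}$ forces $H(\psi'_A)$ to be large; the ordering $H(\psi_A) \geq H(\psi'_A)$ then demands that $\psi$ itself carry large entropy, even though $\lambda_{\max}(\psi) = 1-\mu$ is close to $1$. The resolution, exactly as in the athermality qutrit lemma, is that a distribution can simultaneously have a dominant atom and large Shannon entropy by spreading a small residual mass $\mu$ over an enormous number $N$ of outcomes, contributing $\mu\log N$ to the entropy while leaving $\lambda_{\max}$ almost unchanged. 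Once this mechanism is recognized, checking that a finite $N$ (hence finite $d$) suffices is routine; the only care needed is to confirm $1-\mu > \mu/N$ so that $\lambda_{\max} = 1-\mu$, which holds trivially for the chosen parameters.
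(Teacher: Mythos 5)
Your proposal is correct and follows essentially the same route as the paper: both reduce the problem to Schmidt spectra via $\mathfrak{D}(|\psi_{AB}\rangle)=H(\psi_A)$ and $\mathfrak{F}(|\psi_{AB}\rangle)=\lambda_{\max}=e^{-H_\infty(\psi_A)}$, and both use a spectrum with a dominant atom plus a long flat tail against a uniform spectrum (your $(1-\mu,\mu/N,\dots,\mu/N)$ versus $(1/M,\dots,1/M)$ is the paper's construction up to reparameterization). Your sequential choice of $\mu$, $M$, $N$ and your extreme-point justification of $\mathfrak{F}=\lambda_{\max}$ are only cosmetic differences from the paper's single-parameter asymptotics and its citation of the known formula.
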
 

\begin{proof}
Let us consider two pure states $|\psi_{AB}\rangle = \sum_i \sqrt{\lambda_i}|i\rangle_A |i\rangle_B$ and $|\psi'_{AB}\rangle = \sum_i \sqrt{\lambda'_i}|i\rangle_A |i\rangle_B$ with Schmidt coefficients 
\begin{align}
\label{Schmidt coefficeints}
\vec{\lambda} = \bigg[\underbrace{ \frac{1-\mu}{d-1},  \dots \dots, \frac{1-\mu}{d-1}}_{d-1\, \text{times}} ,\mu \bigg] \qquad \vec{\lambda}' = \bigg[ 0,  \dots, 0, \underbrace{ \left( \frac{1}{d-1} \right)^{1-\kappa} \hskip-15pt ,\dots, \left( \frac{1}{d-1} \right)^{1-\kappa} }_{(d-1)^{1-\kappa}\, \text{times}}  \bigg] \, . 
\end{align}
Using that $\mathfrak{F}(|\psi _{AB}\rangle)  = e^{-H_\infty(\psi_A)}$, the conditions $\mathfrak{F}(|\psi _{AB}\rangle) > \mathfrak{F}(|\psi' _{AB}\rangle)$ and  $\mathfrak{D}(|\psi _{AB}\rangle) \geq \mathfrak{D}(|\psi'_{AB}\rangle) $ turn into 
 \begin{align}
 h_{\text{bin}}(\mu)+(1-\mu)\log{(d-1)}= \quad H(\psi_A)& \geq H(\psi'_A)\quad  = (1-\kappa)\log{(d-1)}\\
-\log{\mu} = H_{\infty}(\psi_A) & < H_{\infty}(\psi'_A) = (1-\kappa)\log{(d-1)}
 \end{align}
for $\mu \geq1/d$. Since $ h_{\text{bin}}(\mu) \geq 0$ the first condition is satisfied whenever $\kappa \geq \mu$. Let us then fix $ \mu = \kappa$.
The second condition gives $(d-1)^{1-\kappa}>1/\kappa$ which is always satisfied for $d$ big enough. Then, the fidelities behave as 
 \begin{align}
\mathfrak{F}(|\psi_{AB}\rangle) \sim  \kappa  \quad \qquad \mathfrak{F}(|\psi'_{AB}\rangle) \sim \left( \frac{1}{d-1}\right)^{1-\kappa} \, . 
 \end{align}
 Then we can choose $\kappa \rightarrow 1$ and $d$ big enough such that the fidelity gap $\sqrt{\mathfrak{F}(|\psi_{AB}\rangle)} - \sqrt{\mathfrak{F}(|\psi' _{AB}\rangle)}$ is arbitrarily close to $1$. As an explicit, example for $d=3$ we can choose $\vec{\lambda}=[2/3,1/6,1/6]$ and $\vec{\lambda}'=[0,1/2,1/2]$.
\end{proof}
In this setting we identify the relative entropy of entanglement as the relevant quantity to quantify the entanglement needed for the catalyst. From Theorem~\ref{fidelity theorem 1}, we obtain
\begin{align}
\mathfrak{D}(|\nu \rangle) = \Omega \left( \log{\frac{1}{\varepsilon}}\right) \, . 
\end{align}
Therefore, to perform correlated catalytic transformation we would need a catalyst with a diverging amount of entanglement as the error approaches zero. 

\section{Resource theory of coherence}
\label{Resource theory of coherence}
In this section we  first introduce resource theory of coherence and then derive the consequences of Theorem~\ref{fidelity theorem 1} in this framework. Coherence is defined with respect a particular basis dictated by the physical problem under consideration~\cite{Winter}. If $\{|i\rangle, i=1, ...,d\} $ is such a basis, a state is called free if it is diagonal in this basis, namely if it is of the form $ \sum p_i |i\rangle \langle i |$ with $\sum p_i=1$. We  call these states \textit{incoherent states} and we denote this set (free set) as usual by $\mathcal{F}$. States that are not \textit{incoherent states} are resourceful and we refer to them as \textit{coherent states}.  We introduce the dephasing operator $\Delta(\cdot) = \sum_i |i\rangle \langle i | \cdot |i\rangle \langle i |$. In the following we refer to the quantity $\mathfrak{F}(\rho) = \max_{\sigma \in \mathcal{F}} F(\rho, \sigma)$ as fidelity of coherence. To apply the results of Theorem~\ref{fidelity theorem 1} we first need to prove that the fidelity of coherence $\mathfrak{F}$ is multiplicative. This property has already been proved in~[\citenum{Hayashi}, Theorem 3]. Here we provide an alternative proof by giving an Aberti's form for this quantity through semi-definite program (SDP) formulation. 

We first find that
\begin{theorem}
\label{SDP}
Let $\rho \in \mathcal{S}(\mathcal{H})$, then the fidelity of coherence is the solution of the following minimisation problem 
\begin{align}
\max \limits_{\sigma \in \mathcal{F}} F(\rho, \sigma) = \inf \limits_{R > 0} \textnormal{Tr}\, [\rho R^{-1}] \|\Delta(R)\|_{\infty}
\end{align}
\end{theorem}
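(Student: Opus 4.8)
The plan is to obtain the claimed minimisation from the semidefinite programming (SDP) dual of the root fidelity, optimised over the free set. I would start from the dual SDP form of the root fidelity (see e.g.\ Watrous), valid for any states $\rho,\sigma$,
\begin{align*}
\sqrt{F(\rho,\sigma)} = \tfrac12 \inf_{Y > 0}\bigl(\tr[\rho Y] + \tr[\sigma Y^{-1}]\bigr),
\end{align*}
which follows from the Schur-complement analysis of the primal block constraint $\left(\begin{smallmatrix}\rho & X\\ X^\dagger & \sigma\end{smallmatrix}\right)\ge 0$ and is equivalent to Alberti's theorem after optimising the scaling $Y\mapsto tY$. Since $\sqrt{\cdot}$ is monotone, $\sqrt{\mathfrak{F}(\rho)} = \max_{\sigma\in\mathcal F}\sqrt{F(\rho,\sigma)}$, so the target quantity is the saddle value
\begin{align*}
\sqrt{\mathfrak{F}(\rho)} = \max_{\sigma\in\mathcal F}\ \inf_{Y>0}\ \tfrac12\bigl(\tr[\rho Y] + \tr[\sigma Y^{-1}]\bigr).
\end{align*}

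The decisive step is to exchange the maximisation and the infimum. The reason I would work with this additive form rather than the multiplicative Alberti form $\tr[\rho R^{-1}]\tr[\sigma R]$ is that the objective $h(\sigma,Y) := \tr[\rho Y] + \tr[\sigma Y^{-1}]$ is genuinely (jointly) convex in $Y$, because $Y\mapsto Y^{-1}$ is operator convex and $\tr[\rho\,\cdot]$ is linear, while $h$ is linear, hence concave, in $\sigma$. The free set $\mathcal F$ of incoherent states is convex and compact, $\{Y>0\}$ is convex, and since $\max_{\sigma}h(\sigma,Y)=\tr[\rho Y]+\|\Delta(Y^{-1})\|_\infty\to+\infty$ as $Y$ approaches the boundary of the positive cone or grows unboundedly (for $\rho$ of full support), a coercivity argument restricts $Y$ to a compact convex sublevel set. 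Sion's minimax theorem then applies, giving
\begin{align*}
\sqrt{\mathfrak{F}(\rho)} = \inf_{Y>0}\ \max_{\sigma\in\mathcal F}\ \tfrac12\bigl(\tr[\rho Y] + \tr[\sigma Y^{-1}]\bigr) = \inf_{Y>0}\ \tfrac12\Bigl(\tr[\rho Y] + \max_{\sigma\in\mathcal F}\tr[\sigma Y^{-1}]\Bigr).
\end{align*}

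The inner maximisation is immediate: for $\sigma=\sum_i p_i\ket{i}\!\bra{i}\in\mathcal F$ one has $\tr[\sigma Y^{-1}]=\sum_i p_i (Y^{-1})_{ii}$, which over all probability vectors is maximised at a point mass on the largest diagonal entry, so $\max_{\sigma\in\mathcal F}\tr[\sigma Y^{-1}] = \max_i (Y^{-1})_{ii} = \|\Delta(Y^{-1})\|_\infty$, the last equality holding because $\Delta(Y^{-1})$ is diagonal and positive. I would then remove the remaining sum by optimising the scale: replacing $Y$ by $tY$ multiplies $\tr[\rho Y]$ by $t$ and $\|\Delta(Y^{-1})\|_\infty$ by $t^{-1}$, so minimising $\tfrac12(t\,a + t^{-1}b)$ over $t>0$ by AM--GM yields $\sqrt{ab}$ with $a=\tr[\rho Y]$ and $b=\|\Delta(Y^{-1})\|_\infty$. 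Hence $\sqrt{\mathfrak{F}(\rho)} = \inf_{Y>0}\sqrt{\tr[\rho Y]\,\|\Delta(Y^{-1})\|_\infty}$, and squaring and substituting $R=Y^{-1}$ gives exactly $\mathfrak{F}(\rho)=\inf_{R>0}\tr[\rho R^{-1}]\|\Delta(R)\|_\infty$.

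I expect the main obstacle to be the rigorous justification of the minimax exchange: one must verify the hypotheses of Sion's theorem despite the noncompactness and openness of $\{Y>0\}$, which I would handle via the coercivity and sublevel-set restriction sketched above, together with a separate limiting argument (perturbing $\rho$ or restricting to its support) to cover states $\rho$ that are not of full rank, where the infimum over $R$ need not be attained and $\tr[\rho R^{-1}]$ may be $+\infty$.
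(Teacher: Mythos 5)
Your strategy is sound and genuinely different from the paper's. The paper builds a primal SDP for $\max_{\sigma\in\mathcal F}F(\rho,\Delta(\sigma))$ with the dephasing channel embedded in the constraint, passes to the dual, verifies Slater's condition for strong duality, and then reduces the dual to Alberti form via the identifications $L=R^{-1}$ and $Q=\|\Delta(R)\|_\infty$. You instead take the known dual (Alberti) form of the two-state fidelity as given and reduce the theorem to a single minimax exchange, after which the inner maximisation $\max_{\sigma\in\mathcal F}\tr[\sigma Y^{-1}]=\|\Delta(Y^{-1})\|_\infty$ and the AM--GM rescaling steps are correct and immediate. Your route is shorter and makes transparent where convexity enters; the paper's route has the advantage of exhibiting the strong-duality certificate (a strictly feasible dual point) explicitly within one self-contained SDP derivation.

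The step you should repair is the justification of the exchange itself, which is the crux: note that weak duality only gives $\mathfrak{F}(\rho)\leq\inf_{R>0}\tr[\rho R^{-1}]\|\Delta(R)\|_\infty$, so the entire content of the theorem sits in the minimax equality. Compactifying the $Y$-side by coercivity is both unnecessary and, as sketched, incomplete: a minimax theorem applied on $\mathcal F\times K$ for a compact sublevel set $K$ identifies $\inf_{Y>0}\max_{\sigma}h$ with $\max_{\sigma}\min_{Y\in K}h$, but it does not follow that $\max_{\sigma}\min_{Y\in K}h=\max_{\sigma}\inf_{Y>0}h$. Indeed, for low-rank incoherent $\sigma$ (e.g.\ $\sigma=|i\rangle\langle i|$ with $\rho$ full rank) the infimum of $h(\sigma,\cdot)$ is approached only as $Y$ degenerates toward the boundary of the positive cone, hence outside every compact $K\subset\{Y>0\}$, so restricting to $K$ strictly raises the inner infimum for such $\sigma$; closing this would require an additional exhaustion-and-limit argument. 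The clean fix is already in your toolbox: Sion's minimax theorem needs compactness of only one of the two sets, and it is the maximisation domain $\mathcal F$ (compact and convex) that supplies it, while $h$ is linear and continuous in $\sigma$ and convex and continuous in $Y$ on the convex set $\{Y>0\}$. Applying Sion with compactness on the $\sigma$ side yields $\max_{\sigma\in\mathcal F}\inf_{Y>0}h=\inf_{Y>0}\max_{\sigma\in\mathcal F}h$ directly, for every state $\rho$, with no coercivity hypothesis; this also makes your separate limiting argument for rank-deficient $\rho$ unnecessary.
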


\begin{proof}
First, we note that $\max_{\sigma \in \mathcal{F}} F(\rho, \sigma) = \max_{\sigma \in \mathcal{S}(\mathcal{H})} F(\rho, \Delta(\sigma))$.
Using the well-known SDP formulation of the square root fidelity~\cite{Watrous2}, we can write the square root fidelity of coherence as the solution of the following SDP problem  
\begin{align}
\begin{aligned}
\text{maximize}: \quad & \frac{1}{2}\text{Tr}[Z+Z^\dagger]\\
\text{subject to}: \quad & \begin{pmatrix}
\rho & Z \\
Z^\dagger & \Delta(\sigma) 
\end{pmatrix} \geq 0\\
  & Z \in \mathcal{L}(\mathcal{H}), \quad  \sigma \geq 0, \text{Tr}(\sigma) = 1 \,.   \\
\end{aligned}
\end{align}
We want to bring this in standard form, hence a maximization over $X \geq 0 $ of the functional $\text{Tr}[XA]$ subject to the constraint $\Phi(X) = B$. We set 
\begin{align}
X = \begin{pmatrix}
X_{11} & Z & \cdot \\
Z^\dagger & X_{22} & \cdot \\
\cdot & \cdot & \sigma
\end{pmatrix}\,,
\qquad 
A = \frac{1}{2}
\begin{pmatrix}
0 & \mathds{1} & 0 \\
 \mathds{1} & 0 & 0 \\
0 & 0 & 0
\end{pmatrix}\,,
\qquad 
B = 
\begin{pmatrix}
\rho & 0 & 0 \\
0 & 0 & 0 \\
0 & 0 & 1
\end{pmatrix}
\end{align}
as well as
\begin{align}
\Phi(X) = \begin{pmatrix}
X_{11} & 0 & 0 \\
0 & X_{22} - \Delta(\sigma) & 0 \\
0 & 0 & \text{Tr}[\sigma]
\end{pmatrix}\,.
\end{align}
The dual SDP is a minimization over self-adjoint $Y$ of the functional $\text{Tr}[YB]$ subject to $\Phi^\dagger(Y) \geq A$. The dual variables and the adjoint map are
\begin{align}
Y = \begin{pmatrix}
L & \cdot & \cdot \\
\cdot & R & \cdot \\
\cdot & \cdot & Q
\end{pmatrix}
\qquad \text{with} \qquad 
\Phi^\dagger(Y) = \begin{pmatrix}
L & 0 & 0 \\
0 & R  & 0 \\
0 & 0 & -\Delta(R) + \mathds{1}Q
\end{pmatrix}\, , 
\end{align}
since the dephasing channel is self-adjoint, namely $\Delta^\dagger = \Delta$. 
This leads to the following minimization problem
\begin{align}
\begin{aligned}
\text{minimize}: \quad & \text{Tr}[\rho L] + Q\\
\text{subject to}: \quad  & L,R \in \mathcal{H}(\mathcal{X}), \,  Q \in \mathbb{R} \\
  &  \mathds{1}Q \geq \Delta(R)  \\
  & \begin{pmatrix}
L & 0 \\
0 & R
\end{pmatrix} \geq 
\frac{1}{2}
\begin{pmatrix}
0 & \mathds{1} \\
\mathds{1} & 0
\end{pmatrix}\,.\\
\end{aligned}
\end{align}
The Slater condition for strong duality is satisfied. Indeed, the operator 
\begin{align}
\begin{pmatrix}
\mathds{1} & 0 & 0 \\
0 & \mathds{1} & 0 \\
0 & 0 & a
\end{pmatrix}
\end{align}
with $a>1$ is strictly feasible for the dual problem since it satisfies $\Phi^\dagger(Y) > A$.
By rescaling $L \rightarrow \frac{1}{2} L , R \rightarrow \frac{1}{2} R, Q \rightarrow \frac{1}{2} Q$ and using that~\cite{Watrous2}
\begin{align}
\begin{pmatrix}
L & -\mathds{1} \\
-\mathds{1} & R
\end{pmatrix} \geq 0 \Longleftrightarrow 
L,R > 0 , L \geq R^{-1}\,,
\end{align}
we can choose $L = R^{-1}$ without loss of generality and our problem simplifies
\begin{align}
\begin{aligned}
\text{minimize}: \quad & \frac{1}{2} \text{Tr}[\rho R^{-1}] + \frac{1}{2} Q\\
 \text{subject to}: \quad & R > 0 , \,  Q > 0 \\
  &  \mathds{1}Q \geq \Delta(R)  \, .\\
\end{aligned}
\end{align}
Following the argument leading to Alberti's expression for the fidelity~\cite{Watrous2}, going back from root fidelity to fidelity again and using that since we minimise over $\mathds{1}Q \geq \Delta(R)$, by definition of the infinity norm, we can set  $Q  = \|\Delta(R)\|_\infty$ without loss of generality, we obtain
\begin{align}
\max \limits_{\sigma \in \mathcal{F}} F(\rho, \sigma) = \inf \limits_{R > 0} \text{Tr}[\rho R^{-1}] \|\Delta(R)\|_{\infty}\, . 
\end{align}
\end{proof}
From the previous lemma we recover multiplicativity of the fidelity of coherence, as first established in \cite{Hayashi}. 
\begin{lemma}[Multiplicativity of the fidelity of coherence]
\label{multiplicativity}
For any $\rho \in \mathcal{S}(\mathcal{H}_A)$ and $\tau \in \mathcal{S}(\mathcal{H}_B)$ we have
\begin{align}
\mathfrak{F}(\rho \otimes \tau) = \mathfrak{F}(\rho) \cdot \mathfrak{F}(\tau)
\end{align}
\end{lemma}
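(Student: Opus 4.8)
The plan is to establish the two inequalities $\mathfrak{F}(\rho \otimes \tau) \geq \mathfrak{F}(\rho)\,\mathfrak{F}(\tau)$ and $\mathfrak{F}(\rho \otimes \tau) \leq \mathfrak{F}(\rho)\,\mathfrak{F}(\tau)$ separately, using the primal (maximization over free states) definition for the former and the Alberti-type expression of Theorem~\ref{SDP} for the latter. Combining the two yields equality.

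For the lower bound (supermultiplicativity) I would use that the incoherent basis on $\mathcal{H}_A \otimes \mathcal{H}_B$ is the product basis, so that whenever $\sigma_A$ and $\sigma_B$ are incoherent their tensor product $\sigma_A \otimes \sigma_B$ is incoherent on the joint system. Since the Uhlmann fidelity factorizes, $F(\rho \otimes \tau, \sigma_A \otimes \sigma_B) = F(\rho, \sigma_A)\,F(\tau, \sigma_B)$, and maximizing over incoherent $\sigma_A, \sigma_B$ — which is a restriction to product free states among all free states on $AB$ — gives $\mathfrak{F}(\rho \otimes \tau) \geq \mathfrak{F}(\rho)\,\mathfrak{F}(\tau)$ directly.

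For the upper bound (submultiplicativity) I would exploit Theorem~\ref{SDP}. Pick $R_A > 0$ and $R_B > 0$ near-optimal for $\rho$ and $\tau$ respectively, and insert the product $R = R_A \otimes R_B$ into the infimum defining $\mathfrak{F}(\rho \otimes \tau)$; since optimizing over product $R$ is a restriction of the full optimization, this can only upper bound $\mathfrak{F}(\rho \otimes \tau)$. Two factorizations then finish the argument: the trace term splits as $\textnormal{Tr}[(\rho \otimes \tau)(R_A \otimes R_B)^{-1}] = \textnormal{Tr}[\rho R_A^{-1}]\,\textnormal{Tr}[\tau R_B^{-1}]$, and, because the joint dephasing map factorizes as $\Delta = \Delta_A \otimes \Delta_B$ (the incoherent basis being a product basis) while the operator norm is multiplicative under tensor products, one has $\|\Delta(R_A \otimes R_B)\|_\infty = \|\Delta_A(R_A)\|_\infty\,\|\Delta_B(R_B)\|_\infty$. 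The joint objective therefore equals the product of the two individual objectives, and taking the infimum over $R_A$ and $R_B$ delivers $\mathfrak{F}(\rho \otimes \tau) \leq \mathfrak{F}(\rho)\,\mathfrak{F}(\tau)$.

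I do not expect a genuine obstacle: all the substance has been front-loaded into the Alberti form of Theorem~\ref{SDP}, after which multiplicativity reduces to the elementary facts that the trace, the operator norm, and the dephasing operator all factorize on product operators. The only point requiring mild care is that the infimum need not be attained, which is handled cleanly either by using near-optimal $R_A, R_B$ and an $\varepsilon$-argument or simply by observing that restricting to product $R$ yields the stated upper bound without attainment.
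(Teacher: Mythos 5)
Your proposal is correct and follows essentially the same route as the paper: supermultiplicativity via product free states together with multiplicativity of the Uhlmann fidelity, and submultiplicativity by inserting a product ansatz $R_A \otimes R_B$ into the Alberti-type expression of Theorem~\ref{SDP}, using that $\Delta$, the trace, and $\|\cdot\|_\infty$ all factorize on tensor products. Your extra care about the infimum possibly not being attained (using near-optimal $R_A, R_B$) is a minor refinement over the paper, which simply speaks of ``the optimizers,'' but it changes nothing of substance.
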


\begin{proof}
We first prove $\mathfrak{F}(\rho \otimes \tau) \geq \mathfrak{F}(\rho) \cdot \mathfrak{F}(\tau)$. We call $\mathcal{F}_A$ and $\mathcal{F}_B$ the set of free states of the Hilbert spaces $\mathcal{H}_A$ and $\mathcal{H}_B$, respectively. 
The inequality follows immediately by noting that if $\sigma_A \in \mathcal{F}_A$ and $\sigma_B \in \mathcal{F}_B$ then $\sigma_A \otimes \sigma_B \in \mathcal{F}_{AB}$ and that the Uhlmann fidelity is multiplicative under tensor products. 

The opposite inequality $\mathfrak{F}(\rho \otimes \sigma) \leq \mathfrak{F}(\rho) \cdot \mathfrak{F}(\sigma)$ can be proved using semidefinite programming duality. By Theorem~\ref{SDP} we have $\mathfrak{F}(\rho)= \inf_{R > 0} \text{Tr}[\rho R^{-1}] \|\Delta(R)\|_{\infty}$. Let us call $\tilde{R}_A$ and $\bar{R}_B$ the optimizers such that $\mathfrak{F}(\rho_A) = \text{Tr}[\rho_A \tilde{R}_A^{-1}]\|\Delta(\tilde{R}_A)\|_{\infty}$ and  $\mathfrak{F}(\tau_B) = \text{Tr}[\tau_B \bar{R}_B^{-1}]\|\Delta(\bar{R}_B)\|_{\infty}$.
Then $\tilde{R}_A \otimes \bar{R}_B$ is a feasible operator  for $\mathfrak{F}(\rho_A \otimes \tau_B)$ since if $\tilde{R}_A , \bar{R}_B > 0$ then $\tilde{R}_A \otimes \bar{R}_B>0$.
Using multiplicativity of the infinite norm under tensor products we have $\| \Delta(\tilde{R}_A \otimes \bar{R}_B) \|_\infty = \| \Delta(\tilde{R}_A) \otimes \Delta(\bar{R}_B)) \|_\infty  =  \| \Delta(\tilde{R}_A) \|_\infty \cdot \|\Delta(\bar{R}_B)) \|_\infty $ and therefore
\begin{align}
\mathfrak{F}(\rho_A \otimes \tau_B) & \leq \text{Tr}[\rho_A \otimes \tau_B \tilde{R}_A^{-1} \otimes \bar{R}_B^{-1}]\|\Delta(\tilde{R}_A)\|_{\infty} \cdot \|\Delta(\bar{R}_B)\|_{\infty} \\
& = \mathfrak{F}(\rho_A)  \cdot \mathfrak{F}(\tau_B)
\end{align}
by definition of the optimizers $\tilde{R}_A$ and $\bar{R}_B$. 
\end{proof}

In this framework, the relative entropy of coherence $\mathfrak{D}(\rho) = \min_{\sigma \in \mathcal{F}} D(\rho \| \sigma)$ is a resource monotone that is tensor product additive, continuous and superadditive~\cite{Winter, Superadditivity}.
Moreover, if $\rho$ can be asymptotically mapped into $\rho'$ , then, under some mild assumptions which are easily seen to be satisfied in this context, $\rho$ can be also transformed into $\rho'$ under correlated catalytic transformation~\cite{Takagi}. Since a mixed state $\rho$ can be asymptotically transformed into a pure state $|\phi\rangle$ with unit rate if $\mathfrak{D}(\rho) \geq \mathfrak{D}(|\phi\rangle)$~\cite{Winter}, we conclude that $\rho$ is transformable into $|\phi\rangle$ by a correlated catalytic transformation if and only if the relative entropies of coherence are ordered, namely if $\mathfrak{D}(\rho) \geq \mathfrak{D}(|\phi\rangle)$. Therefore we look for states that satisfy both $\mathfrak{F}(\rho) > \mathfrak{F}(|\phi\rangle)$ and $\mathfrak{D}(\rho) \geq \mathfrak{D}(|\phi\rangle)$. These states exists and in addition, we can always find an Hilbert space big enough such that the fidelity gap $\sqrt{\mathfrak{F}(\rho)}-\sqrt{\mathfrak{F}(|\phi\rangle)}$ is arbitrarily close to one. Indeed we find 
\begin{lemma}
For any $\delta > 0$, there exist $d > 0$ and two states $\rho, |\phi\rangle \in \mathcal{S}(\mathcal{H})$ such that 
\begin{align}
\mathfrak{D}(\rho) \geq \mathfrak{D}(|\phi\rangle)  \quad \text{and} \quad \sqrt{\mathfrak{F}(\rho)}  - \sqrt{\mathfrak{F}(|\phi\rangle)} > 1-\delta
\end{align}
\end{lemma}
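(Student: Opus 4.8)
The plan is to take both $\rho$ and $|\phi\rangle$ to be \emph{pure} states diagonalised in the fixed incoherent basis, which reduces the problem to exactly the extremal analysis already carried out for entanglement theory above. The first step is to record the values of the two relevant monotones on pure states. For a pure state $|\phi\rangle = \sum_i \sqrt{p_i}\,|i\rangle$, maximising $\langle\phi|\sigma|\phi\rangle$ over incoherent $\sigma = \sum_j q_j |j\rangle\langle j|$ yields the largest diagonal weight, so that
\begin{align}
\mathfrak{F}(|\phi\rangle) = \max_i p_i = e^{-H_\infty(\{p_i\})}\,.
\end{align}
Likewise, writing the relative entropy of coherence as $\mathfrak{D}(\rho) = S(\Delta(\rho)) - S(\rho)$ and using $S(|\phi\rangle\langle\phi|) = 0$, the quantity reduces to the Shannon entropy of the diagonal distribution, $\mathfrak{D}(|\phi\rangle) = H(\{p_i\})$.

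With these identifications the two coherence monotones play precisely the roles that $e^{-H_\infty(\psi_A)}$ and $H(\psi_A)$ played in the entanglement lemma. I would therefore reuse that construction verbatim, taking $\rho$ and $|\phi\rangle$ pure with coefficient vectors
\begin{align}
\vec\lambda = \Big[\underbrace{\tfrac{1-\mu}{d-1},\dots,\tfrac{1-\mu}{d-1}}_{d-1},\,\mu\Big], \qquad \vec\lambda' = \Big[0,\dots,0,\underbrace{\big(\tfrac{1}{d-1}\big)^{1-\kappa},\dots,\big(\tfrac{1}{d-1}\big)^{1-\kappa}}_{(d-1)^{1-\kappa}}\Big]\,.
\end{align}
Then the required conditions $\mathfrak{D}(\rho) \geq \mathfrak{D}(|\phi\rangle)$ and $\mathfrak{F}(\rho) > \mathfrak{F}(|\phi\rangle)$ become $H(\vec\lambda) \geq H(\vec\lambda')$ and $H_\infty(\vec\lambda) < H_\infty(\vec\lambda')$, which are exactly the inequalities verified in the entanglement construction. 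Setting $\mu = \kappa$ and taking $d$ large, the same computation gives $\mathfrak{F}(\rho) \sim \kappa$ and $\mathfrak{F}(|\phi\rangle) \sim (1/(d-1))^{1-\kappa}$, so letting $\kappa \to 1$ and $d \to \infty$ drives the gap $\sqrt{\mathfrak{F}(\rho)} - \sqrt{\mathfrak{F}(|\phi\rangle)}$ arbitrarily close to one; an explicit small example is $\vec\lambda = [2/3,1/6,1/6]$, $\vec\lambda' = [0,1/2,1/2]$.

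The only step requiring genuine care is the pure-state formula $\mathfrak{F}(|\phi\rangle) = \max_i p_i$: one must argue that the optimal incoherent state may be taken diagonal and supported on the index where $|\phi\rangle$ is largest. This follows from the identity $\mathfrak{F}(\rho) = \max_\sigma F(\rho, \Delta(\sigma))$ used in the proof of Theorem~\ref{SDP}, combined with the elementary maximisation of $\sum_j q_j p_j$ over probability vectors $\{q_j\}$, but it is worth spelling out since it is what makes the min-entropy appear. Everything downstream is a transcription of the entanglement argument: once the coherence monotones are expressed through the min-entropy and the Shannon entropy of the diagonal, the extremal problem is \emph{identical}, and no new optimisation over mixed states is needed. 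I would also remark that a mixed input $\rho$ works equally well here, but the pure-state choice is the most economical way to obtain the claimed fidelity gap.
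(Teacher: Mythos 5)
Your proof is correct, and it takes a genuinely different route from the paper's. The paper uses a \emph{mixed} input: it sets $\rho = \mu \oplus (1-\mu)\,|\Phi_{d-1}\rangle\langle\Phi_{d-1}|$ (an incoherent weight $\mu$ block-summed with a maximally coherent state) and the pure target $|\phi\rangle = |d_1\rangle \otimes |\Phi_{d_2}\rangle$ with $d_1 \sim d^{1-\varepsilon}$, $d_2 \sim d^{\varepsilon}$; it then lower-bounds $\mathfrak{F}(\rho)$ via a block-diagonal decomposition formula for the fidelity of coherence, uses $\mathfrak{F}(|\Phi_d\rangle) = 1/d$, and obtains the gap by taking $\mu \sim 1$. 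You instead take both states pure and supported on the incoherent basis, observe that on such states the two monotones collapse to $\mathfrak{F} = e^{-H_\infty}$ and $\mathfrak{D} = H$ of the coefficient vector --- precisely the quantities that control the entanglement lemma --- and transplant that construction verbatim. The two steps you flag as needing care are indeed sound: $\mathfrak{F}(|\phi\rangle) = \max_i p_i$ follows from the elementary maximisation of $\sum_j q_j p_j$ over probability vectors, and $\mathfrak{D}(\rho) = S(\Delta(\rho)) - S(\rho)$ is the identity the paper itself invokes. Since the lemma only asserts the existence of two states in $\mathcal{S}(\mathcal{H})$ satisfying the stated inequalities, and the downstream application (membership of the pair in $\overline{\setCCO}$ for pure outputs, additivity of $\mathfrak{D}_\alpha$ in coherence theory) places no requirement that the input be mixed, a pure input is perfectly admissible. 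What your route buys is economy and unification: it makes explicit that pure-state coherence is formally isomorphic to pure-state entanglement (incoherent-basis amplitudes playing the role of Schmidt coefficients), so a single extremal computation serves both resource theories. What the paper's route buys is a demonstration that the phenomenon is not an artifact of purity --- the input there is a genuinely mixed, almost-incoherent state --- and an illustration of its block-diagonal machinery for $\mathfrak{F}$. Two small caveats, both inherited from (and shared with) the paper's own entanglement proof: $(d-1)^{1-\kappa}$ must be rounded to an integer, which is harmless as $d \to \infty$; and the limits must be taken in the right order, fixing $\kappa < 1$ close to $1$ first and then choosing $d$ large, since $(d-1)^{-(1-\kappa)/2}$ tends to $1$ if $\kappa \to 1$ at fixed $d$.
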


\begin{proof}
Let us consider the following states 
\begin{align}
\rho = \mu \oplus (1-\mu) |\Phi_{d-1} \rangle \langle \Phi_{d-1}| \qquad \qquad |\phi\rangle = |d_1\rangle \otimes |\Phi_{d_2} \rangle  \qquad \text{with} \quad d_1 \sim d^{1-\varepsilon}\, , \,d_2 \sim d^\varepsilon \, , 
\end{align}
where $| \Phi_d \rangle = \frac{1}{\sqrt{d}} \sum_{i=0}^{d-1} |i \rangle $ is the maximally coherent pure state and $ \varepsilon > 0$ is some fixed small constant. 
Then, using that $\mathfrak{D}(\rho) = S(\Delta(\rho)) - S(\rho) $ it can be easily found that the relative entropies of coherence for $d \gg 1$ scale as
\begin{align}
(1-\mu) \log{d} \quad \sim \quad \mathfrak{D}(\rho) \geq \mathfrak{D}(|\phi\rangle) \quad \sim \quad  \varepsilon \log{d} \, . 
\end{align}
The last inequality is satisfied for $\mu \leq 1- \varepsilon$. 

The fidelity of coherence of the direct of block diagonal state $a  \oplus b$ is equal to
$\mathfrak{F}(a \oplus b) = \max_{\Delta_1, \Delta_2, \\ \text{Tr}(\Delta_1+\Delta_2)=1 } (\sqrt{F(a, \Delta_1)} + \sqrt{F(b, \Delta_2)})^2$ where $\Delta_1, \Delta_2$ are diagonal matrices . By recalling that $\mathfrak{F}(|\Phi_d \rangle) =1/d$~\cite{Hayashi} we can easily find that
\begin{align}
 \mathfrak{F}(\rho) \geq \left( \mu + \frac{1-\mu}{\sqrt{d-1}} \right)^2 \quad > \quad \mathfrak{F}(|\phi\rangle)\quad \sim \quad \frac{1}{d^\varepsilon}\, , 
\end{align} 
so that for $d$ big enough we can choose $\mu \sim 1$ and the fidelity gap $\sqrt{\mathfrak{F}(\rho)}-\sqrt{\mathfrak{F}(\rho')} \sim 1$.

As an explicit example, we mention that for $d = 4$ we can choose $\rho = \mu \oplus (1-\mu) |\Phi_3 \rangle \langle |\Phi_3 \rangle$ and $|\phi\rangle = |1 \rangle \otimes |+ \rangle$ and set $\mu = 1 - 1/\log{3}$. 
These two states satisfy $\mathfrak{D}(\rho) = \mathfrak{D} (|\phi\rangle)$ and $\mathfrak{F}(\rho) > \mathfrak{F} (|\phi\rangle)$. 
\end{proof}
Note that for any state $\rho$ it holds $\mathfrak{F}(\rho)  \geq 1/d$. To see that, it is sufficient to choose the free state $\sigma = \mathds{1}/d \in \mathcal{F}$ and notice that $\text{Tr}(\sqrt{\rho}) \geq 1$ for any $\rho \in \mathcal{S}(\mathcal{H})$. 
Therefore, using Theorem~\ref{fidelity theorem 1} we can immediately bound the dimensions the catalyst
\begin{align}
d = \Omega \left(\frac{1}{\varepsilon}\right) \, . 
\end{align}
Therefore the dimension of the catalyst, as we found for resource theory of athermality,  must diverge as the error vanishes.
We identify the relative entropy of coherence $\mathfrak{D}(\rho) = \min_{\sigma \in \mathcal{F}} D(\rho \| \sigma)$ as the relevant physical quantity in this framework. From Theorem~\ref{fidelity theorem 1} we find
\begin{align}
\mathfrak{D}(\nu) = \Omega\left( \log{\frac{1}{\varepsilon}}\right) \, . 
\end{align}
Hence, similarly to the previous case, a vanishing error in the transformation implies the catalyst to have a diverging amount of coherence.

\appendix

\section{A tighter bound for the $\alpha=1/2$ case}
\label{tighter}
We prove a tighter bound for the main theorem in the case $\alpha = 1/2$. 

\begin{theorem}
Assume that $\rho, \rho' \in \mathcal{S}(\mathcal{H})$, $\mathfrak{D}_{1/2}$ is additive for the state $\rho'$ and $\mathfrak{D}_{1/2}(\rho) < \mathfrak{D}_{1/2}(\rho')$. Then, for any $\varepsilon$-correlated catalytic transformation with catalyst $\nu$ mapping $\rho$ into $\rho'$, we have
\begin{align}
\sqrt{\mathfrak{F}(\nu)} \leq \frac{\varepsilon}{\sqrt{\mathfrak{F}(\rho)}-\sqrt{\mathfrak{F}(\rho')}}\,.
\end{align}
\end{theorem}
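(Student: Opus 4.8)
The plan is to bypass the general continuity-bound machinery (Proposition~\ref{continuity bound}), which only yields an $\varepsilon^{\alpha}=\sqrt{\varepsilon}$ dependence at $\alpha=1/2$, and instead work directly with the fidelity, exploiting that $\mathfrak{D}_{1/2}=-\log\mathfrak{F}$ and that the purified distance is built natively from the fidelity. The key observation is that the constraint $P(\rho'\otimes\nu,\tau)\le\varepsilon$ is itself a fidelity statement, so a single sharp triangle inequality for the root fidelity should convert the looseness $\sqrt{\varepsilon}\to\varepsilon$.

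First I would note that $\mathfrak{F}$ is non-decreasing under free operations (since $\mathfrak{D}_{1/2}=-\log\mathfrak{F}$ is a monotone) and super-multiplicative, because the tensor product of the optimal free states for $\rho$ and $\nu$ is a feasible free state for $\rho\otimes\nu$ and the Uhlmann fidelity is multiplicative. Applying this to the free operation $\mathcal{N}$ realising the transformation gives
\begin{align}
\sqrt{\mathfrak{F}(\tau)}=\sqrt{\mathfrak{F}(\mathcal{N}(\rho\otimes\nu))}\ge\sqrt{\mathfrak{F}(\rho\otimes\nu)}\ge\sqrt{\mathfrak{F}(\rho)}\,\sqrt{\mathfrak{F}(\nu)}\, . \nonumber
\end{align}
Next I would fix the free state $\omega$ attaining $\mathfrak{F}(\tau)=F(\tau,\omega)$ and transfer the estimate from $\tau$ to $\rho'\otimes\nu$ using the triangle-type bound $\sqrt{F(\tau,\omega)}\le\sqrt{F(\rho'\otimes\nu,\omega)}+P(\tau,\rho'\otimes\nu)$; since $\omega$ is free, $F(\rho'\otimes\nu,\omega)\le\mathfrak{F}(\rho'\otimes\nu)$, so combining with the previous display and $P(\tau,\rho'\otimes\nu)\le\varepsilon$ yields $\sqrt{\mathfrak{F}(\rho)}\sqrt{\mathfrak{F}(\nu)}\le\sqrt{\mathfrak{F}(\rho'\otimes\nu)}+\varepsilon$. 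Finally, the additivity of $\mathfrak{D}_{1/2}$ for $\rho'$ is exactly multiplicativity $\mathfrak{F}(\rho'\otimes\nu)=\mathfrak{F}(\rho')\mathfrak{F}(\nu)$, so the inequality becomes $(\sqrt{\mathfrak{F}(\rho)}-\sqrt{\mathfrak{F}(\rho')})\sqrt{\mathfrak{F}(\nu)}\le\varepsilon$, and dividing by the positive fidelity gap (positive because $\mathfrak{D}_{1/2}(\rho)<\mathfrak{D}_{1/2}(\rho')$ forces $\mathfrak{F}(\rho)>\mathfrak{F}(\rho')$) gives the claim.

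The main obstacle is establishing the sharp inequality $\sqrt{F(\tau,\omega)}-\sqrt{F(\rho'\otimes\nu,\omega)}\le P(\tau,\rho'\otimes\nu)$, since this is precisely what replaces $\sqrt{\varepsilon}$ by $\varepsilon$. I would prove it through the Bures angle $A(\cdot,\cdot):=\arccos\sqrt{F(\cdot,\cdot)}$, which is a metric, so that $A(\tau,\omega)\ge A(\rho'\otimes\nu,\omega)-A(\tau,\rho'\otimes\nu)$; writing $a=A(\tau,\omega)$, $b=A(\rho'\otimes\nu,\omega)$ and $c=A(\tau,\rho'\otimes\nu)$, the target reduces to $\cos a-\cos b\le\sin c$ given $a\ge b-c$ and $a,b\in[0,\pi/2]$, which follows from monotonicity of cosine together with the elementary bounds $\cos b\,(\cos c-1)\le 0$, $\sin b\le 1$, and $1-\cos c\le\sin c$ on $[0,\pi/2]$. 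A minor point to dispatch is that all the states involved are normalised, so the purified distance and fidelity reduce to their standard forms and no subnormalisation correction enters; note also that the marginal condition $\textnormal{Tr}_{\mathcal{H}}[\tau]=\nu$ is never used, consistent with the embezzling remark following the main theorem.
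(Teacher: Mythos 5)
Your proof is correct, but it takes a genuinely different route from the paper's. The paper proves this tighter bound (its Appendix A) by recycling the scaffold of its general-$\alpha$ argument: it first establishes $f := \mathfrak{D}^\varepsilon_{1/2}(\rho\otimes\nu) - \mathfrak{D}_{1/2}(\rho\otimes\nu) \geq \mathfrak{D}_{1/2}(\rho') - \mathfrak{D}_{1/2}(\rho)$ via data-processing of the \emph{smoothed} divergence, and then upper-bounds $f$ using the tighter triangular inequality $P(\rho,\tau) \leq P(\rho,\sigma)\sqrt{F(\sigma,\tau)} + P(\sigma,\tau)\sqrt{F(\rho,\sigma)}$, a case distinction on whether $\mathfrak{F}(\rho\otimes\nu)\le\gamma$, and the solution of a quadratic in $x = \sqrt{(1-\mathfrak{F}(\rho\otimes\nu))/\mathfrak{F}(\rho\otimes\nu)}$, before invoking super-multiplicativity and additivity. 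You bypass the smoothed-divergence machinery entirely: monotonicity of $\mathfrak{F}$ under the free channel plus super-multiplicativity gives $\sqrt{\mathfrak{F}(\tau)}\ge\sqrt{\mathfrak{F}(\rho)}\sqrt{\mathfrak{F}(\nu)}$, and a single root-fidelity perturbation bound $\sqrt{F(\tau,\omega)}\le\sqrt{F(\rho'\otimes\nu,\omega)}+P(\tau,\rho'\otimes\nu)$ --- which you prove correctly from the Bures-angle triangle inequality; your case analysis ($b\ge c$ versus $b<c$) and the elementary estimates $\cos b(\cos c - 1)\le 0$, $\sin b \le 1$, $1-\cos c\le\sin c$ on $[0,\pi/2]$ all check out --- transfers the estimate from $\tau$ to $\rho'\otimes\nu$, after which additivity (equivalently, multiplicativity of $\mathfrak{F}$) finishes the argument. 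The two proofs ultimately rest on the same geometric fact, since the paper's tighter triangular inequality for the purified distance is itself a restatement of the angle metric, but your packaging is shorter, self-contained, avoids the quadratic manipulation, and does not need the data-processing theorem for smoothed divergences, which is one of the heavier ingredients of the paper. What the paper's route buys in exchange is uniformity: its chain of inequalities is the same one used for all $\alpha\in[1/2,1)$, so the $\alpha=1/2$ bound appears as a drop-in refinement of the general theorem, whereas your argument is intrinsically tied to $\alpha=1/2$ because it exploits that both the purified distance and $\mathfrak{D}_{1/2}$ are functions of the fidelity alone. The two implicit assumptions you rely on --- that tensor products of free states are free, and that the maximum defining $\mathfrak{F}$ is attained --- are exactly the ones the paper also uses, so they introduce no gap.
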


\begin{proof}
The first part of the proof is the same as the main proof given in Section~\ref{Proof of the main Theorem} for $\alpha = 1/2$. 
It thus remains to find an upper bound on $f$. We get
\begin{align}
\Delta \mathfrak{D}_\alpha \leq f \leq \widetilde{D}_{\frac{1}{2}}(\eta \| \sigma^*_\rho \otimes \sigma^*_\nu) - \mathfrak{D}_{\frac{1}{2}}(\rho \otimes \nu) = -\log{\frac{F(\eta \| \sigma^*_\rho \otimes \sigma^*_\nu)}{\mathfrak{F}(\rho \otimes \nu)}} \, . 
\end{align}
We then use the tighter triangular inequality for the purified distance $P(\rho,\tau) \leq P(\rho,\sigma) \sqrt{F(\sigma, \tau)}$ $+ P(\sigma,\tau)\sqrt{F(\rho, \sigma)}$ which holds for $P(\rho,\sigma)^2 + P(\sigma, \tau)^2 \leq 1$~[\citenum{Tomamichel}, Proposition 3.16].  
We also introduce the parameter $\sqrt{\gamma} := P(\eta,\rho \otimes \nu) \leq \varepsilon$.
If $\mathfrak{F}(\rho \otimes \nu) \leq \gamma$ the statement follows trivially. If $\mathfrak{F}(\rho \otimes \nu) \geq \gamma$, the condition $P(\eta,\rho \otimes \nu)^2 + P(\rho \otimes \nu, \sigma^*_\rho \otimes \sigma^*_\nu)^2 \leq 1$ holds and hence we can apply the tighter triangular inequality. We get 
\begin{align}
 \Delta \mathfrak{D}_{\frac{1}{2}} \leq  f & \leq -\log{\frac{1-(\sqrt{\gamma \mathfrak{F}(\rho \otimes \nu)}+\sqrt{(1-\gamma)(1-\mathfrak{F}(\rho \otimes \nu))})^2}{\mathfrak{F}(\rho \otimes \nu)}}\\
&= -\log{\left(1- \gamma + \gamma x^2 -2 \sqrt{\gamma(1-\gamma)}x \right)} \, , 
\end{align}
where we set $x = \sqrt{\frac{1-\mathfrak{F}(\rho \otimes \nu)}{\mathfrak{F}(\rho \otimes \nu)}}$. By solving the equation in $x$ and rewriting the solution in terms of the fidelity we get 
\begin{align}
\mathfrak{F}(\rho \otimes \nu) \leq \frac{\gamma}{1+2^{- \Delta \mathfrak{D}_{\frac{1}{2}}}-2\sqrt{(1-\gamma)}2^{-\frac{ \Delta \mathfrak{D}_{\frac{1}{2}}}{2}}}\,.
\end{align}
Using that $2^{- \Delta \mathfrak{D}_{\frac{1}{2}}} = \frac{\mathfrak{F}(\rho')}{\mathfrak{F}(\rho)}$, the inequality $\mathfrak{F}(\rho \otimes \nu) \geq \mathfrak{F}(\rho)\cdot \mathfrak{F}(\nu)$ and the relation $\sqrt{\gamma} \leq \varepsilon$, we get the following upper bound for the fidelity of the catalyst
\begin{align}
\sqrt{\mathfrak{F}(\nu)} \leq \frac{\varepsilon}{\sqrt{\mathfrak{F}(\rho)}-\sqrt{\mathfrak{F}(\rho')}}\,.
\end{align}
\end{proof}

\section{Remarks on smoothing}
\label{remark on smoothing}
In this appendix we argue why the optimisation over sub-normalised states is necessary for the smoothed sandwiched R\'enyi divergences with $\alpha \in [1/2,1)$ to be invariant under embedding in a larger space. Moreover, we show that for $\alpha \in [0,1)$ it is not possible to define smoothed Petz R\'enyi divergences that satisfy data-processing. 
We consider the states 
\begin{equation}
\rho^{(2)} =
\begin{pmatrix}
1 & 0 \\
0 & 0
\end{pmatrix} \qquad 
\sigma^{(2)} =
\begin{pmatrix}
\frac{1}{2} & 0 \\
0 & \frac{1}{2}
\end{pmatrix}
\end{equation}
as well as their embeddeding in a three dimensional space 
\begin{equation}
\rho^{(3)} = \begin{pmatrix}
1 & 0 & 0\\
0 & 0 & 0\\
0 & 0 & 0
\end{pmatrix} \qquad 
\sigma^{(3)} = \begin{pmatrix}
\frac{1}{2} & 0 & 0\\
0 & \frac{1}{2} & 0\\
0 & 0 & 0
\end{pmatrix}
\end{equation}
and the pure state  $|\phi \rangle = (\sqrt{1-\varepsilon^2}, 0 ,\sqrt{\varepsilon^2})$. In the following we use the shorthands $\circ$ and $\bullet$ to indicate that the smoothing is defined over the normalised and sub-normalised states, respectively. 
We then find for $\alpha \in [1/2,1)$
\begin{align}
\label{1}
\circ: \quad  &\widetilde{D}_{\alpha}^\varepsilon(\rho^{(2)} \| \sigma^{(2)}) =  \log{2}\\
\label{2}
\circ: \quad &\widetilde{D}_{\alpha}^\varepsilon(\rho^{(3)} \| \sigma^{(3)}) =\log{2} - \frac{\alpha}{1-\alpha} \log{(1-\varepsilon^2)} \\
\label{3}
\bullet: \quad &\widetilde{D}_{\alpha}^\varepsilon(\rho^{(2)} \| \sigma^{(2)}) =  \log{2} - \frac{\alpha}{1-\alpha} \log{(1-\varepsilon^2)} \\
\label{4}
\bullet: \quad  &\widetilde{D}_{\alpha}^\varepsilon(\rho^{(3)} \| \sigma^{(3)}) =  \log{2} - \frac{\alpha}{1-\alpha} \log{(1-\varepsilon^2)}
\end{align}

For any state $\tilde{\rho}$ in the $\varepsilon$-ball of $\rho$ we have $F(\rho, \tilde{\rho}) \geq 1 - \varepsilon^2$. To prove \eqref{1} we note that since $\rho^{(2)}$ is pure we can obtain any mixed state by just a rotating $\rho^{(2)}$ in the Bloch sphere and then applying a depolarizing channel. Given that the fully mixed state $\sigma^{(2)}$ is invariant under such quantum channel, by data-processing it follows that, if we were to consider smoothing only on normalized states, $\rho^{(2)}$ would achieve the maximum value $\widetilde{D}_{\alpha}^\varepsilon(\rho^{(2)} \| \sigma^{(2)}) = \widetilde{D}_{\alpha}(\rho^{(2)} \| \sigma^{(2)}) = \log{2}$. 

The pure state $|\phi \rangle = (\sqrt{1-\varepsilon^2}, 0 ,\sqrt{\varepsilon^2})$ satisfies $F(|\phi \rangle,\rho^{(3)}) = 1-\varepsilon^2$ and hence it is in the $\varepsilon$-ball of $\rho^{(3)}$. Moreover,  $\widetilde{D}_{\alpha}^\varepsilon(|\phi\rangle \| \sigma^{(3)}) = \log{2} - (\alpha/(1-\alpha))\log{(1-\varepsilon^2)} > \log{2}$ and therefore smoothing in three dimensions on normalised states achieves a bigger value than in two dimensions. However, if we assume subnormalized states then $(1-\varepsilon^2)|0\rangle \langle 0|$ (which would also be the optimser) achieves the same value, namely $\widetilde{D}_{\alpha}^\varepsilon(\rho^{(2)} \| \sigma^{(2)}) = \widetilde{D}_{\alpha}((1-\varepsilon^2)|0\rangle \langle 0| \| \sigma^{(2)}) = \log{2} - (\alpha/(1-\alpha)) \log{(1-\varepsilon^2)}$. To prove that it is the optimiser we first notice that we need to minimise the functional $\min \text{Tr}(\tilde{\rho}^{(2)})^\alpha$ subject to the condition of the $\varepsilon$-ball $\langle 0|\tilde{\rho}^{(2)} |0\rangle \geq 1-\varepsilon^2$. Using that $\tilde{\rho}^{(2)} \leq \max_i \lambda_i \mathds{1}$ where $\lambda_i$ are its eigenvalues, the condition of the $\varepsilon$-ball gives $\max_i \lambda_i \geq 1-\varepsilon^2$ which implies for the functional  $ \text{Tr}(\tilde{\rho}^{(2)})^\alpha = \sum_i \lambda^\alpha_i \geq  (1-\varepsilon^2)^\alpha$ which is achieved by $(1-\varepsilon^2)|0\rangle \langle 0|$ (equation \eqref{2}).

By data-processing, we have that for subnormalized states (see Section \ref{section data-processing}) the sandwiched R\'enyi divergences are invariant under embedding in a larger space. Hence $\widetilde{D}_{\alpha}^\varepsilon(\rho^{(3)} \| \sigma^{(3)}) = \widetilde{D}_{\alpha}^\varepsilon(\rho^{(2)} \| \sigma^{(2)}) = \log{2} - (\alpha/(1-\alpha)) \log{(1-\varepsilon^2)}$ and in particular $|\phi \rangle$ achieves the maximum value also if we were to consider subnormalized states in the optimisation (equations \eqref{2} and \eqref{3}). 
In conclusion, to obtain a well-defined quantity, which is invariant under embedding in a larger space, we need to consider subnormalized states.

We now consider the Petz R\'enyi divergences. Let $\alpha \in (0,1) \cup (1,\infty)$ and $\rho$ and $\sigma$ positive operators with $\rho \neq 0$. The \emph{Petz R\'enyi divergence} of $\sigma$ with $\rho$ is~\cite{petz1986quasi,Tomamichel}
\begin{equation}
\bar{D}_{\alpha}(\rho \| \sigma) :=
\begin{cases}
\frac{1}{\alpha-1} \log{\text{Tr}(\rho^\alpha \sigma^{1-\alpha})} & \text{if}\;(\alpha<1 \wedge \rho \not \perp \sigma) \vee \rho \ll \sigma \\
 +\infty & \text{else}
\end{cases}
\end{equation}
Moreover $\bar{D}_0$ and $\bar{D}_1$ are defined as the respective limits of $\bar{D}_\alpha$ for $\alpha \rightarrow \{0,1\}$. The Petz R\'enyi divergence satisfies the data-processing inequality for $\alpha \in [0,2]$. 

Hence, analogously to the sandwiched case, we could define the smoothed Petz R\'enyi divergences
\begin{equation}
    \bar{D}^\varepsilon_{\alpha}(\rho \| \sigma)=
    \begin{cases}
      \max \limits_{\tilde{\rho} \in B^{\varepsilon}(\rho)} \bar{D}_{\alpha}(\tilde{\rho}\|\sigma), & \text{if}\ \alpha \in [0,1) \\
       \min \limits_{\tilde{\rho} \in B^{\varepsilon}(\rho)} \bar{D}_{\alpha}(\tilde{\rho}\|\sigma), & \text{if}\ \alpha \in (1,2]
    \end{cases}
  \end{equation}
For $\alpha \in (1,2]$ the data-processing inequality follows trivially. 
We then find for $\alpha \in [0,1)$
\begin{align}
\label{1.1}
\circ: \quad  &\bar{D}_{\alpha}^\varepsilon(\rho^{(2)} \| \sigma^{(2)}) =  \log{2}\\
\label{2.1}
\circ: \quad &\bar{D}_{\alpha}^\varepsilon(\rho^{(3)} \| \sigma^{(3)}) \geq \log{2} - \frac{1}{1-\alpha} \log{(1-\varepsilon^2)} \\
\label{3.1}
\bullet: \quad &\bar{D}_{\alpha}^\varepsilon(\rho^{(2)} \| \sigma^{(2)}) =  \log{2} - \frac{\alpha}{1-\alpha} \log{(1-\varepsilon^2)}
\end{align}
We have $\bar{D}_{\alpha}^\varepsilon(\rho^{(2)} \| \sigma^{(2)}) = \log{2}$ if we were to consider the smoothing only on normalized states and  $\bar{D}_{\alpha}(|\phi \rangle \| \sigma^{(3)}) = \log{2} - (1/(1-\alpha)) \log{(1-\varepsilon^2)}$ (equations \eqref{1.1} and \eqref{2.1}). However, if we were to allow subnormalized states, then $(1-\varepsilon^2)|0\rangle \langle 0|$ for what we have already discussed would achieve the maximum in  $\bar{D}_{\alpha}^\varepsilon(\rho^{(2)} \| \sigma^{(2)}) = \bar{D}_{\alpha}((1-\varepsilon^2)|0\rangle \langle 0| \| \sigma^{(2)}) = \log{2} - (\alpha/(1-\alpha)) \log{(1-\varepsilon^2)} < \log{2} - (1/(1-\alpha)) \log{(1-\varepsilon^2)}$ (equation \eqref{3.1}) and therefore also by allowing subnormalized states the smoothed Petz R\'enyi divergences for $\alpha \in [0,1)$ would not be invariant under embedding in a larger space and in particular they wouldn not satisfy the data-processing inequality under isometries.

\section{First order expansion for error exponent in asymptotic pairwise state transformation for small relative entropy gaps}
\label{error exponent}
In this appendix we qualitatively derive a first order lower bound for the error exponent $\gamma$ in approximate asymptotic pairwise state transformation \cite{Buscemi}. We give an expansion in terms of the relative entropy gap $\Delta D := D(\rho_1\|\sigma_1)-D(\rho_2\|\sigma_2)$; the higher order corrections are arbitrarily small for a sufficiently small gap. In \cite{Buscemi} the authors proved that if $ D(\rho_1 \|\sigma_1)> D(\rho_2\|\sigma_2) $, then there exists a sequence of channels that transforms $(\rho_1^{\otimes n}, \sigma_1^{\otimes n})$ to $(\rho_2^{\otimes n}, \sigma_2^{\otimes n})$ where the transformation  $\rho_1^{\otimes n} \rightarrow \rho_2^{\otimes n}$ has an exponentially vanishing error $\varepsilon_n \leq 2^{-\gamma n} $  with the number of copies $n$ while the second transformation $\sigma_1^{\otimes n} \rightarrow \sigma_2^{\otimes n}$ is exact. We call $\gamma$ error exponent. In the following we assume that all states have full support and that $ V(\rho_1\|\sigma_1), V(\rho_2\|\sigma_2)>0$ where the variance is defined as $V(\rho\|\sigma) := \text{Tr}[\rho(\log{\rho}-\log{\sigma})^2)]-D(\rho\|\sigma)^2$. Moreover, we assume that for sufficiently small gaps the functions are well behaved.  We then find
\begin{proposition}
Let  $\rho_1,\rho_2, \sigma_1, \sigma_2$ be four quantum states such that $ D(\rho_1 \|\sigma_1)>D(\rho_2\|\sigma_2)$. Then the error exponent for the asymptotic transformation in the iid case satisfies:
\begin{equation}
\label{lower bound gamma}
\gamma \geq \frac{\Delta D^2 \log{e}}{8(V_1+V_2)}+O(\Delta D^3) 
\end{equation}
where $\Delta D := D(\rho_1 \|\sigma_1) - D(\rho_2\|\sigma_2)$ and $V_i := V(\rho_i\|\sigma_i)$. 
\end{proposition}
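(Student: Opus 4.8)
The plan is to start from the lower bound on the error exponent established in~\cite{Buscemi}, which expresses the achievable $\gamma$ as an optimisation over a threshold rate $R$ (equivalently, over a R\'enyi parameter near $\alpha=1$) of the minimum of two competing exponents. In the i.i.d. regime these exponents are the large-deviation (Cram\'er) rate functions associated with the log-likelihood-ratio observables $\log\rho_i - \log\sigma_i$ for $i=1,2$: one exponent, call it $I_1(R)$, controls the probability that the first dichotomy's log-likelihood ratio falls below $nR$ (which must be rare for $\rho_1^{\otimes n}\to\rho_2^{\otimes n}$ to succeed when $R<D(\rho_1\|\sigma_1)$), and the other, $I_2(R)$, controls the probability that the target dichotomy's log-likelihood ratio exceeds $nR$ (rare when $R>D(\rho_2\|\sigma_2)$). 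The exponent is then the largest value of $\min\{I_1(R),I_2(R)\}$ over admissible $R\in(D(\rho_2\|\sigma_2),D(\rho_1\|\sigma_1))$.

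The key analytic input is the local behaviour of these rate functions. Since the mean of $\log\rho_i-\log\sigma_i$ is $D(\rho_i\|\sigma_i)$ and its variance is exactly $V_i=V(\rho_i\|\sigma_i)$, each rate function is locally quadratic around its minimiser, $I_i(R)=\frac{(R-D(\rho_i\|\sigma_i))^2}{2V_i}+O\!\big((R-D(\rho_i\|\sigma_i))^3\big)$, the curvature being fixed by the inverse variance (this is the statement that the Legendre transform of the relevant cumulant generating function, whose second derivative at $\alpha=1$ equals $V_i$, has reciprocal curvature). This is precisely the step where the relative-entropy variance enters, and it is the step requiring the full-support and $V_i>0$ assumptions.

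I would then avoid the full optimisation over $R$ and simply take the midpoint threshold $R^\star=\tfrac12\big(D(\rho_1\|\sigma_1)+D(\rho_2\|\sigma_2)\big)$, for which $R^\star-D(\rho_i\|\sigma_i)=\mp\,\Delta D/2$. Substituting into the quadratic approximations gives $I_1(R^\star)=\Delta D^2/(8V_1)$ and $I_2(R^\star)=\Delta D^2/(8V_2)$, with corrections of order $\Delta D^3$ because $R^\star-D(\rho_i\|\sigma_i)=O(\Delta D)$. Taking the minimum and using $\max\{V_1,V_2\}\le V_1+V_2$ yields $\min\{I_1,I_2\}\ge \Delta D^2/\big(8(V_1+V_2)\big)$; since the large-deviation exponent appears naturally in the form $e^{-n(\cdot)}$ while $\gamma$ is defined through $\varepsilon_n\le 2^{-\gamma n}$, re-expressing in base two contributes the factor $\log e$ and produces the claimed bound $\gamma\ge \Delta D^2\log e/\big(8(V_1+V_2)\big)+O(\Delta D^3)$. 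This suboptimal but symmetric choice of threshold is exactly what fixes the constant $1/8$.

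The main obstacle will be making the second-order expansion rigorous in the genuinely quantum, non-commuting case: the cumulant generating function governing Buscemi's exponent is a R\'enyi-type divergence rather than a literal moment generating function, so one must invoke its smoothness together with the second-derivative identity at $\alpha=1$ (giving $V_i$) and control the cubic remainder uniformly in a shrinking $\Delta D$-neighbourhood of the mean. This is the content of the regularity assumption that ``the functions are well behaved,'' under which the remainder is legitimately $O(\Delta D^3)$; a fully quantitative version would require explicit bounds on the third derivative of the relevant divergence, which is why I would present the result as a qualitative first-order expansion rather than a sharp estimate.
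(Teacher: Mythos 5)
Your proposal is essentially the Legendre--dual formulation of the paper's own proof. The paper works in the R\'enyi-parameter domain: it inserts the one-shot bounds $D_h^{\varepsilon}(\rho_1^{\otimes n}\|\sigma_1^{\otimes n}) \geq n\widetilde{D}_{1-\delta_1}(\rho_1\|\sigma_1) - \frac{1-\delta_1}{\delta_1}\log\frac{1}{\varepsilon}$ and $D_{\max}^{\varepsilon}(\rho_2^{\otimes n}\|\sigma_2^{\otimes n}) \leq n\widetilde{D}_{1+\delta_2}(\rho_2\|\sigma_2) + \frac{2}{\delta_2}\log\frac{1}{\varepsilon} + \log\frac{1}{1-\varepsilon^2}$ into the sufficient condition $D_h^{\varepsilon_n} \geq D_{\max}^{\varepsilon_n}$ from Buscemi et al., expands $\widetilde{D}_{1\pm\delta}$ to first order in $\delta$ (slope proportional to $V_i$), and optimizes over a gap parameter $\kappa$; you work in the rate domain with two Cram\'er-type exponents and a threshold $R$. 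The ingredients (Buscemi's criterion, the variance expansion at $\alpha=1$, the base-two conversion producing $\log e$) are identical, so the two strategies are equivalent in substance.

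There is, however, one step that fails as written: the assertion that \emph{both} rate functions are locally $\frac{(R-D_i)^2}{2V_i}$. This is justified on the hypothesis-testing side (quantum Hoeffding bound; equivalently the $\log\frac{1}{\varepsilon}$ term above). But on the target side, the smoothing parameter of $D_{\max}$ enters the available one-shot bound as $\log\frac{1}{\varepsilon^2}$ --- the quantity that must be exponentially small is related to the tail mass of the log-likelihood ratio through a square root (fidelity-type smoothing) --- so the curvature the cited tools actually deliver is $\frac{1}{4V_2}$, not $\frac{1}{2V_2}$. This factor of two is precisely what produces the paper's asymmetric constant $4(V_1+2V_2)$ (visible as the $\frac{2}{\delta_2}$ term in their derivation). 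With the corrected curvature, your midpoint threshold only gives $\gamma \geq \log e\,\min\left\{\frac{\Delta D^2}{8V_1},\frac{\Delta D^2}{16V_2}\right\} + O(\Delta D^3)$, and since $16V_2 > 8(V_1+V_2)$ whenever $V_2 > V_1$, the claimed bound does not follow in that regime: the midpoint is the wrong choice once the two parabolas have unequal curvature. The repair stays entirely within your framework: choose $R$ so that the two exponents are \emph{equal} rather than equidistant from $D_1$ and $D_2$. This yields $\gamma \geq \frac{\Delta D^2\log e}{(\sqrt{2V_1}+2\sqrt{V_2})^2}+O(\Delta D^3) \geq \frac{\Delta D^2\log e}{4(V_1+2V_2)}+O(\Delta D^3) \geq \frac{\Delta D^2\log e}{8(V_1+V_2)}+O(\Delta D^3)$, recovering the proposition (indeed the paper's stronger intermediate constant) from the dual side.
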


\begin{proof}
The following equalities hold~[\citenum{audenaert2012quantum},Proposition 3.2],[\citenum{anshu2019minimax}, Theorem 3]:
\begin{align}
&D_h^\varepsilon(\rho^{\otimes n} \| \sigma^{\otimes n} ) \geq n\widetilde{D}_{\alpha}(\rho \| \sigma) - \frac{\alpha}{1-\alpha} \log{\frac{1}{\varepsilon}} \qquad \qquad \qquad \qquad  \quad  \, \, \, \hskip5pt \alpha \in [0,1) \\ 
&D_{\max}^{\varepsilon, \Delta} (\rho \| \sigma) \leq \widetilde{D}_{\alpha}(\rho \| \sigma) + \frac{1}{\alpha-1} \log{\frac{1}{\varepsilon^2}} + \log{\frac{1}{1-\varepsilon^2}}  \quad \quad \,\,\,\,\, \qquad \quad  \alpha \in (1, \infty] 
\end{align}
For any $n$, a map that transforms  $(\rho_1^{\otimes n}, \sigma_1^{\otimes n})$ to $(\rho_2^{\otimes n}, \sigma_2^{\otimes n})$  where only the first transformation is approximated exists if $ D_{h}^{\varepsilon_n}(\rho_1^{\otimes n}\| \sigma_1^{\otimes n})-D_{\max}^{\varepsilon_n}(\rho_2^{\otimes n}\| \sigma_2^{\otimes n}) \geq 0$ \cite{Buscemi}. 
Using the above expansions we get for some small $\delta_1, \delta_2>0$ :
 \begin{align}
 D_{h}^{\varepsilon_n}(\rho^{\otimes n}_1\| \sigma^{\otimes n}_1)-D_{\max}^{\varepsilon_n}(\rho^{\otimes n}_2\| \sigma^{\otimes n}_2))  \geq & n\widetilde{D}_{1-\delta_1}(\rho_1\| \sigma_1)+\frac{1-\delta_1}{\delta_1}\log{\varepsilon_n}
 &\quad   \notag \\& - n\widetilde{D}_{1+\delta_2}(\rho_2\| \sigma_2)+\frac{2}{\delta_2}\log{\varepsilon_n}+\log{(1-\varepsilon^2_n)} 
 \end{align}
We define $\widetilde{D}_{1-\delta_1}(\rho_1\| \sigma_1) - \widetilde{D}_{1+ \delta_2}(\rho_2\| \sigma_2) := \kappa$.
We then get 
 \begin{align}
 D_{h}^{\varepsilon_n}(\rho^{\otimes n}_1\| \sigma^{\otimes n}_1)-D_{\max}^{\varepsilon_n}(\rho^{\otimes n}_2\| \sigma^{\otimes n}_2))
 & \geq n \kappa +\left( \frac{1-\delta_1}{\delta_1}+\frac{2}{\delta_2} \right)\log{\varepsilon_n}
 \end{align}
 where we used $1-\varepsilon^2_n\geq \varepsilon_n$ since we assumed $\varepsilon_n \leq 1/2$.
Therefore we can set
 \begin{equation}
 \varepsilon_n=2^{-n\kappa \left( \frac{1}{\delta_1}+\frac{2}{\delta_2} \right)^{-1}} 
 \end{equation}
We then use the expansions $\delta_i=\frac{\Delta D-\kappa}{2}\frac{2\log{e}}{V(\rho_i\|\sigma_i)} + O((\Delta D-\kappa)^2)$ and we maximise the expression over $\kappa \in (0,\Delta D)$ (the maximum is achieved at $\kappa = \Delta D /2$). We then get 
 \begin{equation}
 \varepsilon_n = 2^{-n\left[ \frac{\Delta D^2 \log{e}}{4(V_1+2V_2)}+O(\Delta D^3) \right]}
 \end{equation}
where we set $V_i := V(\rho_i\|\sigma_i)$ which concludes the proof. 
\end{proof}

We remark that the  the first order term in \eqref{lower bound gamma} has the right scaling with the number of copies; indeed if we consider the transformation $\rho_i \rightarrow \rho_i^{\otimes a} $ and $\sigma_i \rightarrow \sigma_i^{\otimes a} $ for some integer $a$ then we must have $\gamma \rightarrow a \gamma$.

\end{document}